\newcommand{\uproman}[1]{\uppercase\expandafter{\romannumeral#1}}
\newtheorem{definition}{Definition}[section]
\newtheorem{proposition}{Satz}[section]
\newtheorem{theorem}{Theorem}[section]
\newtheorem{lemma}{Lemma}[section]
\newtheorem{remark}{Bemerkung}[section]
\newtheorem{example}{Beispiel}[section]
\begin{document}

\pagestyle{empty}

\begin{titlepage}
\title{Quantenlogische Systeme und Tensorprodukträume}
\date{2022}
\author{Tobias Starke}
\maketitle
\end{titlepage}

\thispagestyle{empty}

\newpage 
\thispagestyle{empty}
\quad 
\newpage

\begin{center}
     Danksagung
\end{center}
An dieser Stelle möchte ich mich kurz bei all denjenigen bedanken, die mich während der Anfertigung dieser Arbeit unterstützten. 

Als Erstes möchte ich mich bei Herr Professor Michael Breuß, der meine Bachelorarbeit betreute, für die vielen Anregungen und die konstruktive Kritik bedanken. Auch möchte ich mich bei Herr Marvin Kahra für die vielen hilfreichen Gespräche bedanken, mit deren Hilfe ich viele Probleme in Bezug auf diese Arbeit beseitigen konnte. Ebenfalls möchte ich auch meinem Bruder Tim Starke für seine Hilfsbereitschaft danken, mir bei bestimmten konzeptionellen Fragen weitergeholfen zu haben. 

Abschließend möchte ich mich auch bei meiner Mutter und meinem Großvater für die viele Unterstützung während des Studiums bedanken.

\newpage

\newpage 
\thispagestyle{empty}
\quad 
\newpage

\pagestyle{plain}
\thispagestyle{empty}

\tableofcontents

\newpage

\newpage 
\thispagestyle{empty}
\quad 
\newpage

\setcounter{page}{1}



\section{Einleitung}

Die etwa um das Jahr 1900 entstandene Theorie der Quantenmechanik, die zur Lösung von Problemen wie der Ultraviolettkatastrophe \cite{ehrenfest1911zuge} ersonnen wurde, ergänzte die bis dahin als gültig angenommene klassische Mechanik, was das physikalische Weltbild nachhaltig veränderte.

Im Gegensatz zur klassischen Mechanik wurden nun (massebehaftete, sich nicht relativistisch bewegende) Teilchen wie Elektronen im quantenmechanischen Formalismus über sogenannte Wellenfunktionen beschrieben, die der von Erwin Schrödinger Ende 1925 entdeckten Schrödinger Gleichung genügen \cite{schrodinger1926quantisierung}.

Mit der Entwicklung der Quantenmechanik im frühen 20. Jahrhundert stellte sich irgendwann die Frage, wie ein logisches Aussagensystem an ein physikalisches System aussähe, welches den Prinzipien der Quantenmechanik gerecht wird und inwiefern sich ein derartiges System von einem Logikmodell unterscheide, welches lediglich eine logische Beschreibung klassischer physikalischer Systeme bereitstellt. Damit war die sogenannte Quantenlogik geboren.

Anfang der 1930er wurden erste Überlegungen zur Konstruktion einer Quantenlogik unternommen. So bemerkte John von Neumann in seinem 1932 veröffentlichten Werk \textit{Mathematische Grundlagen der Quantenmechanik} \cite{von2013mathematische}, dass man Projektionsoperatoren auf einem Hilbertraum als Aussagen oder Propositionen an ein quantenmechanisches System verstehen kann. Mit Garrett Birkhoff arbeitete John von Neumann diese Idee weiter aus. Zusammen veröffentlichten sie im Jahre 1936 eine Arbeit \cite{birkhoff1936logic}, die eine erste Quantenlogik enthielt. Mit dieser war es möglich, die Propositionen wie in einem gewöhnlichen Logikmodell mittels spezieller logischer Operationen zu manipulieren. Fast 30 Jahre später, im Jahre 1963, präsentierte dann der amerikanische Mathematiker George Mackey eine Axiomatisierung dieser Quantenlogik mittels orthokomplementierter Verbände \cite{mackey2013mathematical}. Ab 1963 wurde dann die Forschung im Bereich der Quantenlogik vornehmlich von Constantin Piron und Josef-Maria Jauch in der Schweiz fortgesetzt.

Wir wollen in dieser Arbeit auf anschauliche Weise das Axiomensystem von George Mackey zur Konstruktion einer Quantenlogik motivieren und diese mit einem Logikmodell der klassischen Mechanik, wie in \cite{aerts1978physical} präsentiert, vergleichen. Ziel dieser Arbeit wird es sein, die Resultate der von Diederik Aerts und Ingrid Daubechies veröffentlichten Arbeit \textit{Physical justification for using the tensor product to describe two quantum systems as one joint system} \cite{aerts1978physical} ausführlich zu präsentieren, d.h. es soll gezeigt werden, wie spezielle zusammengesetzte Systeme in der klassischen Mechanik und der Quantenmechanik logisch beschrieben werden. 

Dazu werden wir wie in \cite{aerts1978physical} eine Klasse bestimmter zusammengesetzter physikalischer Systeme betrachten und diese axiomatisch definieren, um so mittels einiger Resultate der Verbands- und c-Morphismus-Theorie (\cite{piron1976foundations} und \cite{aerts1979structure}) zu zeigen, dass im quantenmechanischen Fall die Beschreibung der betrachteten zusammengesetzten Systeme notwendigerweise über Tensorprodukt-räume erfolgen muss.
\newpage



\newpage 
\quad 
\newpage

\section{Klassische Logik}

Um sich dem Begriff der sogenannten klassischen logischen Systeme \cite{shapiro2018classical} zu nähern betrachten wir sogenannte Elementaraussagen. Das sind logische Aussage, denen man einen Wahrheitswert zuordnen kann und die sich nicht in 'kleinere' Aussagen zerlegen lassen. Für klassische logische Systeme kann der Wahrheitswert dabei die Werte 0 oder 1 annehmen, wobei 0 für \textit{falsch} und 1 für \textit{wahr} steht. Mit $E$ bezeichnen wir im folgenden eine Menge von Elementaraussagen.

Zwischen diesen Elementaraussagen definieren wir logische Operationen, um Elementaraussagen zu neuen logischen Aussagen zusammenzusetzen. Die logischen Operationen die hier relevant sind, sind die Adjunktion $\wedge$ (Und-Verknüpfung), die Konjunktion $\vee$ (Oder-Verknüpfung), die Negation $\neg$ (Verneinung) und die Subjunktion $\to$ (Implikation). Auch diese zusammengesetzten Aussagen sollen nun über diese logische Operationen miteianander verknüpfbar sein. Mit $L(E)$ bezeichnen wir die Menge aller Aussagen, die durch fortgesetztes Anwenden der logischen Operationen entstehen, wobei die Elemetaraussagen Aussagen aus $E$ sind. Die Elemente aus $L(E)$ nennen wir logische Aussagen. 

Unter einem klassischen logischen Systeme versteht man nun eine Menge von logischen Aussagen, auf denen obige logische Verknüpfungen definiert sind, sodass Folgendes erfüllt ist:
\begin{itemize}
    \item[\textit{i)}] Es gilt das Bivalenzprinzip, d.h. jede Aussage ist entweder \textit{wahr} oder \textit{falsch}. In diesem Sinne ist das Logiksystem zweiwertig. 
    \item[\textit{ii)}] Es gilt das Prinzip der Extensionalität, d.h. der Wahrheitswert einer zusammengesetzten Aussagen ist eindeutig bestimmbar aus den einzelnen Wahrheitswerten derjenigen Elementaraussagen, aus dem die Aussage zusammengesetzt ist.
\end{itemize}
Man beachte, dass es sich bei den beiden Bedingungen, die wir an ein klassisches Logiksystem stellen, um sehr starke Restriktionen handelt, was die Menge an Aussagen angeht, die wir mittels solcher Systeme untersuchen können. 

So können wir beispielsweise Aussagen wie \textit{„Es regnet morgen“} nicht im Rahmen klassischer Logiksysteme untersuchen, da wir der Aussage zum jetzigen Zeitpunkt keinen Wahrheitswert im obigem Sinne zuordnen können, da wir uns heute noch nicht sicher sein können, ob es morgen regnet.

Ebenso sind Aussagen wie \textit{„Es ist möglich, dass Gauß kein Mathematiker ist“} und \textit{„Es ist möglich, dass eckige Kreise existieren“} nicht durch klassische Logiksysteme erfassbar, denn in beiden Fällen enthalten die Aussagen eine falsche Teilaussage (\textit{„Gauß ist kein Mathematiker“} und \textit{„Eckige Kreise existieren“}) und sind sonst identisch, unterschieden sich jedoch in ihrem Wahrheitswert. Denn während wir uns noch vorstellen können, dass Gauß nie zur Mathematik gefunden hat, so sind eckige Kreise nicht vorstellbar, was zeigt, dass solche Aussagen das Prinzip der Extensionalität verletzen. 

Um nun ein Gefühl für die zu betrachtenden Aussagen und für die Bedeutung der einzelnen logischen Operationen zu bekommen, wollen wir einige Beispiele betrachten:
\begin{example}
Wir betrachten ein klassisches Teilchen der Masse $m$ und wählen einen geeigneten Satz an Koordinaten, um dieses zu beschreiben. Die $i$-te Ortskoordinate bezeichnen wir dabei mit $x^{i}$, wobei $i = 1,2,3$ ist. 

Wir betrachten nun die logischen Aussagen $a,b$ und $c$ an das System. Angenommen, die Aussage $a$ sei $$\textit{„Das Teilchen habe zum Zeitpunkt $t$ den Impuls $p$“},$$ $b$ sei $$\textit{„Das Teilchen befinde sich zum Zeitpunkt $t$ am Ort $x = (x^1,x^2,x^3)$“}$$ und die Aussage $c$ sei $$\textit{„Das Teilchen habe zum Zeitpunkt $t$ die kinetische Energie $T$“}.$$ Wir bemerken zuerst, das man jeder dieser Aussagen auf eindeutige Weise den Wahrheitswert \textit{wahr} oder \textit{falsch} zuordnen kann, denn wenn wir beispielsweise den Impuls des Teilchens zum Zeitpunkt $t$ messen, so entspricht er entweder dem Impuls $p$ aus der Aussage $a$ oder eben nicht. Analog argumentiert man für die Aussagen $b$ und $c$.

Die Aussagen $a,b$ und $c$ sind elementare Aussagen an den Zustand des Teilchens, d.h. in diesem Beispiel gilt $a,b,c \in E$, wobei $E$ hier die Menge aller elementaren Aussagen an das zu beschreibende Teilchen sind. Wir wollen nun mittels obiger eingeführter logischer Operationen zusammengesetzte Aussagen konstruieren und deren Bedeutung erklären: 

Die Aussage $a \wedge b$ bedeute $$\textit{„Zum Zeitpunkt $t$ sei das Teilchen am Ort x \textbf{und} habe den Impuls p“}.$$ Falls die Aussage $a$ oder $b$ oder beide \textit{falsch} sein sollten, so folgt daraus natürlich, dass auch die Aussage $a \wedge b$ \textit{falsch} sein muss, da mindestens ein Teil der Aussage nicht zutreffend wäre. Nur wenn die Aussagen $a$ und $b$ beide \textit{wahr} sind, folgt, dass auch die Aussage $a \wedge b$ \textit{wahr} ist.

Die Aussage $a \vee b$ bedeute $$\textit{„Zum Zeitpunkt $t$ sei das Teilchen am Ort x \textbf{oder} habe den Impuls p“}.$$ Auch hier ist schnell ersichtlich, dass die Aussage $a \vee b$ genau dann \textit{wahr} ist, sobald die Aussage $a$ oder $b$ oder beide \textit{wahr} sind, da in diesem Fall lediglich gefordert wird, dass mindestens eine Teilaussage von $a \vee b$ \textit{wahr} sein soll. Nur wenn die Aussage $a$ und $b$ beide \textit{falsch} sind, folgt, dass auch $a \vee b$ \textit{falsch} ist.

Die Aussage $\neg c$ bedeute $$\textit{„Das Teilchen habe zum Zeitpunkt $t$ \textbf{nicht} die kinetische Energie $T$“}.$$ Da es sich bei der Negation einer Aussage lediglich um eine Verneinung dieser handelt, wird der Wahrheitswert der Aussage $\neg c$ aus der Umkehrung des Wahrheitswertes von $c$ bestimmt und umgedreht. Das heißt, wenn z.B. die Aussage $c$ \textit{wahr} ist, so ist $\neg c$ \textit{falsch} und umgedreht, und wenn $c$ \textit{falsch} ist, so ist $\neg c$ \textit{wahr} und umgedreht.

Die Aussage $a \to c$ bedeute 
\begin{align*}
    &\textit{„Zum Zeitpunkt $t$ gilt: Hat das Teilchen den Impuls p,} \\
    &\textit{so \textbf{folgt}, dass das Teilchen die kinetische Energie T besitzt“}.
\end{align*}
Dabei ist die Aussage $a \to b$ nur dann \textit{falsch}, wenn $a$ \textit{wahr} und $b$ \textit{falsch} ist, da man auch unter falschen Vorraussetzungen zu richtigen Schlussfolgerungen kommen kann.

Man beachte, dass der Wahrheitswert jeder oben beschriebenen zusammengesetzten Aussage sich aufgrund der Definition der hier verwendeten logischen Operationen aus den Wahrheitswerten der einzelnen Teilaussagen bestimmen lässt, d.h. die obigen zusammengesetzten Aussagen erfüllen das Prinzip der Extensionalität.
\end{example}

Mittels Wahrheitstabellen lässt sich außerdem leicht nachprüfen, dass im Rahmen klassischer Logiksysteme die Subjunktion keine elementare Operation auf der Menge der logischen Aussagen sein kann, denn es gilt für $a,b \in L(E)$
\begin{align}
    a \to b  \iff  \neg a \vee b.
\end{align}

\newpage



\newpage 
\quad 
\newpage

\section{Logikmodell der klassischen Mechanik}

Die folgenden Ausführungen beziehen sich auf \cite{herbert2002classical} und \cite{arnol2013mathematical}.


\subsection{Klassische Mechanik}

Es gibt viele verschiedene Formalismen, um eine Theorie der klassischen Mechanik zu formulieren. Wir wollen hier den Lagrange- und Hamiltonformalismus kurz präsentieren, um darauf aufbauend ein Logikmodell der klassischen Mechanik gemäß \cite{aerts1978physical} zu konstruieren.

Wir beschränken uns zunächst auf die Beschreibung von Systemen, die ein einzelnes massebehaftetes Teilchen enthalten, kurz Einteilchensysteme. Wir werden dabei ab jetzt das zu beschreibende klassische Einteilchensystem mit $\sigma$ und den Ort des darin enthaltenen Teilchens zum Zeitpunkt $t$ mit $x(t) = (x^1(t),x^2(t),x^3(t))$ bezeichnen. Unterliegt das System $m \leq 3$ Zwangsbedingungen, d.h. Nebenbedingungen, die die Bewegungsfreiheit und damit die mögliche geometrische Bahn des Teilchens einschränken, so lassen sich die einzelnen Ortskoordinaten $x^{i}$, $i \in \{1,2,3\}$, durch $3-m$ unabhängige Koordinaten $q^{j}$, $j \in \{1,...,3-m\}$, ausdrücken. Diese unabhängige Koordinaten bezeichnen wir als generalisierte Koordinaten.

Das sogenannte Hamiltonsche Integralprinzip sagt nun aus, dass sich die Dynamik des Systems $\sigma$ für $t \in [t_i,t_e]$ aus dem Extremum des sogenannten Wirkungsfunktionals
\begin{align}
    S[q^1(t),...,q^{3-m}(t)] = \int_{t_i}^{t_e} L(q^1(t),...,q^{3-m}(t),\dot{q}^1(t),...,\dot{q}^{3-m}(t),t) \,dt
\end{align}
ergibt. Dabei bezeichnet $\dot{q}^{i}$ die Zeitableitung von $q^{i}$ für alle $i \in \{1,...,3-m\}$ und $L$ ist die sogenannte Lagrange-Funktion, die im Zentrum des Lagrange-Formalismus steht. 

Die Forderung, dass die tatsächlich realisierte Bahn eines Teilchens die Wirkung extremal werden lässt, ist äquivalent zur Forderung
\begin{align}
    \frac{d}{dt}\biggl(\frac{\partial L}{\partial \dot{q}^{i}}\biggr) - \biggl( \frac{\partial L}{\partial q^{i}} \biggr) = 0, \qquad \forall i \in \{1,...,3-m\}
\end{align}
Diese Gleichungen werden Lagrange-Gleichungen zweiter Art genannt. Für die klassischen physikalischen Systeme, die wir hier betrachten, ist die Lagrange-Funktion dabei gegeben durch
\begin{align}
    L = T - V,
\end{align}
wobei $T$ die kinetische Energie und $V$ die potentielle Energie des Teilchens bezeichnet. 

Mittels der Lagrange-Funktion lassen sich nun kanonische bzw. generalisierte Impulse definieren:
\begin{align}
    p_i = \frac{\partial L}{\partial \dot{q}^{i}}, \qquad \forall i \in \{1,...,3-m\}
\end{align}
Um nun vom Lagrange-Formalismus zum Hamilton-Formalismus überzugehen, machen wir den Übergang 
\begin{align}
    (q^1,...,q^{3-m},\dot{q}^1,...,\dot{q}^{3-m},t) \longrightarrow (q^1,...,q^{3-m},p_1,...,p_{3-m},t)
\end{align}
Mittels einer Legendr\'{e}-Transformation erhalten wir dabei aus der Lagrange-Funktion $L = L(q^1,...,q^{3-m},\dot{q}^1,...,\dot{q}^{3-m},t)$ die sogenannte Hamilton-Funktion $H = H(q^1,...,q^{3-m},p_1,...,p_{3-m},t)$:
\begin{align}
    H = \sum_{i=1}^{3-m}p_i \dot{q}^{i} - L
\end{align}
Aus den Lagrange-Gleichungen zweiter Art folgen die Hamiltonischen Bewegungsgleichungen:
\begin{align}
    \frac{d}{dt}q^{i} = \frac{\partial H}{\partial p_{i}}, \qquad \frac{d}{dt}p_{i} = - \frac{\partial H}{\partial q^{i}}, \qquad \forall i \in \{1,2,3\}
\end{align}

Ein Teilchen wird nun also im Hamiltonformalismus der klassischen Mechanik zu jeder Zeit $t \in \mathbb{R}$ vollständig durch die Angabe seines generalisierten Ortes $(q^1(t), q^2(t), q^3(t))$ und seines generalisierten Impulses $ p = (p_1(t), p_2(t), p_3(t))$ beschrieben. Zum Zeitpunkt $t$ kann der Zustand des Teilchens daher als Punkt im sogenannten Phasenraum $\Omega$ verstanden werden, der salopp als Menge aller generalisierter Orte und Impulse aufgefasst werden kann, d.h. 
\begin{align}
    \Omega := \{ (q^1, q^2, q^3, p_1, p_2, p_3) : q^{i} \in \mathbb{R} \: \forall i = 1,2,3;\: p_j \in \mathbb{R} \: \forall j = 1,2,3 \}.
\end{align}
Der Phasenraum stellt also den Zustandsraum im Hamiltonformalismus dar. Ein $\textbf{x} \in \Omega$ heißt Zustandsvektor vom System $\sigma$. Die Dynamik des Teilchens im System $\sigma$ für die Zeiten $t \in [t_i,t_e]$ wird dann durch die Phasenraum-Kurve $\gamma$, mit $\gamma(t) = (q^1(t),..., q^{3-m}(t), p_1(t),...,p_{3-m}(t))$ erfasst.

Nach dieser eher abstrakten Darstellung der Theorie wollen wir nun ein kleines Beispiel betrachten:

\begin{example}
Wir wollen jetzt im Folgenden den eindimensionalen harmonischen Oszillator berechnen:
\\ \\ Wir bemerken zuerst, dass das System keiner Zwangsbedingung unterliegt und wir somit genau einen Freiheitsgrad im eindimensionalen vorliegen haben. Wir bezeichnen mit $x$ die eine Raumkoordinate und identifizieren diese mit der generalisierte Koordinate $q$. Das Teilchen, welches wir beschreiben wollen, habe die Masse $m$. Die kinetische Energie $T$ des Teilchens ist gegeben durch
\begin{align}
    T = \frac{1}{2}m \dot{x}^2
\end{align}
und die potentielle Energie ist gegeben durch
\begin{align}
    V = \frac{1}{2}Dx^2,
\end{align}
wobei D die sogenannte Federkonstante ist. Die Langrange-Funktion ist damit gegeben durch
\begin{align}
    L(x,\dot{x}) = T - V = \frac{1}{2}m \dot{x}^2 - \frac{1}{2}Dx^2
\end{align}
und der zu $x$ gehörige generalisierte Impuls ist
\begin{align}
    p = \frac{\partial L}{\partial \dot{x}} = m \dot{x} \qquad \Longleftrightarrow \qquad \dot{x} = \frac{p}{m}
\end{align}
Die Hamilton-Funktion $H$ ist dann 
\begin{align}
    H =& \: \dot{x}p - L(x,\dot{x}) \nonumber\\ =& \: \frac{p^2}{m} - \frac{1}{2}m\biggl(\frac{p}{m}\biggr)^2 + \frac{1}{2}Dx^2 \nonumber\\ =& \: \frac{p^2}{2m} + \frac{1}{2}Dx^2 \nonumber\\ =& \: H(x,p).
\end{align}
Eingesetzt in die Hamiltonschen Bewegungsgleichungen ergibt sich 
\begin{align}
    \dot{x} = \frac{\partial H}{\partial p} = \frac{p}{m}, \qquad \dot{p} = -\frac{\partial H}{\partial x} = -Dx 
\end{align}
Differenzieren des ersten Ausdrucks nach der Zeit liefert
\begin{align}
    \ddot{x} = \frac{\dot{p}}{m} = -\frac{D}{m}x \qquad \Longleftrightarrow& \: \qquad m\ddot{x} = -Dx \nonumber\\ \Longleftrightarrow& \: \qquad \ddot{x} = -\omega_0^2x,
\end{align}
wobei $\omega_0^2 = \frac{D}{m}$ ist. Mittels eines Exponentialansatzes erhält man
\begin{align}
    x(t) = A \: sin(\omega_0 t + \phi),
\end{align}
wobei sich die beiden Konstanten $A$ und $\phi$ aus Anfangsbedingungen ergeben. Daraus folgt
\begin{align}
    p(t) = \omega_0 m A \: cos(\omega_0 t + \phi)
\end{align}
Damit wird die Dynamik des Systems durch die Phasenraum-Kurve 
\begin{align}
    \gamma(t) = (x(t), p(t)) = (A \: sin(\omega_0 t + \phi),\omega_0 m A \: cos(\omega_0 t + \phi))
\end{align}
beschrieben, wobei $\mathcal{X} = \{\gamma(t): t \in \mathbb{R}\} \subset \Omega$.
\end{example}


\subsection{Klassisches Logiksystem}

Zu diesem System betrachten wir nun eine Menge von logischen Aussagen $L$ an das System $\sigma$, bei denen es sich um Ja/Nein-Experimente handelt, d.h. die Aussagen beziehen sich auf den Zustand von $\sigma$, der nach einer Messung am System zur Zeit $t$ in Erfahrung gebracht wird. Diese Aussagen können nur mit \textit{Ja} oder \textit{Nein} beantwortet werden. 

Auf $L$ definieren wir wie oben die logischen Operationen, mit deren Hilfe sich unter anderem Aussagen miteinander verknüpfen lassen. Der Zustand vom System $\sigma$ lässt sich, wie oben erwähnt, zu jedem Zeitpunkt $t$ eindeutig über die Angabe des Ortes und des Impulses des Teilchens zu dieser Zeit bestimmen. Da die logischen Aussagen an das System $\sigma$ Aussagen über den Zustand des Systems sind, lassen sich diese Aussagen im folgenden Sinne als Teilmenge von $\Omega$ verstehen: Zu jedem Zeitpunkt $t$ ist der Zustand des Systems eindeutig ableitbar aus den sechs Einträgen des Zustandsvektors $(q^1(t),..., q^{3-m}(t), p_1(t),..., p_{3-m}(t))$. Demzufolge kann jede Aussage über den Zustand des Systems zur Zeit $t$ und damit jedes Ja/Nein Experiment aus $L$ als Aussage über den Wertebereich der einzelnen Einträgen des Zustandsvektors angesehen werden. Werden bei einer Aussage über das System gewisse Einträge des Zustandsvektors nicht näher spezifiziert, so werden diese als frei wählbar angenommen. Auf diese Weise korrespondiert zu jedem Ja/Nein-Experiment aus $L$ und damit zu jeder Aussage über den Zustand von $\sigma$ eine Teilmenge des Phasenraums $\Omega$. Daraus folgt, dass die Menge $L$ mit der Potenzmenge $\mathcal{P}(\Omega)$ vom Phasenraum $\Omega$ identifiziert werden kann. 

Betrachten wir das Beispiel $3.1$, so wäre z.B. eine mögliche Aussage an das System $\sigma$ \textit{„Der Ort und der Impuls des Teilchen ist für einen festen Zeitpunkt $t \in \mathbb{R}$ gegeben durch $x(t) = A \: sin(\omega_0 t + \phi)$ und $p(t) = \omega_0 m A \: cos(\omega_0 t + \phi)$“}. Diese Aussage würde dann für jeden festen Zeitpunkt $t$ der einelementigen Menge $\{\gamma(t) = (x(t), p(t))\}$ entsprechen.

Eine andere mögliche Aussage, wieder bezogen auf das Beispiel $3.1$, wäre \textit{„Das Teilchen befinde sich zur Zeit $t$ am Ort $x(t)$“}. Diese Aussage würde dann der Menge $\{(x(t),p) : p \in \mathbb{R}\}$ entsprechen.

Es gilt weiter, dass die logischen Operationen auf $L$ in einer natürlichen Beziehung mit den auf $\mathcal{P}(\Omega)$ definierbaren Mengenoperationen $\cup$ (Mengenvereinigung), $\cap$ (Mengendurchschnitt) und $\cdot^C$ (Komplement-Operation) stehen, denn sei $F$ diejenige Abbildung, die jeder Aussage an das System $\sigma$ ihre im obigem Sinne korrespondierende Teilmenge im Phasenraum $\Omega$ zuordnet, so gilt  
\begin{align}
    &F(a \vee b) = F(a) \cup F(b) \\ &F(a \wedge b) = F(a) \cap F(b) \\ &F(\neg a) = F(a)^C \\ &F(a \to b) = F(a)^C \cup F(b)
\end{align}

Die algebraische Struktur $(\mathcal{P}(\Omega), \cup, \cap, \cdot^C)$, die wir im Folgenden Propositionensystem (der klassischen Mechanik) nennen wollen, stellt also im Rahmen der klassischen Mechanik ein geeignetes Logikmodell dar, um Aussagen an das klassische System $\sigma$ zu studieren. Im folgenden werden wir sehen, dass dieses Propositionensystem ein Beispiel eines sogenannten vollständigen, orthokomplementierten, distributiven und atomaren Verbandes ist. 

\newpage

\subsection{Verbandsstruktur eines klassischen Logiksystems}

Eine wichtige und in dieser Arbeit sehr zentrale mathematische Struktur, die im Folgenden immer wieder auftauchen wird, ist die des Verbandes \cite{berghammer2012ordnungen}:

\begin{definition}
Ein Verband ist gegeben durch ein Tripel $\mathbb{V} = (V, \sqcup, \sqcap)$, bestehend aus einer Grundmenge $V$ und zwei binären Verknüpfungen $\sqcup, \sqcap : V \times V \longrightarrow V$ mit den folgenden Eigenschaften:
\begin{itemize}
    \item [\textit{i)}] $a \sqcup b = b \sqcup a$ und $a \sqcap b = b \sqcap a$, \: $\forall a,b \in V$
    \item [\textit{ii)}] $a \sqcup (b \sqcup c) = (a \sqcup b) \sqcup c$ und $a \sqcap (b \sqcap c) = (a \sqcap b) \sqcap c$, \: $\forall a,b,c \in V$
    \item [\textit{iii)}] $a \sqcup (a \sqcap b) = a$ und $a \sqcap (a \sqcup b) = a$, \: $\forall a,b \in V$
\end{itemize}
\end{definition} 

Im Falle obigen Propositionensystems wählen wir $V = \mathcal{P}(\Omega), \sqcup = \cup$ und $\sqcap = \cap$. Es ist leicht einzusehen, dass damit tatsächlich ein Verband definiert wird.\\

Es lässt sich nun auf jedem Verband eine partielle Ordnung, d.h. eine reflexive, transitive und antisymmetrische zweistellige Relation $\leq$, durch 
\begin{align}
    a \leq b \: :\Longleftrightarrow  \: a \sqcup b = b
\end{align}
definieren. Es lässt sich leicht einsehen, dass im Falle unseres obigen Propositionensystems $(\mathcal{P}(\Omega),\cup, \cap)$ und bezüglich der eben definierten $\leq$ Relation, die hier mit der Teilmengenbeziehung $\subseteq$ zusammenfällt, es zu jeder beliebigen Familie $(a_i)_{i \in I}$ aus $\mathcal{P}(\Omega)$ eine kleinste obere Schranke $\cup_{i \in I}b_i$ und eine größte untere Schranke $\cap_{i \in I}b_i$ gibt. Verbände mit dieser Eigenschaft nennt man vollständig. 

Da wir Verbände als Modelle von Logiken verstehen wollen, benötigen auf diesen eine Art Negation bzw. eine Operation, die wir später als eine logische Negation identifizieren wollen:

\begin{definition}
Ein Verband $\mathbb{V}$ heißt orthokomplementiert, falls auf diesem eine Orthokomplementation $'$ definiert ist. Dabei handelt es sich um eine bijektive Selbstabbildung auf $V$, sodass $\forall a,b \in V$ gilt:
\begin{itemize}
    \item [\textit{i)}] $(a')' = a$
    \item [\textit{ii)}] $a \leq b \rightarrow b' \leq a'$
    \item [\textit{iii)}] $a \sqcup a' = V =: \mathbf{I}$ und $a \sqcap a' = \mathbf{0}$, wobei $\mathbf{0} := \bigsqcap_{b \in V}b$ 
\end{itemize}
Wir nennen $\mathbf{I}$ Identitätselement.
\end{definition}

Für unser konkretes Beispiel sehen wir, dass die Komplement-Operation $\cdot^C$ eine Orthokomplementierung für den Verband $(\mathcal{P}(\Omega),\cup, \cap)$ darstellt.\\

Als Nächstes definieren wir eine wichtige Klasse von Verbänden:

\begin{definition}
Ein Verband $\mathbb{V}$ heißt distributiv, falls $\forall a,b,c \in V$ gilt, dass $a \sqcup (b \sqcap c) = (a \sqcup b) \sqcap (a \sqcup c)$.
\end{definition}

Auch hier ist durch unsere obige Setzung $\sqcup = \cup$ und $\sqcap = \cap$ leicht einzusehen, dass der Verband $(\mathcal{P}(\Omega),\cup, \cap)$ ein distributiver Verband ist. 

Wie wir später noch sehen werden, ist ein Verband, welcher ein Logiksystem der Quantenmechanik darstellt, im Gegensatz zu unserem eben definierten Logikverband der klassischen Mechanik, ein nicht distributiver Verband, was einen wichtigen Unterschied zwischen klassischen und quantenmechanischen Logikverbänden darstellt.

Zum Abschluss dieses Abschnittes noch eine weitere Definition, mit deren Hilfe wir später die klassischen und quantenmechanischen Logikverbände besser miteinander vergleichen können. 

\begin{definition}
Sei $\mathbb{V}$ ein Verband. Ein Atom $p \in V$ ist ein Element aus $V \setminus \{0\}$ mit der Eigenschaft, dass für alle $a \in V$ mit $a \leq p$ folgt, dass $a = \mathbf{0}$ oder $a = p$ ist. Ein Verband $\mathbb{V}$ heißt atomar, falls für jedes Element $a \in V$ mit $a \neq \mathbf{0}$ ein Atom $p \in V$ existiert, sodass $p \leq a$ ist.
\end{definition}

Im Verband $(\mathcal{P}(\Omega),\cup, \cap)$ stellen die Singletons, also die einelementigen Mengen aus $\mathcal{P}(\Omega)$, die Atome dar. Auch hier sieht man leicht, dass der Verband atomar ist.\\

Zusammengefasst ist damit ein Logiksystem der klassischen Mechanik gegeben durch den vollständigen, orthokomplementierten, distributiven und atomaren Verband $(\mathcal{P}(\Omega), \cup, \cap, \cdot^C)$. Dabei stellt $\Omega$ die Vereinigung aller Atome des Verbandes dar. Allgemein sagt man, dass jeder vollständige, orthokomplementierte, distributive und atomare Verband ein Logikmodell der klassischen Mechanik darstellt.

\newpage



\section{Logikmodell der Quantenmechanik}

Im Folgenden beziehen wir uns auf \cite{cohen2007quantenmechanik} und \cite{mackey2013mathematical}. Für Erklärungen der Begriffe Messbarkeit, Wahrscheinlichkeitsmaß und Wahrscheinlichkeitsdichte, siehe \cite{bauer2019wahrscheinlichkeitstheorie} und \cite{elstrodt1996mass}.


\subsection{Quantenmechanik}

Im Gegensatz zur klassischen Mechanik wird die Dynamik eines (nichtrelativistischen) quantenmechanischen Einteilchen-Systems $\Sigma$ mit Potential $V$ durch die sogenannte Schrödinger Gleichung beschrieben: 
\begin{align}
    i \hbar \frac{\partial}{\partial t} \psi (r,t) = \biggl(-\frac{\hbar^2}{2m} \Delta + V(r,t) \biggl) \psi (r,t) \label{eq:S}
\end{align}
Dabei ist $i$ die imaginäre Einheit, $\hbar$ das reduzierte Planksche Wirkungsquantum, $m$ die Masse, $\Delta$ der Laplaceoperator, $r = (x^1, x^2, x^3)$ der Ort mit den kartesischen Koordinaten $x^{j}$ und $\psi$ eine messbare Funktion, die sogenannte Wellenfunktion, die das physikalische System beschreibt. Den Ausdruck
\begin{align*}
    \biggl(-\frac{\hbar^2}{2m} \Delta + V(r,t) \biggl) =: \hat{H}
\end{align*}
auf der rechten Seite von \eqref{eq:S} nennt man Hamiltonoperator.

Für zeitunabhängige Probleme, bei denen sich also das Potential mit der Zeit nicht ändert, kann man mittels des Ansatzes 
\begin{align}
    \psi (r,t) = \tilde{\psi}(r)\exp{\bigg(\frac{-iE}{\hbar}t\bigg)}
\end{align}
die Zeitabhängigkeit von \eqref{eq:S} beseitigen und gelangt zur zeitunabhängigen Schrödingergleichung 
\begin{align}
    \hat{H}\tilde{\psi} = E \tilde{\psi}, \label{eq:Z}
\end{align}
wobei $E$ die Energie des Systems ist.

Die Wellenfunktion $\psi$ muss dabei zu jeder Zeit $t$ die Bedingung der quadratintegrabilität erfüllen, d.h. in unserem Kontext, dass 
\begin{align}
    \int_{\mathbb{R}^3} \psi(r,t)^{*} \psi(r,t)\,d \mu (r) = \int_{\mathbb{R}^3} |\psi(r,t)|^2 \,d \mu (r) < \infty
\end{align}
für alle Zeiten $t \in \mathbb{R}$ gilt. Dabei ist $\mu$ das Lebesgue-Maß im $\mathbb{R}^3$, d.h. obiges Integral ist als Lebesgue-Integral zu verstehen, und $^{*}$ ist die komplexe Konjugation. 

Die Menge aller solcher (messbaren) quadratintegrabler Funktionen für festgehaltenes $t$ wird mit $\mathcal{L}^2(\mu)$ bezeichnet. Der Grund für diese Forderung ist der Folgende: Die Wellenfunktion $\psi$ ist eine allgemein komplexwertige Funktion, die das physikalische System $\Sigma$ modelliert. Es kann sich bei $\psi$ also nicht um eine direkt beobachtbare, d.h. messbare Größe von $\Sigma$, einer sogenannten Observablen von $\Sigma$, handeln. Stattdessen soll über das Betragsquadrat der Wellenfunktion $\psi (r,t)$ zu jeder festen Zeit $t$ eine Aufenthaltswahrscheinlichkeitsdichte für das Teilchen in $\Sigma$ definiert werden, sodass 
\begin{align}
    \int_{\mathbb{R}^3} |\psi(r,t)|^2 \,d \mu (r) = 1 \label{eq:N}
\end{align}
ist. Man fordert also die Normiertheit von $\psi$, damit über \eqref{eq:N} eine Wahrscheinlichkeitsdichte definiert wird (Man beachte, dass für zeitunabhängige Probleme die Wellenfunktionen $\psi$ und $\tilde{\psi}$ die selbe Wahrscheinlichkeitsdichte definieren.). 

Dadurch wird auch ein weiterer Unterschied zur klassischen Mechanik klar: Im Gegensatz zur klassischen Physik sind in den Modellen der Quantenphysik nur wahrscheinlichkeitstheoretische Aussagen möglich! Die Funktion 
\begin{align}
    \mathbb{P}_t^{\psi}(B) = \int_{B} |\psi(r,t)|^2 \,d \mu (r)
\end{align}
soll also zu jeder festen Zeit $t$ für alle (messbaren) $B \subseteq \mathbb{R}^3$ ein Wahrscheinlichkeitsmaß definieren. Aus diesem Grund wird auch klar, warum wir das Lebesgue-Integral nutzen: Damit $\mathbb{P}_t^{\psi}$ für festes $t$ ein Wahrscheinlichkeitsmaß ist, müssen für dieses die Kolmogorov-Axiome erfüllt sein und im Gegensatz zum herkömmlichen Riemann-Integral ist das Lebesgue-Integral Sigma-Additiv!  

Eine wichtige Beobachtung ist nun die folgende: Wie wir oben bereits bemerkt haben, ist die Wellenfunktion $\psi$ nicht direkt beobachtbar. Mit dem Experiment vergleichbare physikalische Vorhersagen über das Teilchen erhalten wir erst einmal nur über das Wahrscheinlichkeitsmaß $\mathbb{P}_t^{\psi}$. Wird ein quantenmechanisches System nun einmal über die Wellenfunktion $\psi_1$ und einmal über die Wellenfunktion $\psi_2$ modelliert, und $\psi_1$ und $\psi_2$ unterscheiden sich nur auf Lebesgue-Nullmengen (Mengen vom Lebensgue-Maß Null) voneinander, so lässt sich leicht nachprüfen, das zu jeder festen Zeit $t$ die wie oben aus $\psi_1$ und aus $\psi_2$ definierten Wahrscheinlichkeitsmaße $\mathbb{P}_t^{\psi_1}$ und $\mathbb{P}_t^{\psi_2}$ bezüglich beliebiger (messbarer) Eingabe $B \subseteq \mathbb{R}^3$ immer dieselbe Ausgabe liefern. Das heißt, physikalisch kann man nicht zwischen $\psi_1$ und $\psi_2$ unterscheiden, was nahe legt, das man alle Wellenfunktionen, die sich nur auf Lebesgue-Nullmengen voneinander unterscheiden, miteinander identifizieren sollte. Konkret bedeutet das, dass man folgende Äquivalenzrelation einführt: 
\begin{align}
    \psi_1 \sim \psi_2 \: :\Longleftrightarrow \: \psi_1 - \psi_2 \in \mathcal{N} = \{ f  \: \textrm{messbar} : f = 0 \: \mu-\textrm{fast überall} \}
\end{align}
Weiter wissen wir, dass die Schrödingergleichung eine lineare, homogene, partielle Differentialgleichung der Ordnung Zwei ist. Definieren wir auf der Menge $\mathcal{L}^2$ eine punkweise Vektoraddition und Skalarmultiplikation, so sehen wir, dass damit der Lösungsraum der Schrödingergleichung eine Vektorraumstruktur aufweist. Zusammengefasst definieren wir damit den Raum aller Zustandsvektoren eines quantenmechanischen Systems zu einem Zeitpunkt $t$ als den Quotientenvektorraum 
\begin{align}
    L^2(\mu) = \mathcal{L}^2(\mu) / \mathcal{N}
\end{align}
Auf dem $L^2(\mu)$ lässt sich mittels 
\begin{align}
    \bigl\langle f , g \bigr\rangle := \int_{\mathbb{R}^3} f^*(r)  g(r) \,d \mu (r)
\end{align}
ein komplexes Skalarprodukt, d.h. eine positiv definite, hermetische Sesquilinearform, definieren, so dass für die induzierte Norm $\| f \| := \sqrt{\langle f , f \rangle}$ gilt, dass 
\begin{align}
    \| f \|^2 = \int_{\mathbb{R}^3} |f(r)|^2 \,d \mu (r).
\end{align}
Damit reproduziert die vom Skalarprodukt induzierte Norm für jeden festen Zeitpunkt $t$ die quadratintegrabilitäts-Bedingung für die Wellenfunktionen. Man kann zeigen \cite{brezis2011functional}, dass der $L^2(\mu)$ zusammen mit $\langle \cdot , \cdot \rangle$ ein separabler Hilbertraum ist, was bedeutet, dass der $L^2(\mu)$ eine abzählbare Basis besitzt und das alle Cauchy-Folgen bezüglich der vom komplexen Skalarprodukt induzierten Norm konvergieren.  


\subsection{Projektive Hilberträume als Zustandsräume}

Aufgrund obiger Normierungsbedingung, die wichtig für die wahrscheinlichkeitstheoretische Interpretation der Quantenmechanik ist, kann man sich die Fragen stellen, ob Zustände in der Quantenmechanik, die zu jeder Zeit $t$ als Element des $L^2(\mu)$ aufgefasst werden, aus physikalischer Sicht notwendigerweise über normierte Zustandsvektoren beschrieben werden müssen. 

Wie sich zeigt, handelt es sich dabei tatsächlich nur um eine bequeme Konvention: Nehmen wir an, dass der Zustandvektor $\psi \in L^2(\mu)$ eines Systems $\Sigma$ normiert ist, so liefert das Betragsquadrat per Definition eine Wahrscheinlichkeitsdichte und mittels $\mathbb{P}_t^{\psi}$ lassen sich dann wie oben die Wahrscheinlichkeiten einer Ortsmessung in einem (messbaren) Raumvolumen ausrechnen. Physikalisch äquivalent lässt sich das System aber auch durch einen Zustandsvektor $\phi \in L^2(\mu)$ beschreiben, sofern gilt, dass $\psi = \lambda \phi$ ist, wobei $\lambda \in \mathbb{C}$. Es ist nun zwar nicht mehr möglich, über das Betragsquadrat von $\phi$ direkt eine Wahrscheinlichkeitsdichte zu definieren, mittels 
\begin{align}
    \mathbb{P}_t^{\phi}(B) = \frac{\int_{B} |\phi(r,t)|^2 \,d \mu (r)}{\sqrt{\int_{\mathbb{R}^3} |\phi(r,t)|^2 \,d \mu (r) \int_{\mathbb{R}^3} |\phi(r,t)|^2 \,d \mu (r)}}
\end{align}
sind wir aber in der Lage, dennoch die selben physikalischen Vorhersagen aus $\phi$ wie aus $\psi$ zu extrahieren. 

Damit gelangen wir zu einer verallgemeinerten Normierungsbedingung für quantenmechanische Zustände: 
\begin{align}
    \mathbb{P}_t^{\phi}(\mathbb{R}^3) = \frac{\| \phi \|^2}{\sqrt{\bigl\langle \phi , \phi \bigr\rangle \bigl\langle \phi , \phi \bigr\rangle}} = 1
\end{align}
Aus dieser Beobachtung schließen wir, dass alle Elemente $\phi_1 \neq 0$ und $\phi_2 \neq 0$ aus $L^2(\mu)$ physikalisch miteinander identifiziert werden sollten, sofern diese sich nur um eine Konstante $\lambda \in \mathbb{C}$ unterscheiden. Mathematisch führt das erneut auf eine Äquivalenzrelation: 
\begin{align}
    \phi_1 \sim \phi_2 \: :\Longleftrightarrow \: \exists \lambda \in \mathbb{C} \setminus \{0\} : \phi_1 = \lambda \phi_2
\end{align}
Definieren wir diese Äquivalenzrelation auf dem komplexen Hilbertraum $L^2(\mu)$, so erhalten wir den projektiven Hilbertraum $\mathbf{P}(L^2(\mu))$ (Beachte, dass es sich bei $\mathbf{P}(L^2(\mu))$ nicht mehr um einen Vektorraum handelt!). Die Punkte in $\mathbf{P}(L^2(\mu))$, auch Strahlen genannt, entsprechen dabei salopp den eindimensionalen komplexen Unterräumen des $L^2(\mu)$. Man bezeichnet den $\mathbf{P}(L^2(\mu))$ als Zustandsraum. 

Man beachte, dass man immer strikt zwischen Zustandsvektoren (Elemente aus dem $L^2(\mu)$) und Zuständen (Elemente aus $\mathbf{P}(L^2(\mu))$ unterscheiden sollte, denn um konkrete Berechnungen durchzuführen, wählt man immer nur einen beliebigen Repräsentanten aus denjenigen Strahl, der das System $\Sigma$ beschreibt. Man rechnet aber nicht mit den Zuständen selbst. 

Die Tatsache, dass nun der Zustandsraum der Quantenmechanik aus physikalischer Sicht als der projektive Raum $\mathbf{P}(L^2(\mu))$ verstanden werden kann, wird im weiteren Verlauf für die Konstruktion eines quantenmechanischen Logikmodells von großer Bedeutung sein. 


\subsection{Observablen, Superpositionen und Übergangswahrscheinlichkeiten in der Quantenmechanik}

Für Erklärungen der im Folgenden verwendeten Begriffe Selbstadjungiert, Spektrum, Eigenwerte und Eigenvektoren, siehe \cite{borthwick2020spectral} und \cite{fischer2011lernbuch}. \\ \\
Befassen wir uns nun noch kurz mit beobachtbaren Größen, sogenannte Observablen, des Systems $\Sigma$. Ein Postulat der Quantenmechanik ist es, dass jede Observable $\mathcal{V}$ eines quantenmechanischen Systems durch einen zu dieser Observable assoziierten nichtkonstanten selbsadjungierten Operator $\mathcal{O}_\mathcal{V}$ auf dem $L^2(\mu)$ beschrieben wird. Dabei wird gefordert, dass die möglichen Messwerte der Observable $\mathcal{V}$ den Elementen im Spektrums des Operators $\mathcal{O}_\mathcal{V}$ entsprechen. Im Falle selbstadjungierter Operatoren setzt sich das wegen der selbstadjungiertheit reellwertige Spektrum des Operators $\mathcal{O}_\mathcal{V}$ aus der disjunkten Vereinigung des Punktspektrums und des kontinuierlichen Spektrums zusammen. Das Punktspektrum ist dabei die Menge aller Eigenwerte, d.h. die Menge aller $\lambda$, für die der Operator
\begin{align}
    \lambda \mathbf{id} - \mathcal{O}_\mathcal{V} \label{Operator}
\end{align}
einen nicht trivialen Kern hat, d.h. es gilt
\begin{align}
    (\lambda \mathbf{id} - \mathcal{O}_\mathcal{V})(v) = 0, \qquad v \in L^2(\mu)
\end{align}
für $v \neq 0$. Das kontinuierliche Spektrum wiederum ist die Menge aller $\lambda$, für welche der Operator aus \eqref{Operator} injektiv, aber nicht surjektiv ist. Wie bereits erwähnt repräsentieren die Elemente des Spektrums die möglichen Messwerte der Observablen $\mathcal{V}$. Im Folgenden betrachten wir nur mögliche Messwerte, die einem Element im Punktspektrum des Operators $\mathcal{O}_\mathcal{V}$ entsprechen, denn für diese existiert ein zugehöriger Eigenvektor $v \in L^2(\mu)$, den man Eigenzustandsvektoren nennt. Ein Eigenzustandvektor des Observablenoperators $\mathcal{O}_\mathcal{V}$ beschreibt dabei das System, wenn bei Messung der Observable $\mathcal{V}$ der zum Eigenzustandvektor zugehörige Eigenwert $\lambda$ gemessen wird. 

Ein Beispiel eines solchen zu einer Observablen assoziierten Operators haben wir schon kennengelernt: So ist der Hamiltonoperator für zeitunabhängige Probleme wie dem des quantenmechanischen harmonischen Oszillators derjenige Operator, der der Observable Energie zugeordnet wird (siehe \eqref{eq:Z}), d.h. die Eigenwerte des Hamiltonoperators entsprechen den möglichen Energien des Systems.

Wie bereits erwähnt sichert die Eigenschaft der Selbstadjungiertheit, dass u.a. alle Eigenwerte des Operators reelwertig sind (wie wir es von möglichen Messwerten auch erwarten würden), denn sei $\mathcal{O_\mathcal{V}}$ ein selbstadjungierter Operator auf einer (dichten) Teilmenge des $L^2(\mu)$ und $\psi \in L^2(\mu)$ sei ein Eigenvektor zum Eigenwert $\lambda$, dann gilt
\begin{align}
    \lambda \bigl\langle \psi , \psi \bigr\rangle = \bigl\langle \psi , \lambda \psi \bigr\rangle = \bigl\langle \psi , \mathcal{O}_\mathcal{V} \psi \bigr\rangle = \bigl\langle \mathcal{O}_\mathcal{V} \psi , \psi \bigr\rangle = \bigl\langle \lambda \psi , \psi \bigr\rangle = \lambda^{*} \bigl\langle \psi , \psi \bigr\rangle,
\end{align}
woraus folgt, dass $\lambda = \lambda^{*}$ ist, was wiederum impliziert, dass der Imaginärteil von $\lambda$ verschwindet. 

Man kann sich nun die Frage stellen, warum wir fordern, dass die Operatoren, die wir mit Observablen assoziieren, nicht konstant sein dürfen. Der Grund ist der, dass wir in dem kommenden Abschnitten, in welchen wir ein Modell der Quantenlogik motivieren wollen, mittels dieser technischen Forderung leicht ein Modell der Negation einführen können. Im Prinzip werden sich die kommenden Überlegungen auf die folgende Beobachtung stützen: Sei $\mathcal{O}_\mathcal{V}$ wieder der (nichtkonstante) selbstadjungierte Operator zu einer Observablen $\mathcal{V}$ und $\psi_i$ ein Eigenvektor zu $\lambda_i$ und $\psi_j$ ein Eigenvektor zu $\lambda_j$ mit $\lambda_i \neq \lambda_j$, so gilt 
\begin{align}
    \lambda_j \bigl\langle \psi_i , \psi_j \bigr\rangle =& \: \bigl\langle \psi_i , \lambda_j \psi_j \bigr\rangle = \bigl\langle \psi_i , \mathcal{O}_\mathcal{V} \psi_j \bigr\rangle \nonumber\\ =& \: \bigl\langle \mathcal{O}_\mathcal{V} \psi_i , \psi_j \bigr\rangle = \bigl\langle \lambda_i \psi_i , \psi_j \bigr\rangle = \lambda_i \bigl\langle \psi_i , \psi_j \bigr\rangle
\end{align}
was impliziert, dass $\langle \psi_i , \psi_j \rangle = 0$ ist. D.h. die Eigenzustandsvektoren zu verschiedenen Eigenwerten stehen orthogonal aufeinander. 

Physikalisch argumentiert schließen wir dabei konstante selbstadjungierte Operatoren als Observablen deshalb aus, da für diese Operatoren jeder beliebige Zustand ein Eigenvektor zu einem Eigenwert $\lambda \in \mathbb{R}$ ist. Streng genommen müssen Messapparaturen, die eine solche Observable messen, welche von einem konstantem Operator repräsentiert wird, sich nicht einmal auf das eigentliche System $\Sigma$ beziehen, da die Messung so oder so immer den Messwert $\lambda$ liefert. Es erscheint daher sinnvoll derartige Operatoren aus physikalischer Sicht auszuschließen.

Nun wollen wir noch angeben, wie ein Zustand im Allgemeinen in der Quantenmechanik aussieht. Für alles weitere nehmen wir vereinfachend an, dass der Observablenoperator $\mathcal{O}_\mathcal{V}$ ein leeres kontinuierliches Spektrum und ein nichtleeres Punktspektrum besitzt. Sei nun weiter der Zustand des Systems der Einfachheit halber durch einen normierten Zustandvektor gegeben. Betrachtet man nun die Observable $\mathcal{V}$, so lässt sich im allgemeinen der normierte Zustandsvektor $\Psi$ vom System $\Sigma$ als Linearkombination der normierten Eigenzustandsvektoren $\psi_n$ des Operators $\mathcal{O}_\mathcal{V}$ schreiben, dass heißt es gilt 
\begin{align}
    \Psi = \sum_{n = 1}^{\infty} c_n \psi_n, \qquad c_n \in \mathbb{C} \: \: \forall n \in \mathbb{N}.
\end{align}
Obigen Ausdruck nennt man auch oft Superposition von Eigenzustandsvektoren.

Bei einer tatsächlichen Messung der Observable $\mathcal{V}$ des Systems $\Sigma$ sollen nun in unserer Vereinfachung die möglichen Messwerte gegeben sein durch die Eigenwerte des Operators $\mathcal{O}_\mathcal{V}$. Da bei einer Messung immer nur ein Messwert gemessen werden kann, muss durch die Messung des Eigenwertes $\lambda_n$ der obige Zustandsvektor $\Psi$ nach obiger Bemerkung in den zum Eigenwert $\lambda_n$ gehörigen normierten Eigenzustand $\psi_n$ übergehen. Diesen Vorgang nennt man oft Kollaps der Wellenfunktion. 

Zum Abschluss dieses Abschnittes lässt sich nun noch die Frage stellen, mit welcher Wahrscheinlichkeit bei einer Messung der Observablen $\mathcal{V}$ der Messwert $\lambda_n$ gemessen wird: Sei dazu das System unmittelbar vor der Messung gegeben durch den normierten Zustandsvektor $\Psi = \sum_{n = 1}^{\infty} c_n \psi_n$, wobei die $\psi_n$ wieder die normierten Eigenvektoren des selbstadjungierten Operators $\mathcal{O}_\mathcal{V}$ sind, der die Observable $\mathcal{V}$ beschreibt. Dann ergibt sich die Wahrscheinlichkeit dafür, bei Messung der Observable $\mathcal{V}$, den zum normierten Eigenvektor $\psi_n$ zugehörigen Eigenwert $\lambda_n$ zu messen, durch den Ausdruck 
\begin{align}
    |\bigr\langle \psi_n , \Psi \bigr\rangle|^2 = |c_n|^2 = c_n^*c_n.
\end{align}

Nach dieser wieder eher abstrakten Abhandlung wichtiger Begriffe der Quantenmechanik wollen wir uns noch ein kleines Beispiel ansehen, um ein besseres Gefühl für die eben eingeführten Begriffe zu bekommen:

\begin{example}
Als Beispiel bietet sich der eindimensionale quantenmechanische harmonische Oszillator (\cite{griffiths2018introduction}, Seite 40) an, da wir bereits in Beispiel 3.1 den klassischen harmonischen Oszillator behandelt hatten. Das Potential ist wie im klassischen Fall gegeben durch
\begin{align}
    V(x) = \frac{1}{2}Dx^2 = \frac{1}{2}m\omega_0 x^2
\end{align}
Damit ergibt sich der Hamiltonoperator zu 
\begin{align}
    \hat{H} = -\frac{\hbar^2}{2m}\Delta + \frac{1}{2}m\omega_0 x^2
\end{align}
Wie bereits erwähnt stellt der Hamiltonoperator in diesem Beispiel denjenigen Operator dar, der der Observable Energie zugeordnet wird, d.h. die Eigenwerte $E_n$ von $\hat{H}$ sind die möglichen Energiemesswerte.

Durch die Einführung der sogenannten Leiteroperatoren 
\begin{align}
    \hat{a} = \sqrt{\frac{m\omega_0}{2 \hbar}}\bigg(x + \frac{i}{m\omega_0}\hat{p}\bigg), \\ \hat{a}^{\dag} = \sqrt{\frac{m\omega_0}{2 \hbar}}\bigg(x - \frac{i}{m\omega_0}\hat{p}\bigg),
\end{align}
wobei $\hat{p} = - i \hbar \frac{\partial}{\partial x}$ der sogenannte Impulsoperator ist, lässt sich der Hamiltonoperator umschreiben zu 
\begin{align}
    \hat{H} = \hbar \omega_0 \bigg(\hat{a}^\dag \hat{a} + \frac{1}{2}\bigg).
\end{align}
Der Operator 
\begin{align}
    \hat{a}^\dag \hat{a} =: \hat{N}
\end{align}
wird manchmal auch Besetzungszahloperator genannt. Es lässt sich zeigen, dass die Eigenwerte von $\hat{N}$ der Menge $\mathbb{N}_0 = \mathbb{N}\cup\{0\}$ entsprechen. Die möglichen Energien des Systems sind damit gegeben durch 
\begin{align}
    E_n = \hbar \omega_0 \bigg(n + \frac{1}{2}\bigg), \qquad n \in \mathbb{N}_0.
\end{align}
Es lässt sich zeigen, dass der normierte Eigenvektor $\tilde{\psi}_n$ zum Eigenwert $E_n$ gegeben ist durch
\begin{align}
    \tilde{\psi}_n(x) = \bigg(\frac{m \omega_0}{\pi \hbar}\bigg)^{\frac{1}{4}}\frac{1}{\sqrt{2^n n!}} \bigg( \sqrt{\frac{m \omega_0}{\hbar}}x - \sqrt{\frac{\hbar}{m \omega_0}}\frac{\partial}{\partial x} \bigg)^n \exp{\bigg(\frac{-m \omega_0}{2 \hbar}x^2\bigg)}. \label{eq:Eig}
\end{align}
Ein allgemeiner Zustand des Systems ist dann z.B. gegeben durch den normierten Zustand 
\begin{align}
    \tilde{\Psi} = \sum\limits_{n = 0}^{\infty} c_n \tilde{\psi}_n
\end{align}
Befindet sich beispielsweise das System zu $75\%$ im Grundzustand $\tilde{\psi}_0$ und zu $25\%$ im angeregten Zustand $\tilde{\psi}_1$, so gilt
\begin{align}
    \tilde{\Psi} = \sqrt{\frac{3}{4}}\tilde{\psi}_0 + \sqrt{\frac{1}{4}}\tilde{\psi}_1,
\end{align}
denn es gilt 
\begin{align}
    |\langle \tilde{\psi}_0 , \tilde{\Psi} \rangle|^2 = \frac{3}{4} \equiv 75\%, \\
    |\langle \tilde{\psi}_1 , \tilde{\Psi} \rangle|^2 = \frac{1}{4} \equiv 25\%.
\end{align}
\end{example}


\subsection{Quantenlogik}

Sei $\Sigma$ wieder ein quantenmechanisches System und $\mathcal{V}$ eine beliebige aber fest gewählte Observable des Systems. Ähnlich wie im klassischen Fall betrachten wir nun zusätzlich eine Menge an Ja/Nein-Experimenten bezüglich der Observable $\mathcal{V}$ an das System $\Sigma$, dass heißt, eine Menge an Aussagen über den Zustand von $\Sigma$, welche nach einer Messung von $\mathcal{V}$ entweder mit \textit{Ja} oder \textit{Nein} beantwortet werden können. Wir betrachten dabei erneut nur Observablen, deren mögliche Messwerte im Punktspektrum des zugehörigen Observablenoperators $\mathcal{O}_\mathcal{V}$ liegen. 

Im vorherigen Abschnitt haben wir gesehen, dass der Zustandsraum eines quantenmechanischen Systems als projektiver Hilbertraum $\mathbf{P}(L^2(\mu))$ verstanden werden kann. Mittels dieser Vorarbeit lässt sich nun leicht ein quantenmechanischen Logikmodell für Zustände des Systems motivieren. Da die Ja/Nein-Experimente Aussagen über den Zustand des Systems $\Sigma$ bzgl. $\mathcal{V}$ sein sollen, können wir Aussagen über einen konkreten Zustand des Systems als eindimensionale komplexen Unterräume des $L^2(\mu)$ auffassen, die von denjenigen Eigenzustandsvektor aufgespannt werden, die in der Äquivalenzklasse liegen, die dem Zustand aus der logischen Aussage entsprechen.

Alle weiteren logischen Aussagen über $\mathcal{V}$ bzgl. des Systems $\Sigma$, die sich nicht auf einen konkreten Zustand, d.h. einem bestimmten Strahl aus $\mathbf{P}(L^2(\mu))$, beziehen, sagen stattdessen etwas über die Zugehörigkeit des das System $\Sigma$ beschreibenden Zustandsvektors zu einem abgeschlossenen komplexen Untervektorraum des $L^2(\mu)$ aus. Daher sagt man, dass die Ja/Nein-Experimente an ein quantenmechanischen System bzgl. der Observable $\mathcal{V}$ assoziiert werden mit derjenigen Menge aller abgeschlossener Unterräume des $L^2(\mu)$, welche aufgespannt werden von den Eigenzustandsvektoren des Operators $\mathcal{O}_\mathcal{V}$. 

Allgemeiner bezeichne die Menge $\mathcal{L}$ nun alle abgeschlossenen Unterräume des komplexen Hilbertraumes $L^2(\mu)$. Man ordnet nun jeder beliebigen Aussage an das System, welche sich auf Messwerte beziehen, die im Punktspektrum des zugehörigen Observablenoperators liegen, einen abgeschlossenen Unterraum des komplexen Hilbertraumes zu und umgekehrt. Es ist dabei zu beachten, dass nicht alle solche Aussagen an das System zu einem Ja/Nein-Experiement der Observable $\mathcal{V}$ korrespondieren (siehe kommendes Beispiel 4.2). Es macht aber Sinn, auch solche Aussagen als einen Unterraum des Hilbertrams zu modellieren, da sich jede solche Aussage an das System als Aussage über den Zustand des Systems verstehen lässt, welcher wiederrum über Zustandsvektoren beschreiben wird. Außerdem lässt sich jeder beliebige Vektor aus dem $L^2(\mu)$ als Eigenvektor eines bestimmten selbstadjungierten Operators (z.B. eines speziellen Projektionsoperators) verstehen. Da selbstadjungierte Operatoren wiederum Observablen repräsentieren, lässt sich somit jede Aussage, die wir über einen abgeschlossenen komplexen Unterraum des $L^2(\mu)$ modellieren, als Ja/Nein-Experiment einer speziellen Observable $\mathcal{W}$ interpretieren.

Wir bezeichnen die Aussagen an das System $\Sigma$ (und manchmal auch die abgeschlossenen Unterräume) als Propositionen. Die Forderung nach der Abgeschlossenheit der den Aussagen zugeordneten Untervektorräume hat unter anderem technische Gründe, die mit der Einführung einer Orthokomplementierung auf der Menge $\mathcal{L}$ als Modell der logischen Negation zu tun haben und im nächsten Abschnitt besprochen werden.

Es ist hierbei aber zu beachten, dass wir tatsächlich in unser Logikmodell nur diejenigen Propositionen aufnehmen, die sich auf diejenigen Messwerte einer Observablen $\mathcal{V}$ beziehen, welche durch Elemente im Punktspektrum dargestellt werden. Der Grund ist der, dass nur in diesem Fall die Proposition bzgl. $\mathcal{V}$ mit einem komplexen abgeschlossenen Unterraum assoziiert werden können, da nur in diesem Fall zu den möglichen Messwerten zugehörige Eigenzustandsvektoren existieren, aus denen man die komplexen abgeschlossenen Unterräume konstruieren kann, die man dann mit der Proposition assoziiert. D.h. wir sind primär nur an Propositionen interessiet, die man mit abgeschlossenen Unterräumen des $L^2(\mu)$ assoziieren kann. Begründet wird das damit, dass wir am Ende dieser Arbeit ein Theorem beweisen wollen, welches sich auf die Zustandsvektorräume eines zusammengesetzten Systems und seiner Teilsysteme bezieht. Demzufolge interessieren uns nur Propositionensysteme, welche Informationen über die beteiligten Zustandsvektorräume enthalten.\\

Betrachten wir wieder ein kleines Beispiel, um uns vertraut mit den eben eingeführten Konzept zu machen:

\begin{example}
Betrachten wir wieder den eindimensionalen quantenmechanischen harmonischen Oszillator aus Beispiel 4.1. Wie wir dort gesehen haben waren die Eigenzustände $\tilde{\psi}$ des Systems gegeben durch \eqref{eq:Eig}. Die Propositionen, die sich auf einen konkreten Eigenzustand des Systems beziehen, wie z.B. \textit{„Das System befindet sich im Zustand $\tilde{\psi_i}$“}, sind dann gegeben durch die komplexen Unterräume
\begin{align}
    span_\mathbb{C}(\{\tilde{\psi_i}\}),
\end{align}
während z.B. Propositionen wie \textit{„Das System befindet sich im Eigenzustand $\tilde{\psi_i}$ oder im Eigenzustand $\tilde{\psi_i}$“} durch den komplexen Unterraum
\begin{align}
    span_\mathbb{C}(\{\tilde{\psi_i, \psi_j}\})
\end{align}
gegeben sind. Natürlich gibt es auch Propositionen, die sich nicht direkt auf konkrete Eigenzustände des Hamiltonoperators beziehen. So ist die Proposition \textit{„Das System befindet sich zu $75\%$ im Eigenzustand $\tilde{\psi}_0$ und zu $25\%$ im Eigenzustand $\tilde{\psi}_1$“} gegeben durch 
\begin{align}
    span_\mathbb{C}(\{\tilde{\Psi} = \sqrt{3/4}\tilde{\psi}_0 + \sqrt{1/4}\tilde{\psi}_1\}).
\end{align}
Es ist dabei aber zu beachten, dass letztere Proposition zu keinem Ja/Nein-Experiment bezüglich der Observable Energie korrespondiert, da nach einer Messung des Systems nicht sofort entschieden werden kann, ob die Aussage mit \textit{Ja} oder \textit{Nein} beantwortet werden kann.
\end{example}

Um kurz noch den Unterschied zwischen Propositionen, die als abgeschlossene komplexe Unterräume verstanden werden können, und Propositionen, die nicht als solche aufgefasst werden können, etwas klarer zu gestalten, können wir erneut den Hamiltonoperator des eindimensionalen quantenmechanischen harmonischen Oszillators und den Ortoperator $\hat{x}$ betrachten: In Beispiel 4.2 haben wir gesehen, wie wir Propositionen an die Observable Energie, die beschrieben wird über den Hamiltonoperator $\hat{H}$, assoziieren können mit abgeschlossenen komplexen Unterräumen in $L^2(\mu)$, sofern die Energiemesswerte in der Proposition sich auf Messwerte beziehen, die im Punktspektrum von $\hat{H}$ liegen. Betrachten wir nun den Ortoperator $\hat{x}$. Im eindimensionalen ist dieser definiert durch 
\begin{align}
    \hat{x} \psi(x) = x \cdot \psi(x), \qquad \psi \in L^2(\mu).
\end{align}
Wegen der Tatsache, dass der $L^2(\mu)$ ein Quotientenvektorraum ist, dessen Elemente Äquivalenzklassen darstellen, die all diejenigen Funktionen enthalten, die sich nur auf einer Nullmenge voneinander unterscheiden, folgt, dass der Ortoperator $\hat{x}$ keine Eigenvektoren und damit keine Eigenwerte besitzt. D.h. das Punktspektrum von $\hat{x}$ ist leer! Daraus folgt, dass alle möglichen Ortsmesswerte im kontinuierlichen Spektrum von $\hat{x}$ liegen. Es existiert also in $L^2(\mu)$ kein Zustandsvektor, der das System beschreiben würde, wenn bei Messung des Ortes mit $100\%$ Wahrscheinlichkeit der Ort x gemessen wird. D.h., dass nach obiger Philosophie kein abgeschlossener Unterraum in $L^2(\mu)$ existiert, der mit der Proposition \textit{„Das Teilchen befindet sich am Ort x“} assoziiert werden kann. Folglich wird eine derartige Proposition nicht durch die Menge $\mathcal{L}$ erfasst, weshalb wir derartige Propositionen aus unserer Betrachtung ausschließen werden. Immerhin enthält eine solche Proposition keine Informationen über den Zustandsraum $L^2(\mu)$ im obigem Sinne.

Für den ein oder anderen mag nun $\mathcal{L}$ als spezielles Propositionensystem an das System $\Sigma$ unbefriedigend erscheinen, da viele Propositionen, die z.B. Aussagen über den Impuls oder den Ort des Teilchens in $\Sigma$ machen, nicht von diesem erfasst werden. Im Abschnitt 6 wird daher eine Idee diskutiert, Propositionen in dieses Bild mit aufzunehmen, welche sich nicht auf Messwerte beziehen, die im kontinuierlichen Spektrum liegen.

Immer wenn wir im Folgenden von logischen Aussagen oder der Menge aller Propositionen sprechen, beziehen wir uns dabei immer nur auf die Menge aller logischen Aussagen an das System $\Sigma$, welche mit einem abgeschlossenen komplexen Unterraum im obigem Sinne assoziiert werden können!\\

Genau wie im klassischen Fall müssen wir nun eine algebraische Struktur auf der Menge $\mathcal{L}$ definieren, mit der wir die logischen Operationen auf der Menge der logischen Aussagen auf die Menge $\mathcal{L}$ übertragen können. Eine naheliegende Möglichkeit ist die Folgende: Seien $a$ und $b$ logische Aussagen an $\Sigma$ und $p$ derjenige abgeschlossenen Unterraum, der der Aussage $a$ und $q$ derjenige abgeschlossene Unterraum, der der Aussage $b$ zugeordnet wird. Dann ordnen wir der Aussage $a \wedge b$ den komplexen Unterraum $p \cap q$ zu. Man beachte, dass der Schnitt zweier abgeschlossener Unterräume wieder ein abgeschlossener Unterraum ist. Der Aussage $a \vee b$ ordnen wir hingegen den topologischen Abschluss vom Unterraum $span_\mathbb{C}(p \cup q)$ zu, da die lineare Hülle zweier abgeschlossener Unterräume zwar wieder ein Unterraum ist, dieser aber nicht notwendigerweise abgeschlossen sein muss. Zu beachten ist, dass im Gegensatz zum obigen klassischen Logiksystem, die Aussage $a \vee b$ für ein quantenmechanisches System $\Sigma$ in einem gewissen Sinne wahr sein kann, auch wenn $a$ und $b$ jeweils nicht auf $\Sigma$ zutreffen. Etwas schwieriger ist die Frage nach dem abgeschlossenen Unterraum, der der Aussage $\neg a$ zugeordnet werden soll. Mittels der Beobachtungen im vorigen Abschnitt liegt es jedoch nahe, der Aussage $\neg a$ den Unterraum 
\begin{align}
    p^{\perp} := \{ \phi \in L^2(\mu) : \langle \psi , \phi \rangle = 0 \: \textrm{mit} \: \psi \in p \}
\end{align}
zuzuordnen, den wir orthogonales Komplement der Menge $p$ nennen. Der Grund dafür lässt sich mittels obiger Beobachtung folgendermaßen motivieren: Angenommen wir betrachten eine Observable $\mathcal{V}$ von $\Sigma$, die über den selbstadjungierten Operator $\mathcal{O}_\mathcal{V}$ beschreiben wird und sei $\lambda_j$ der $j$-te Eigenwert von $\mathcal{O}_\mathcal{V}$ zum $j$-ten Eigenvektor $\psi_j$. Gehen wir davon aus, dass bei der Messung der Observable $\mathcal{V}$ zu $100 \%$ der Eigenwert $\lambda_j$ gemessen wird, so liegt das System bereits vor der Messung in einem Zustand vor, der vollständig durch den Zustandsvektor $\psi_j$ beschrieben wird. Gehen wir nun davon aus, dass wir das selbe System so präparieren, dass bei einer erneuten zweiten Messung von $\mathcal{V}$ zu $0 \%$ der Messwert $\lambda_j$ gemessen wird, so muss das System unmittelbar vor der Messung in einer Superposition der Form $\Psi = \sum_{i = 1, i \neq j}^{\infty} c_i \psi_i$, mit $c_i \in \mathbb{C}$, vorgelegen haben. Aus dem vorherigen Abschnitt wissen wir, dass $\langle \psi_i , \psi_j \rangle = 0$ für alle $i \in \mathbb{N} \setminus \{j\}$ und damit auch $\langle \psi_i , \Psi \rangle = 0$ ist. In diesem Sinne liegt der Zustandsvektor bei dieser zweiten Messung, die als 'negierte erste Messung' verstanden werden kann, im orthogonalen Komplement von $p = span_\mathbb{C}(\{\psi_j\})$. Mittels der Stetigkeit des Skalarproduktes von $L^2(\mu)$ lässt sich nun noch leicht nachweisen, dass $p^{\perp}$ ein abgeschlossener Unterraum ist, denn sei $(\phi_i)_{i \in \mathbb{N}}$ eine Cauchy-Folge in $p^{\perp}$, die für $i \rightarrow \infty$ gegen $\phi \in L^2(\mu)$ konvergiert und $\psi \in p$, so gilt 
\begin{align}
    \bigl\langle \psi , \phi \bigr\rangle = \bigl\langle \psi , lim_{i \rightarrow \infty} \phi_i \bigr\rangle = lim_{i \rightarrow \infty} \bigl\langle \psi , \phi_i \bigr\rangle
\end{align}
Da $\langle \psi , \phi_i \rangle = 0$ für alle $i \in \mathbb{N}$ ist, folgt, dass $\phi \in p^{\perp}$. Damit entspricht $p^{\perp}$, genau wie wir es auch erwarten würden, nach unserer bisherigen Philosophie wieder einem Ja/Nein-Experiment der Observable $\mathcal{O}_{\mathcal{V}}$ an das System $\Sigma$ und damit insbesondere einem Element in $\mathcal{L}$. 

Zusammengefasst wollen wir damit das Quadrupel $(\mathcal{L}, \sqcup, \sqcap, \cdot^{\perp})$ mit $p \sqcap q = p \cap q$ und $p \sqcup q = \overline{span_\mathbb{C}(p \cup q)}$ mit $p,q \in \mathcal{L}$ als Quantenlogik bezeichnen. 

Falls die Menge $\mathcal{L}$ die Menge aller abgeschlossenen Unterräume eines allgemeinen Hilbertraumes $\mathcal{H}$ meint, nennt man $(\mathcal{L}, \sqcup, \sqcap, \cdot^{\perp})$ manchmal auch einfach Hilbert-Verband (des Hilbertraumes $\mathcal{H}$).  Genauso wie im klassischen Fall ist $(\mathcal{L}, \sqcup, \sqcap, \cdot^{\perp})$ ein Verband, der sich aber, wie wir im nächsten Abschnitt sehen werden, hinsichtlich seiner Eigenschaften von $(\mathcal{P}(\Omega), \cup, \cap, \cdot^C)$ unterscheiden wird. 


\subsection{Verbandsstruktur der Quantenlogik}

Genau wie im klassischen Fall ist ohne Weiteres sofort ersichtlich, dass es sich bei $(\mathcal{L}, \sqcup, \sqcap)$ mit $p \sqcap q = p \cap q$ und $p \sqcup q = \overline{span_\mathbb{C}(p \cup q)}$ um einen mathematischen Verband handelt. Denn seien $p,q \in \mathcal{L}$ Propositionen, so gilt
\begin{align}
    p \sqcup (p  \sqcap q) = p,
\end{align}
da $p$ abgeschlossen und $p  \sqcap q \subseteq p$ ist. Ferner gilt 
\begin{align}
    p \sqcap (p  \sqcup q) = p,
\end{align}
da $p \subseteq p  \sqcup q$ ist. Die Kommutativität und Assoziativität der Operationen $\sqcap$ und $\sqcup$ ist klar.

Genauso wie für das klassische System ergibt sich, dass der Verband $(\mathcal{L}, \sqcup, \sqcap)$ ein vollständiger atomarer Verband ist. Dabei ist die Relation $\leq$ wieder gegeben durch die Teilmengenrelation $\subseteq$ und die Atome sind demzufolge die eindimensionalen komplexen Unterräume des Hilbertraumes. 

Man sieht nun aber relativ schnell, dass es sich bei $(\mathcal{L}, \sqcup, \sqcap)$, im Gegensatz zu $(\mathcal{P}(\Omega), \cup, \cap)$, nicht um einen distributiven Verband handeln kann! Seien nämlich $p_1 = span_\mathbb{C}(\{\psi_1\})$ und $p_2 = span_\mathbb{C}(\{\psi_2\})$ zwei eindimensionale komplexe Unterräume eines Hilbertraumes $\mathcal{H}$, die sich nur im Nullvektor schneiden, und sei $p_3 = span_\mathbb{C}(\{\psi_2, \psi_3\}) \subseteq \mathcal{H}$ ein zweidimensionaler komplexer Unterraum. Per Konstruktion gilt $p_2 \subseteq p_3$. Es gilt \begin{align}
    p_3 \sqcup (p_1 \sqcap p_2) = p_3 \sqcup \{0\} = \textrm{span}_\mathbb{C}(\{\psi_2, \psi_3\}).
\end{align}
Dabei ist zu beachten, dass endlichdimensionale Vektorräume immer abgeschlossen sind, weshalb der topologische Abschluss im letzten Ausdruck nicht mehr explizit auftaucht. Andererseits gilt nun aber auch 
\begin{align}
    (p_3 \sqcup p_1) \sqcap (p_1 \sqcup p_2) =& \: \textrm{span}_\mathbb{C}(\{\psi_1, \psi_2, \psi_3\}) \sqcap \textrm{span}_\mathbb{C}(\{\psi_1, \psi_2\}) \nonumber\\ =& \: \textrm{span}_\mathbb{C}(\{\psi_1, \psi_2\}).
\end{align}
Daraus folgt, dass im Allgemeinen gilt, dass $p \sqcup (q \sqcap r) \neq (p \sqcup q) \sqcap (p \sqcup r)$ für $p,q,r \in \mathcal{L}$ ist. Dadurch sehen wir, dass es sich bei Quantenlogiken im Allgemeinen nicht um distributive Verbände handelt.

Man kann die Tatsache, dass ein quantenlogischer Verband im Allgemeinen nicht distributiv ist, auch physikalisch auf Ebene der Aussagen an das physikalische System $\Sigma$, statt über die Elemente des Verbandes $\mathcal{L}$, erklären \cite{floridi1998routledge}: Wir betrachten dazu ein eindimensionales physikalisches System $\Sigma$, welches ein Teilchen beschreibt, das sich am Ort $x \in [\alpha,\beta]$ aufhält. Wir betrachten nun drei Aussagen an das System $\Sigma$:
\begin{itemize}
    \item [\textit{i)}] Die Aussage $s$ sei: \textit{„Das Teilchen besitzt die Geschwindigkeit $v_x \in [v_1,v_2]$}. D.h., wenn man die Geschwindigkeit des Teilchens messen würde, so würde man eine Zahl aus dem Intervall $[v_1,v_2]$ erhalten, sofern die Aussage $s$ wahr ist.
    \item [\textit{ii)}] Die Aussage $u$ sei: \textit{„Das Teilchen befindet sich im Intervall $[\alpha,\gamma]$}. D.h., wenn man den Ort des Teilchens messen würde, so würde man eine Zahl aus dem Intervall $[\alpha,\gamma]$ erhalten, sofern die Aussage $u$ wahr ist. $\gamma$ bezeichne dabei die Mitte des Intervalls $[\alpha,\beta]$.
    \item [\textit{iii)}] Die Aussage $m$ sei: \textit{„Das Teilchen befindet sich im Intervall $[\gamma,\beta]$}. D.h., wenn man den Ort des Teilchens messen würde, so würde man eine Zahl aus dem Intervall $[\gamma,\beta]$ erhalten, sofern die Aussage $m$ wahr ist.
\end{itemize}
Wir betrachten die Aussage $a := s \wedge (u \vee m)$ und $b := (s \wedge u) \vee (s \wedge m)$. Würde die Distributivität gelten, wären beide Aussagen wahrheitstechnisch äquivalent. Es ist aber ein bekannter Fakt, dass man in der Quantenmechanik den Ort $x$ und die Geschwindikeit $v_x$ bzw. den Impuls $p_x = mv_x$ nicht gleichzeitig beliebig messen kann. Man bezeichnet dieses Phänomen als Orts-Impuls-Unschärfe und die folgt aus der Nichtkommutativität des Ortsoperators $\hat{x}$ und des Impulsoperators $\hat{p}$ \cite{robertson1929uncertainty}, d.h. es gilt 
\begin{align}
    [\hat{p},\hat{x}] := \hat{p}\hat{x} - \hat{x}\hat{p} \neq 0.
\end{align}
Daraus folgt, dass, wenn die Aussage $a$ wahr ist im Allgemeinen Aussage $b$ wegen der Orts-Impuls-Unschärfe nicht wahr ist und damit das Distributivgesetz im Allgemeinen nicht gilt. 

Wie im obigen Logiksystem der klassischen Mechanik, können wir uns nun um die Einführung einer Orthokomplementation $'$ auf $\mathcal{L}$ bemühen. Ein naheliegender Kandidat ist natürlich $\cdot^{\perp}$, da wir $\cdot^{\perp}$ ja bereits als Modell der Negation eingeführt haben. Formal müssen wir natürlich überprüfen, ob es sich bei $\cdot^{\perp}$ tatsächlich um eine Orthokomplementation handelt. Dazu stellen wir zuerst fest, dass der sogenannte Projektionssatz \cite{meise2013einfuhrung} sicherstellt, dass in einem Hilbertraum $\mathcal{H}$ für einen abgeschlossenen Unterraum $U \subseteq \mathcal{H}$ ein eindeutiger Orthogonalprojektor auf den Unterraum $U$ existiert. Bei einem Orthogonalprojektor handelt es sich dabei um einen Operator $P_U : \mathcal{H} \rightarrow \mathcal{H}$ mit den Eigenschaften, dass 
\begin{itemize}
    \item [\textit{i)}] das Bild des Operators $U$ ist, d.h. $\textrm{im}(P_U) = U$,
    \item [\textit{ii)}] der Kern des Operators dem orthogonalem Komplement von $U$ entspricht, d.h. $\textrm{ker}(P_U) = U^{\perp}$.
\end{itemize}
Aus diesem Resultat folgt, dass es zu jedem Vektor $x \in \mathcal{H}$ eindeutige Vektoren $u_1 \in U$ und $u_2 \in U^{\perp}$ gibt, so dass $x = u_1 + u_2$ ist. In diesem Sinne bilden $U$ und $U^{\perp}$ eine orthogonale Zerlegung des Hilbertraumes $\mathcal{H}$ und es gilt damit auch, dass $\mathcal{H} = U \sqcup U^{\perp}$ ist. Unmittelbar aus der Definition des orthogonalen Komplements ergibt sich, dass $U \sqcap U^{\perp} = \{0\}$ ist und das für $V \subseteq U$ folgt, dass $U^{\perp} \subseteq V^{\perp}$. 

Zum Schluss stellen wir fest, dass für $U \subseteq \mathcal{H}$ gilt, dass $U = U^{\perp \perp}$ ist, sofern $U$ ein abgeschlossener Unterraum ist. Die Richtung $U \subseteq U^{\perp \perp}$ ist trivial, da per Definition $U^{\perp \perp}$ abgeschlossen ist und jedes Element aus $U$ auch in $U^{\perp \perp}$ enthalten sein muss. Für die Richtung $U^{\perp \perp}$ zerlegen wir den Vektor $x \in U^{\perp \perp}$ wie oben, d.h. $x = u_1 + u_2$ mit $u_1 \in U$ und $u_2 \in U^{\perp}$. Es folgt $\|u_2\|^2 = \langle u_2 , u_2 \rangle = \langle x - u_1 , u_2 \rangle = \langle x , u_2 \rangle - \langle u_1, u_2 \rangle = 0$, woraus folgt, dass $x = u_1 \in U$. Damit ist verfiziert, dass es sich bei $^{\perp}$ auf $\mathcal{L}$ um eine Orthokomplementation handelt.

Nun wollen wir der Vollständigkeitshalber noch wichtige Eigenschaften von $(\mathcal{L}, \sqcup, \sqcap, \cdot^{\perp})$ aufzählen, da in der Literatur eine Quantenlogik oft abstrakt als Verband eingeführt wird, der ganz speziellen Eigenschaften genügt. 

Eine erste Beobachtung, die sich aus den obigen Ergebnissen ergibt, ist, das trotz der Tatsache, dass das Distributivgesetz in der Quantenlogik nicht gilt, zumindest eine Abschwächung des Distributivgesetztes, das sogenannte modulare Gesetz, in $(\mathcal{L}, \sqcup, \sqcap, \cdot^{\perp})$ 
\begin{align}
    p \subseteq q : \: \: \: q = p \sqcup (q \sqcap p^{\perp}) \qquad p,q \in \mathcal{L}
\end{align}
gilt. Man nennt Verbände, die orthokomplementiert sind und für die das modulare Gesetz gilt, orthomodulare Verbände.  

Desweiteren lässt sich zeigen, dass es sich bei $(\mathcal{L}, \sqcup, \sqcap, \cdot^{\perp})$ um einen schwach modularen und irreduziblen Verband handelt, der dem Überdeckungsgesetz genügt (Beweis siehe \cite{piron1976foundations} und \cite{piron1964axiomatique}). Dabei heißt schwach modular, dass Unterverbände von $\mathcal{L}$ generiert durch $\{a,a^{\perp},b,b^{\perp}\}$ mit $a \subseteq b$ distributiv sind. Mit dem Überdeckungsgesetz ist folgende Eigenschaft gemeint: Sei $p \in \mathcal{L}$ ein Atom und $a,b \in \mathcal{L}$ so gewählt, dass $a \sqcap p = \{0\}$ und $a \subseteq b \subseteq a \sqcup p$ ist, so folgt $b = a$ oder $b = a \sqcup p$. Um die Eigenschaft der Irreduzibilität zu verstehen benötigen wir noch die folgende Definition:

\begin{definition}
Seien $a,b \in \mathcal{L}$. $a$ und $b$ heißen kompatibel miteinander, falls der Unterverband, der von $\{a,a^{\perp},b,b^{\perp}\}$ generiert wird, ein distributiver ist. Wir schreiben in diesem Fall $a \leftrightarrow b$.
\end{definition}

Der Verband $(\mathcal{L}, \sqcup, \sqcap, \cdot^{\perp})$ heißt nun in dem Sinne irreduzibel, das nur die Elemente $\{0\} = \mathbf{0}$ und $L^2(\mu) = \mathbf{1}$ kompatibel mit allen anderen Elementen aus $\mathcal{L}$ sind.  

Zusammengefasst ist damit eine Quantenlogik $(\mathcal{L}, \sqcup, \sqcap, \cdot^{\perp})$ gegeben durch einen vollständigen, orthomodularen, irreduziblen, schwach modularen, atomaren Verband, für den das Überdeckungsgesetz gilt. Wir nennen $(\mathcal{L}, \sqcup, \sqcap, \cdot^{\perp})$ oder kurz $\mathcal{L}$ (quantenmechanisches) Propositionensystem.Auf diese Art werden für gewöhnlich in der Literatur Quantenlogiken abstrakt definiert.\\

Am Ende dieses Abschnittes wollen wir nun noch kurz folgendes Lemma beweisen, mit deren Hilfe wir später leichter Propositionen manipulieren können:

\begin{lemma}
Seien $p_i \in \mathcal{L}$ für alle $i$ aus einer beliebigen Indexmenge $I$. Dann gilt, dass
\begin{itemize}
    \item [\textit{i)}] $( \cap_{i \in I} p^{\perp}_i ) = ( \cup_{i \in I} p_i )^{\perp}$
    \item [\textit{ii)}] $\overline{\textrm{span}_\mathbb{C}( \cup_{i \in I} p^{\perp}_i )} = ( \sqcup_{i \in I} p^{\perp}_i) = ( \cap_{i \in I} p_i )^{\perp}$
\end{itemize}
\end{lemma}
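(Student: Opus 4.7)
The plan is to prove part i) directly from the definition of the orthogonal complement, and then obtain part ii) as a formal consequence of part i) together with the involution property $U^{\perp\perp} = U$ for closed subspaces, which was already verified in this section.

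For part i), I would establish the two inclusions separately. If $\phi \in \bigcap_{i \in I} p_i^{\perp}$, then for every $i \in I$ and every $\psi \in p_i$ we have $\langle \psi, \phi \rangle = 0$; since every element of $\bigcup_{i \in I} p_i$ lies in some $p_i$, this says $\phi \in (\bigcup_{i \in I} p_i)^{\perp}$. Conversely, if $\phi \in (\bigcup_{i \in I} p_i)^{\perp}$, then for any fixed $j \in I$ and any $\psi \in p_j \subseteq \bigcup_{i \in I} p_i$ we get $\langle \psi, \phi \rangle = 0$, hence $\phi \in p_j^{\perp}$ for every $j$, i.e.\ $\phi \in \bigcap_{i \in I} p_i^{\perp}$.

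For part ii), the first equality is nothing but the definition of $\sqcup$ in the Hilbert lattice $\mathcal{L}$ (join = closure of the complex span of the union). For the second equality, I would apply part i) to the family $(p_i^{\perp})_{i \in I}$ to obtain
\begin{align*}
    \bigcap_{i \in I} (p_i^{\perp})^{\perp} = \Bigl(\bigcup_{i \in I} p_i^{\perp}\Bigr)^{\perp}.
\end{align*}
Since each $p_i$ is closed, the identity $p_i^{\perp\perp} = p_i$ from the previous paragraph of the text gives $\bigcap_{i \in I} p_i = (\bigcup_{i \in I} p_i^{\perp})^{\perp}$. Taking orthogonal complements of both sides and using $S^{\perp\perp} = \overline{\textrm{span}_{\mathbb{C}}(S)}$ for an arbitrary subset $S \subseteq L^2(\mu)$ then yields
\begin{align*}
    \Bigl(\bigcap_{i \in I} p_i\Bigr)^{\perp} = \overline{\textrm{span}_{\mathbb{C}}\Bigl(\bigcup_{i \in I} p_i^{\perp}\Bigr)} = \bigsqcup_{i \in I} p_i^{\perp}.
\end{align*}

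The only non-formal ingredient is the identity $S^{\perp\perp} = \overline{\textrm{span}_{\mathbb{C}}(S)}$ for a general (not necessarily closed, not necessarily linear) subset $S$. I expect this to be the main obstacle, since the argument in the excerpt only handles the case of a closed subspace $U$. However, it is obtained by the same two-step reasoning: the inclusion $\overline{\textrm{span}_{\mathbb{C}}(S)} \subseteq S^{\perp\perp}$ follows because $S^{\perp\perp}$ is a closed subspace containing $S$, and the reverse inclusion follows by applying the projection theorem to the closed subspace $U := \overline{\textrm{span}_{\mathbb{C}}(S)}$: any $x \in S^{\perp\perp}$ decomposes as $x = u_1 + u_2$ with $u_1 \in U$ and $u_2 \in U^{\perp} = S^{\perp}$, and then $\|u_2\|^2 = \langle x - u_1, u_2 \rangle = \langle x, u_2 \rangle - \langle u_1, u_2 \rangle = 0$ exactly as in the excerpt, forcing $x = u_1 \in U$. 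Once this is in place, the lemma reduces to bookkeeping.
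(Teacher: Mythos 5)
Your part \textit{i)} is the paper's argument almost verbatim: a chain of equivalences unwinding the definition of the orthogonal complement, so there is nothing to add there. Your part \textit{ii)}, however, takes a genuinely different route. The paper proves both inclusions by direct element chasing; you instead apply part \textit{i)} to the family $(p_i^{\perp})_{i\in I}$, use $p_i^{\perp\perp}=p_i$ for closed subspaces, and then take orthogonal complements once more via the identity $S^{\perp\perp}=\overline{\textrm{span}_{\mathbb{C}}(S)}$ for an arbitrary subset $S$. This is not only cleaner but also sounder: the paper's chain for the inclusion $(\cap_{i\in I} p_i)^{\perp}\subseteq\overline{\textrm{span}_{\mathbb{C}}(\cup_{i\in I} p_i^{\perp})}$ starts from the non sequitur $\phi\in(\cap_{i\in I} p_i)^{\perp}\Rightarrow\phi\in(\cap_{i\in I} p_i)$ and then asserts that orthogonality to the intersection forces $\phi\perp p_j$ for some single $j$, which is false in general (take $p_1=\langle e_1\rangle$, $p_2=\langle e_2\rangle$ in $\mathbb{C}^2$ and $\phi=e_1+e_2$: then $\phi\in(p_1\sqcap p_2)^{\perp}=\mathbb{C}^2$ but $\phi$ is orthogonal to neither line); your reduction to part \textit{i)} bypasses this entirely. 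The one step you should make explicit is the identification $(\overline{\textrm{span}_{\mathbb{C}}(S)})^{\perp}=S^{\perp}$ that your projection-theorem argument silently uses when it writes $U^{\perp}=S^{\perp}$: it follows from sesquilinearity and continuity of the inner product, which is exactly the computation the paper already carries out when verifying that $p^{\perp}$ is closed. With that line added, your proof is complete and, on the second half, preferable to the one in the text.
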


\begin{proof}
Um die beiden Mengengleichheiten zu zeigen, müssen wir jeweils zeigen, dass die eine Menge in der anderen enthalten ist und umgekehrt.
\begin{itemize}
    \item [\textit{i)}] Sei $\phi \in ( \cup_{i \in I} p_i )^{\perp}$. Dann gilt
    \begin{align}
    \phi \in ( \cup_{i \in I} p_i)^{\perp} \Longleftrightarrow&  \: \: \phi \perp  ( \cup_{i \in I} p_i ) \nonumber\\ \Longleftrightarrow&  \: \: \phi \perp p_i \: \: \: \forall i \in I \nonumber\\ \Longleftrightarrow& \: \: \phi \in p^{\perp}_i \: \: \: \forall i \in I \nonumber\\ \Longleftrightarrow& \: \: \phi \in ( \cap_{i \in I} p^{\perp}_i )
    \end{align}
    \item [\textit{ii)}] Sei $\phi \in ( \cup_{i \in I} p^{\perp}_i )$. Dann gilt
    \begin{align}
    \phi \in ( \cup_{i \in I} p^{\perp}_i ) \Longrightarrow& \: \: \exists j \in I \: \: \textrm{mit} \: \: \phi \in p^{\perp}_j \nonumber\\ \Longrightarrow& \: \: \phi \perp p_j \nonumber\\ \Longrightarrow& \: \: \phi \perp ( \cap_{i \in I} p_i ) \nonumber\\ \Longrightarrow& \: \: \phi \in ( \cap_{i \in I} p_i )^{\perp}
    \end{align}
    Daraus folgt $( \cup_{i \in I} p^{\perp}_i ) \subseteq ( \cap_{i \in I} p_i )^{\perp}$ und da $( \cap_{i \in I} p_i )^{\perp}$ ein Vektorraum ist folgt $\textrm{span}_\mathbb{C}( \cup_{i \in I} p^{\perp}_i) \subseteq ( \cap_{i \in I} p_i)^{\perp}$. Da weiter $( \cap_{i \in I} p_i)^{\perp}$ ein abgeschlossener Unterraum ist, folgt schließlich $\overline{\textrm{span}_\mathbb{C}( \cup_{i \in I} p^{\perp}_i )} \subseteq ( \cap_{i \in I} p_i )^{\perp}$. Es bleibt die Umkehrung zu zeigen:
    \begin{align}
    \phi \in ( \cap_{i \in I} p_i )^{\perp} \Longrightarrow& \: \: \phi \in ( \cap_{i \in I} p_i) \nonumber\\ \Longrightarrow& \: \: \exists j \in I \: \: \textrm{mit} \: \: \phi \perp p_j \nonumber\\ \Longrightarrow& \: \: \phi \in p^{\perp}_j \nonumber\\ \Longrightarrow& \: \: \phi \in ( \cup_{i \in I} p^{\perp}_i) \nonumber\\ \Longrightarrow& \: \: \phi \in \textrm{span}_\mathbb{C}( \cup_{i \in I} p^{\perp}_i) \nonumber\\ \Longrightarrow& \: \: \phi \in \overline{\textrm{span}_\mathbb{C}( \cup_{i \in I} p^{\perp}_i)}
    \end{align}
    Daraus folgt $( \cap_{i \in I} p_i )^{\perp} \subseteq \overline{\textrm{span}_\mathbb{C}( \cup_{i \in I} p^{\perp}_i)}$ und damit obige Gleichheit.
\end{itemize}
\end{proof}


\subsection{Wahrheitswerte in der Quantenlogik}

Wir betrachten nun ein quantenmechanisches System $\Sigma$. Dieses präparieren wir so, dass das System durch den normierten Zustandsvektor $\psi \in L^2(\mu)$ beschrieben wird. Wir interessieren uns nun für Aussagen der Form $q = span_\mathbb{C}(\{ \phi_i \: : \: i \in I\})$, wobei die $\phi_i$ Eigenzustände einer beliebigen aber festen Observable $\mathcal{V}$ sind und $I$ eine beliebige Indexmenge ist. Die Aussagen die wir betrachten stellen also Ja/Nein-Experiemente dar, könnten nach einer Messung von $\mathcal{V}$ also eindeutig mit \textit{Ja} oder \textit{Nein} beantwortet werden. Wir wollen nun jedem solchem Element $q \in \mathcal{L}$ einen Wahrheitswert bezüglich unseres im Zustand $\psi$ präparierten Systems zuordnen. Im Gegensatz zu unseren vorherigen Ausführungen wollen wir nun also in der Lage sein, den Aussagen $q$ an das präparierte System einen Wahrheitswert zuzuordnen, ohne das wir eine konkrete Messung am System durchführen. Anders ausgedrückt wollen wir also der Aussage $q$ einen Wahrheitswert zuordnen, auch wenn das System $\Sigma$ sich zum jetzigen Zeitpunkt in keinem Eigenzustand befindet. 

Wir erklären dazu zu jedem solchem $q \in \mathcal{L}$ einen Orthogonalprojektor $P_q$, der jeden Vektor $\phi \in L^2(\mu)$ auf den abgeschlossenen Unterraum $q$ projeziert. Das heißt es gilt
\begin{itemize}
\item [\textit{i)}] $P_q: L^2(\mu) \longrightarrow L^2(\mu),$
    \item [\textit{ii)}] $im(P_q) = q,$
    \item [\textit{iii)}] $ker(P) = q^{\perp}.$
\end{itemize}
Wir stellen uns nun die Frage, was passiert, wenn wir den zu einem Element $q \in \mathcal{L}$ zugehörigen orthogonalen Operator $P_q$ auf den Zustand $\psi$, in dem das System $\Sigma$ präpariert ist, anwenden. 

Ein Fall der eintreten kann ist, dass $\psi$ ein Eigenvektor von $P_q$ zum Eigenwert $0$ ist,
\begin{align}
    P_q(\psi) = 0 \cdot \psi = 0,
\end{align}
d.h. $\psi \in q^{\perp}$. In diesem Fall ist die Wahrscheinlichkeit dafür, dass man einen zum Eigenvektor $\phi_i \in q$ gehörenden Eigenwert $\lambda_i$ bei einer Messung von $\mathcal{V}$ am System $\Sigma$ misst, gleich Null. Aus diesem Grund kann man in einem derartigen Fall der Aussage $q$ den Wahrheitswert $0$ bzw. \textit{falsch} zuordnen. 

Ein weiterer Fall der eintreten kann ist, dass $\psi$ ein Eigenvektor von $P_q$ zum Eigenwert $1$ ist,
\begin{align}
    P_q(\psi) = 1 \cdot \psi = \psi,
\end{align}
d.h. $\psi \in q$. In diesem Fall ist die Wahrscheinlichkeit dafür, dass man einen zum Eigenvektor $\phi_i \in q$ gehörenden Eigenwert $\lambda_i$ bei einer Messung von $\mathcal{V}$ am System $\Sigma$ misst, gleich Eins. Aus diesem Grund kann man in einem derartigen Fall der Aussage $q$ den Wahrheitswert $1$ bzw. \textit{wahr} zuordnen.

In den beiden eben besprochenen Fällen lässt sich also der Wahrheitswert der Aussage $q$ bzgl. des in dem Zustand $\psi$ präparierten Systems $\Sigma$ in den beiden Eigenwerten des zum Element $q$ gehörigen orthonalen Projektors codieren. In diesem Fall lässt sich die Aussage als eindeutig \textit{wahr} oder \textit{falsch} bezeichnen.

Was passiert nun aber, wenn $\psi$ kein Eigenvektor von $P_q$ ist? In diesem Fall gilt
\begin{align}
    P_q(\psi) = \theta,
\end{align}
mit $\theta \neq \psi$ und $\theta \neq 0$. Es folgt, dass 
\begin{align}
    \langle P_q(\psi),P_q(\psi) \rangle = \langle \theta,\theta \rangle = g < 1
\end{align}
mit $g > 0$ ist, d.h. die Projektionswahrscheinlichkeit von $\psi$ auf den abgeschlossenen Unterraum $q$ ist kleiner als Eins und ungleich Null. Das wiederum bedeutet, dass mit einer gewissen Wahrhscheinlichkeit, die der Projektionswahrscheinlichkeit entspricht, die Wellenfunktion $\psi$ bei Messung des Systems $\Sigma$ in einen Eigenzustand aus $q$ kollabiert. In diesem Sinne können wir in einem derartigen Fall der Aussage $q$ als Wahrheitswert die Projektionswahrscheinlichkeit als Zahl zwischen $0$ und $1$ zuordnen. Diese Zahl liefert uns eine gewisse Gewissheit darüber, mit welcher Wahrscheinlichkeit die Aussage $q$ \textit{wahr} für das präparierte System $\Sigma$ ist. 

In diesem Sinne kann man die Quantenlogik daher als eine Art mehrwertige Logik bezüglich obiger Aussagen $q$ auffassen, in der also jeder darartigen Aussagen an das präparierte System $\Sigma$ eine Zahl aus $[0,1]$ zugeordnet wird. Die Zahl $0$ steht dabei für \textit{falsch} und die Zahl $1$ für \textit{wahr}. Die Zahlen zwischen $0$ und $1$ stellen den Grad der Gewissheit darüber da, ob eine solche Aussage für das System $\Sigma$ wahr ist.

Am Ende möchten wir wieder ein kleines Beispiel angeben:

\begin{example}
Wir beziehen uns wieder auf das Beispiel 4.1, d.h. unser physikalisches System ist wieder durch den eindimensionalen harmonischen Oszillator gegeben und die Observable die wir betrachten ist die Energie. Damit sind all die Propositionen, denen wir nun einen Wahrheitswert vor einer Messung des Systems zuordnen wollen, diejenigen komplexen Unterräume, die von den normierten Eigenvektoren $\tilde{\psi_i}$ des Hamiltonoperators $\hat{H}$ aufgespannt werden. Die orthogonalen Projektoren sind dabei gegeben durch 
\begin{align}
    P_q = \sum\limits_{i \in I} \langle \tilde{\psi_i} , \cdot \rangle \tilde{\psi_i},
\end{align}
wobei $I$ eine Indexmenge ist und $\{\tilde{\psi_i} \: : \: i \in I\}$ eine Basis von $q$. Sei $\tilde{\psi}$ wieder 
\begin{align}
    \tilde{\Psi} = \sqrt{\frac{3}{4}}\tilde{\psi}_0 + \sqrt{\frac{1}{4}}\tilde{\psi}_1,
\end{align}
und sei $q$ gegeben durch die Menge $\{\phi \: : \: \phi = \mu \tilde{\psi_0}, \mu \in \mathbb{C}\}$. Dann ist der orthogonale Projektor von $q$ 
\begin{align}
     P_q = \langle \tilde{\psi_0} , \cdot \rangle \tilde{\psi_0}
\end{align}
und es gilt 
\begin{align}
    P_q(\tilde{\psi}) =& \: \langle \tilde{\psi_0} , \tilde{\psi} \rangle \tilde{\psi_0} \nonumber\\=& \: \big\langle \tilde{\psi_0} , \sqrt{3/4}\tilde{\psi}_0 + \sqrt{1/4}\tilde{\psi}_1 \big\rangle \tilde{\psi_0} \nonumber\\ =& \: \big\langle \tilde{\psi_0} , \sqrt{3/4}\tilde{\psi}_0 \big\rangle \tilde{\psi_0} + \big\langle \tilde{\psi_0} , \sqrt{1/4}\tilde{\psi}_1 \big\rangle \tilde{\psi_0} \nonumber\\ =& \: \sqrt{3/4} \big\langle \tilde{\psi_0} , \tilde{\psi}_0 \big\rangle \tilde{\psi_0} + \sqrt{1/4} \big\langle \tilde{\psi_0} , \tilde{\psi}_1 \big\rangle \tilde{\psi_0} 
\end{align}
Da die $\tilde{\psi_i}$ eine Orthonormalbasis bilden gilt $\langle \tilde{\psi_0} , \tilde{\psi}_0 \rangle = 1$  und $\langle \tilde{\psi_0} , \tilde{\psi}_1 \rangle = 0$. Daraus folgt
\begin{align}
    P_q(\tilde{\psi}) = \sqrt{3/4} \tilde{\psi_0}
\end{align}
und 
\begin{align*}
    \langle  P_q(\tilde{\psi}) ,  P_q(\tilde{\psi}) \rangle = \frac{3}{4} \equiv 75\%
\end{align*}
\end{example}

\newpage



\section{Zusammengesetzte physikalische Systeme}

Seien $\Gamma_1$ und $\Gamma_2$ zwei physikalische Systeme, wobei vorerst nicht näher spezifiziert wird, ob es sich dabei um klassische oder quantenmechanische Systeme handelt. Im Folgenden sei nur festgelegt, dass entweder beide klassische Systeme oder beide quantenmechanische Systeme darstellen. 

Zu $\Gamma_1$ bzw. $\Gamma_2$ gehören jeweils wieder die Propositionensysteme $\mathcal{L}_1$ bzw. $\mathcal{L}_2$. Ob es sich dabei um ein klassisches oder quantenmechanisches Propositionensystem handelt, hängt nun natürlich davon ab, ob es sich bei den Systeme $\Gamma_1, \Gamma_2$ um klassische oder quantenmechanische Systeme handelt. Wir wollen nun $\Gamma_1$ und $\Gamma_2$ als die beiden Teilsysteme des Systems $\Gamma$ verstehen, d.h. $\Gamma$ stellt ein aus $\Gamma_1$ und $\Gamma_2$ zusammengesetztes System dar. 

Auch zu $\Gamma$ erklären wir ein zugehöriges Propositionensystem, dass wir mit $\mathcal{L}$ bezeichnen. Die Frage, die sich nun stellt, ist, welche physikalischen Forderungen wir an $\Gamma_1, \Gamma_2$ und $\Gamma$ stellen sollten, um eine solche Beschreibung sinnvoll zu gewährleisten. Im Folgenden orientieren wir uns an \cite{aerts1978physical} und fordern die Erfüllung folgender Bedingungen:
\begin{itemize}
    \item[\textit{1.}] Jede Eigenschaft von $\Gamma_1$ und jede Eigenschaft von $\Gamma_2$ sei eine Eigenschaft vom Gesamtsystem $\Gamma$. Mathematisch heißt das, dass wir Abbildungen 
    \begin{align}
        h_1: \mathcal{L}_1 \longrightarrow \mathcal{L} \\ h_2: \mathcal{L}_2 \longrightarrow \mathcal{L}
    \end{align}
    erklären, die anschaulich jeder Eigenschaft von $\Gamma_1$ und $\Gamma_2$, zu denen jeweils korrespondierende Propositionen in $\mathcal{L}_1$ und $\mathcal{L}_2$ existieren, eine  Eigenschaft in $\Gamma$ zuordnen, die wieder durch eine Proposition in $\mathcal{L}$ ausgedrückt wird.
    \item[\textit{2.}] Die physikalische Struktur von $\Gamma_1$ und $\Gamma_2$ sollen erhalten bleiben, wenn diese jeweils als Teile des Systems $\Gamma$ gesehen werden. Diese Forderung drücken wir dadurch aus, dass die Abbildungen $h_1$ und $h_2$ jeweils strukturerhaltend sein sollen, d.h. für $i \in \{1,2\}$ soll gelten, dass 
    \begin{itemize}
    \item[\textit{i)}] $p, q \in \mathcal{L}_i$ mit $p \leftrightarrow q \: \: \Longrightarrow \: \: h_i(p) \leftrightarrow h_i(q)$
    \item[\textit{ii)}] $(p_k)_{k \in I} \subseteq \mathcal{L}_i$ mit $I$ Indexmenge $\: \: \Longrightarrow \: \: h_i\bigl(\bigsqcup_{k \in I} p_k\bigr) = \bigsqcup_{k \in I} h_i(p_k)$
    \end{itemize}
    Daraus folgt, dass $h_i(\mathbf{0}_{\mathcal{L}_i}) = \mathbf{0}_{\mathcal{L}}$ und $h_i(p') = h_i(p)' \sqcap h_i(\mathbf{I}_{\mathcal{L}_i})$ für $i=1,2$ ist, denn
    \begin{align}
        h_i(p_i) =&\: p \nonumber\\=&\: h_i(p_i \sqcup \mathbf{0}_{\mathcal{L}_i}) \nonumber\\=&\: h_i(p_i) \sqcup h_i(\mathbf{0}_{\mathcal{L}_i}) \nonumber\\=&\: p \sqcup q, \qquad p_i \in \mathcal{L}_i, p,q \in \mathcal{L}\\ &\Longrightarrow p = p \sqcup q \qquad \forall p\in \mathcal{L}\\&\Longrightarrow\: q = h_i(\mathbf{0}_{\mathcal{L}_i}) = \mathbf{0}_\mathcal{L}
    \end{align}
    und 
    \begin{align}
        h_i(\mathbf{I}_{\mathcal{L}_i}) =&\: h_1(p_i \sqcup p_i') \nonumber\\=&\: h_i(p_i) \sqcup h_i(p_i'), \qquad \forall p_i \in \mathcal{L}_i \\\Longrightarrow (h_i(p_i))' \sqcap h_i(\mathbf{I}_{\mathcal{L}_i}) =&\: ((h_i(p_i))' \sqcap h_i(p_i)) \sqcup h_i(p_i') \nonumber\\=&\: \mathbf{0}_\mathcal{L} \sqcup h_i(p_i') \nonumber\\=&\: h_i(p_i'). \label{eq: 83}
    \end{align}
    Wir haben dabei in \eqref{eq: 83} ausgenutzt, dass $h_i(p_i') \leq h_i(p_i)'$ ist. In klassischen Logikverbänden lässt sich dann mit der Distributivität und in Quantenlogiken mit dem modularen Gesetz argumentieren. Den Ausdruck $h(\mathbf{I}_{\mathcal{L}_i})$ interpretieren wir nun als ein Ja/Nein-Experiment auf $\Gamma$, welches immer dann wahr ist, wenn das System $\Gamma_i$ existiert. Da wir die Existenz der beiden Systeme $\Gamma_1$ und $\Gamma_2$ vorrausgesetzt haben, folgt, dass $h_i(\mathbf{I}_{\mathcal{L}_i})$ für $i = 1,2$ immer wahr ist. Daraus folgt, dass $h_1(\mathbf{I}_{\mathcal{L}_1}) = h_2(\mathbf{I}_{\mathcal{L}_2}) = \mathbf{I}_\mathcal{L}$ ist, d.h. $h_1$ und $h_2$ bilden jeweils die Identitäten aus $\mathcal{L}_1$ und $\mathcal{L}_2$ auf die Identität in $\mathcal{L}$ ab. Mittels Lemma 1 und der eben gemachten Beobachtung folgt, dass $h_i(\bigsqcap_{k \in I} p_k) = \bigsqcap_{k \in I} h_i(p_k)$ ist.
    
    Eine Abbildung zwischen Propositionensystemen, die den Bedingungen \textit{i)} und \textit{ii)} genügen, nennt man c-Morphismus. Bildet ein c-Morphismus zuzätzlich die Identität auf die Identität ab, heißt er unitär. $h_1$ und $h_2$ sind beides unitäre c-Morphismen. 
    \item[\textit{3.}] Die Kopplung der Systeme $\Gamma_1$ und $\Gamma_2$ sei derart gegeben, dass kein Experiment an $\Gamma_1$ das System $\Gamma_2$ stört und umgedreht. Mathematisch formuliert: 
    \begin{align}
        p_1 \in \mathcal{L}_1, \: p_2 \in \mathcal{L}_2 \: \: \Longrightarrow \: \: h_1(p_1) \leftrightarrow h_2(p_2)
    \end{align}

    \item[\textit{4.}] Durch die Kopplung der Systeme $\Gamma_1$ und $\Gamma_2$ verlieren wir keine Information über die beiden Teilsysteme. Führen wir Messungen an $\Gamma_1$ und $\Gamma_2$ durch die uns jeweils maximale Information über diese Systeme liefern, so liefert uns das maximale Information über das System $\Gamma$. Da uns die Atome des Propositionensystems maximale Information über den Zustand des zugehörigen physikalischen System bereitstellen, lässt sich diese Forderung folgendermaßen mathematisch übersetzen: 
    
    Sei $p_1$ ein Atom in $\mathcal{L}_1$ und sei $p_2$ ein Atom in $\mathcal{L}_2$, dann ist $h_1(p_1) \sqcap h_2(p_2)$ ein Atom in $\mathcal{L}$.
\end{itemize}


\subsection{Zusammengesetzte klassische Systeme}

Im folgenden seien die Systeme $\Gamma_1 = \sigma_1$, $\Gamma_2 = \sigma_2$ und $\Gamma = \sigma$ jeweils klassischer Natur. Jedes System besitzt dabei jeweils seinen ganz eigenen Phasenraum: Zu $\sigma_1$ gehört der Phasenraum $\Omega_1$, zu $\sigma_2$ gehört $\Omega_2$ und zu $\sigma$ gehört $\Omega$. Schließlich konstruieren wir wie oben zu den gegeben Phasenräumen die jeweils zugehörigen Propositionensysteme $\mathcal{P}(\Omega_1)$, $\mathcal{P}(\Omega_2)$ und $\mathcal{P}(\Omega)$.

\begin{theorem}
Existieren zwei unitäre c-Morphismen 
\begin{align}
    h_1: \mathcal{P}(\Omega_1) \longrightarrow \mathcal{P}(\Omega), \\ h_2: \mathcal{P}(\Omega_2) \longrightarrow \mathcal{P}(\Omega),
\end{align}
sodass für Atome $p_1$ aus $\mathcal{P}(\Omega_1)$ und für Atome $p_2$ aus $\mathcal{P}(\Omega_2)$ gilt, dass der Ausdruck $h_1(p_1) \cap h_2(p_2) \in \mathcal{P}(\Omega)$ wieder ein Atom ist, so ist $\mathcal{P}(\Omega)$ isomorph zu $\mathcal{P}(\Omega_1 \times \Omega_2)$. Dabei bezeichnet $\Omega_1 \times \Omega_2$ das kartesische Produkt zwischen den Phasenräumen $\Omega_1$ und $\Omega_2$.
\end{theorem}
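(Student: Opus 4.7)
Der Plan ist, aus den beiden unitären c-Morphismen $h_1$ und $h_2$ eine Bijektion $\Phi: \Omega_1 \times \Omega_2 \longrightarrow \Omega$ auf Ebene der Atome zu konstruieren und daraus den gewünschten Verbandsisomorphismus herzuleiten. Die Atome von $\mathcal{P}(\Omega_i)$ bzw.\ $\mathcal{P}(\Omega)$ sind genau die Singletons. Nach Voraussetzung ist für jedes Paar $(\omega_1, \omega_2) \in \Omega_1 \times \Omega_2$ die Menge $h_1(\{\omega_1\}) \cap h_2(\{\omega_2\})$ ein Atom in $\mathcal{P}(\Omega)$, also von der Form $\{\Phi(\omega_1, \omega_2)\}$. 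Hierdurch ist eine wohldefinierte Abbildung $\Phi$ gegeben.

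Im zweiten Schritt würde ich Injektivität und Surjektivität von $\Phi$ nachweisen. Für die Injektivität seien $(\omega_1,\omega_2) \neq (\omega_1',\omega_2')$, o.\,B.\,d.\,A.\ $\omega_1 \neq \omega_1'$. Dann ist $\{\omega_1\} \cap \{\omega_1'\} = \mathbf{0}_{\mathcal{L}_1}$, und da c-Morphismen gemäß der im Abschnitt~5 getroffenen Beobachtung (basierend auf Lemma~1 und der Orthokomplement-Erhaltung unitärer c-Morphismen) Schnitte respektieren, folgt
\begin{align*}
    h_1(\{\omega_1\}) \cap h_1(\{\omega_1'\}) = h_1(\mathbf{0}_{\mathcal{L}_1}) = \mathbf{0}_{\mathcal{L}} = \emptyset,
\end{align*}
also $\Phi(\omega_1,\omega_2) \neq \Phi(\omega_1',\omega_2')$. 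Für die Surjektivität nutze ich $\bigcup_{\omega_i \in \Omega_i} \{\omega_i\} = \Omega_i = \mathbf{I}_{\mathcal{L}_i}$ sowie die Unitarität und Vereinigungserhaltung von $h_i$; damit ist
\begin{align*}
    \bigcup_{\omega_i \in \Omega_i} h_i(\{\omega_i\}) = h_i(\mathbf{I}_{\mathcal{L}_i}) = \mathbf{I}_{\mathcal{L}} = \Omega \qquad (i=1,2),
\end{align*}
sodass zu jedem $\omega \in \Omega$ Indizes $\omega_1, \omega_2$ mit $\omega \in h_1(\{\omega_1\}) \cap h_2(\{\omega_2\}) = \{\Phi(\omega_1, \omega_2)\}$ existieren und folglich $\omega = \Phi(\omega_1,\omega_2)$ ist.

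Aus der Bijektivität von $\Phi$ würde ich abschließend den gesuchten Verbandsisomorphismus $F: \mathcal{P}(\Omega_1 \times \Omega_2) \longrightarrow \mathcal{P}(\Omega)$ durch $F(A) := \Phi(A) = \{\Phi(x) : x \in A\}$ gewinnen. Da $\Phi$ bijektiv ist, respektiert $F$ automatisch $\cup$, $\cap$ und $\cdot^C$ und ist damit der gewünschte Isomorphismus. Die Hauptschwierigkeit erwarte ich im Injektivitätsargument, konkret in der korrekten Berufung auf die Schnitterhaltung durch c-Morphismen; der Rest des Beweises reduziert sich im Wesentlichen auf die Beobachtung, dass die Atome von $\mathcal{P}(\Omega)$ nach Voraussetzung genau die Bilder $\Phi(\omega_1, \omega_2)$ der Paare aus $\Omega_1 \times \Omega_2$ sind.
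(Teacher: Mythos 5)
Dein Vorschlag ist korrekt und folgt im Wesentlichen demselben Weg wie der Beweis in der Arbeit: Auch dort wird über die Atom-Voraussetzung eine Bijektion zwischen den Punkten von $\Omega_1 \times \Omega_2$ und den Atomen von $\mathcal{P}(\Omega)$ konstruiert, die Injektivität über $h_1(p_1 \cap q_1) = \mathbf{0}_\mathcal{L}$ und die Surjektivität über $\Omega = h_1(\mathbf{I}_{\mathcal{L}_1}) \cap h_2(\mathbf{I}_{\mathcal{L}_2}) = \bigcup_{x_1,x_2}\bigl(h_1(\{x_1\}) \cap h_2(\{x_2\})\bigr)$ gezeigt, und anschließend wird die Punktbijektion zu einem Isomorphismus der Potenzmengenverbände geliftet. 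Der einzige kosmetische Unterschied ist, dass die Arbeit den finalen Isomorphismus als Urbild-artige Abbildung $\eta(A) = \{(x_1,x_2) : \zeta(\{(x_1,x_2)\}) \subseteq A\}$ notiert, während du die direkte Bildabbildung verwendest; beides ist wegen der Bijektivität äquivalent.
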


\begin{proof}
Sei $\mathcal{A} := \{\{(x_1,x_2)\} : x_1 \in \Omega_1 \: \: \textrm{und} \: \: x_2 \in \Omega_2\}$. Offensichtlicherweise ist $\mathcal{A}$ bijektiv zu $\Omega_1 \times \Omega_2$. Desweiteren betrachten wir eine Menge \\ $\mathcal{B} := \{h_1(p_1) \cap h_2(p_2) : p_1 = \{x_1\}, \: x_2 = \{x_2\}\}$. Wir definieren die Abbildung 
\begin{align}
    \zeta : \mathcal{A} \longrightarrow \mathcal{B}
\end{align}
durch 
\begin{align}
    \mathcal{A} \ni \{(x_1, x_2)\} \longmapsto h_1(\{x_1\}) \cap h_2(\{x_2\}) \in \mathcal{B} \subseteq \mathcal{P}(\Omega).
\end{align}
Die Abbildung $\zeta$ ist injektiv, denn seien $\{x_1\} = p_1$, $\{y_1\} = q_1$, $\{x_2\} = p_2$ und $\{y_2\} = q_2$ Atome aus $\mathcal{P}(\Omega_1)$ und $\mathcal{P}(\Omega_2)$, wobei $x_1, y_1 \in \Omega_1$ und $x_2, y_2 \in \Omega_2$. Dann gilt
\begin{align}
    h_1(p_1) \cap h_2(p_2) =& \: \:  h_1(q_1) \cap h_2(q_2) \nonumber\\ =& \: \: (h_1(q_1) \cap h_2(q_2)) \cap (h_1(p_1) \cap h_2(p_2)) \nonumber\\ =& \: \: (h_1(p_1) \cap h_1(q_1)) \cap (h_2(p_2) \cap h_2(q_2)) \nonumber\\ =& \: \: h_1(p_1 \cap q_2) \cap h_2(p_2 \cap q_2)
\end{align}
Angenommen $p_1 \neq q_1$, so folgt, dass $x_1 \neq y_1$. Daraus folgt offensichtlich, dass $p_1 \cap q_1 = \{0\} = \mathbf{0}_{\Omega_1}$ ist. Da $h_1$ ein c-Morphismus ist, folgt demnach, dass $h_1(p_1 \cap q_1) = \mathbf{0}_\Omega$ ist. Daraus folgt aber, dass $h_1(p_1) \cap h_2(p_2) = \mathbf{O}_\Omega$ ist, was im Widerspruch zur Annahme steht. Demnach ist $\zeta$ injektiv. \\ Per Definition gilt ebenso, dass $\zeta$ surjektiv ist. \\ Wir machen folgende Beobachtung:
\begin{align}
    \Omega =& \: \: h_1(\Omega_1) \cap h_2(\Omega_2) \nonumber\\ =& \: \: h_1 ( \cup_{x_1 \in \Omega_1} \{x_1\}) \cap h_2 ( \cup_{x_2 \in \Omega_2} \{x_2\}) \nonumber\\ =& \: \: ( \cup_{x_1 \in \Omega_1} h_1(\{x_1\})) \cap ( \cup_{x_2 \in \Omega_2} h_2(\{x_2\})) \nonumber\\ =& \: \: \cup_{x_1 \in \Omega_1} \cup_{x_2 \in \Omega_2} ( h_1(\{x_1\}) \cap h_2(\{x_2\}) ) \nonumber\\ =& \: \: \{h_1(\{x_1\}) \cap h_2(\{x_2\}) : x_1 \in \Omega_1, \: x_2 \in \Omega_2\} \nonumber\\ =& \: \: \{\zeta(\{(x_1,x_2)\}) : (x_1, x_2) \in \Omega_1 \times \Omega_2\}
\end{align}
Aus dieser Beobachtung folgt, dass die Bildmenge von $\zeta$ den Phasenraum $\Omega$ ergibt und demnach induziert $\zeta$ eine Bijektion zwischen $\Omega$ und $\Omega_1 \times \Omega_2$. \\ Mittels der Abbildung 
\begin{align}
    \eta : \mathcal{P}(\Omega) \longrightarrow \mathcal{P}(\Omega_1 \times \Omega_2),
\end{align}
gegeben durch 
\begin{align}
    \mathcal{P}(\Omega) \ni A \longmapsto \eta(A) = \{(x_1, x_2) : \zeta(\{(x_1, x_2)\}) \subseteq A\},
\end{align}
erhalten wir schließlich mittels der Bijektivität von $\zeta$ einen Isomorphismus zwischen $\mathcal{P}(\Omega)$ und $\mathcal{P}(\Omega_1 \times \Omega_2)$.
\end{proof}

Am Ende dieses Abschnittes wollen wir uns dieses Resultat noch an einem kleinen Beispiel veranschaulichen:

\begin{example}
Betrachten wir ein System zweier geladener Teilchen der Massen $m_1 > 0$ und $m_2 > 0$ und Ladungen $q_1$ und $q_2$ im dreidimensionalen Raum, die keinen Zwangsbedingungen unterliegen. Die Lagrange-Funktion $L$ des Gesamtsystems ist dann gegeben durch 
\begin{align}
    L = \sum\limits_{i =1}^{2} \frac{1}{2}m_i \dot{x_i}^2 - \frac{1}{4 \pi \epsilon_0}\frac{q_1 q_2}{\|x_1 - x_2\|},
\end{align}
wobei $\epsilon_0$ die Permittivität des Vakuums, $x_1$ bzw. $x_2$ der Ortsvektor des ersten bzw. des zweiten Teilchens im dreidimensionalen euklidischen Raum und $\|x_1-x_2\|$ der euklidischen Abstand der beiden Teilchen ist. Die Hamilton-Funktion ergibt sich dann zu
\begin{align}
    H = \sum\limits_{i=1}^{2}\frac{p_i}{2m_i} + \frac{1}{4 \pi \epsilon_0}\frac{q_1 q_2}{\|x_1 - x_2\|},
\end{align}
wobei $p_1$ bzw. $p_2$ der Impuls des ersten bzw. des zweiten Teilchens ist. Die Hamilton-Funktion ist damit gegeben als Funktion der zwölf unabhängigen Koordinaten $x^1_{1},x^2_{1},x^3_{1},x^1_{2},x^2_{2},x^3_{2},p_{1,1},p_{2,1},p_{3,1},p_{1,2},p_{2,2},p_{3,2}$. Dabei bezeichnet $x^j_i$ die $j$-te Komponente des $i$-ten Ortsvektors und $p_{j,i}$ die $j$-te Komponente des $i$-ten Impulses für $i=1,2$. Die Dynamik des Systems wird damit in einem $12$ dimensionalen Phasenraum $\Omega$ beschrieben, welcher von diesen $12$ unabhängigen Koordinaten aufgespannt wird und sich als das kartesische Produkt der beiden Phasenräume 
\begin{align}
    \Omega_1 = \{(x^1_{1},x^2_{1},x^3_{1},p_{1,1},p_{2,1},p_{3,1}) \: : \: x^{j}_1 \in \mathbb{R}; \: p_{j,1} \in \mathbb{R} \: \forall j = 1,2,3\}
\end{align}
und 
\begin{align}
    \Omega_2 = \{(x^1_{2},x^2_{2},x^3_{2},p_{1,2},p_{2,2},p_{3,2}) \: : \: x^{j}_2 \in \mathbb{R}; \: p_{j,2} \in \mathbb{R} \: \forall j = 1,2,3\}
\end{align}
ergibt.
\end{example}


\subsection{Zusammengesetzte quantenmechanische Systeme}

In diesem Abschnitt seien die Systeme $\Gamma_1 = \Sigma_1$, $\Gamma_2 = \Sigma_2$ und $\Gamma = \Sigma$ quantenmechanischer Natur. $\Sigma_1$ und $\Sigma_2$ seien dabei nun beliebige quantenmechanische Systeme. Wie im Unterabschnitt 4.1 diskutiert ordnen wir dem System $\Sigma_1$ einen (separablen) Hilbertraum $\mathcal{H}_1$, dem System $\Sigma_2$ einen (separablen) Hilbertraum $\mathcal{H}_2$ und dem System $\Sigma$ einen (separablen) Hilbertraum $\mathcal{H}$ zu. Die konkrete Gestalt der einzelnen Hilberträume hängt jeweils von den Systemen $\Sigma_1$ und $\Sigma_2$ ab. Wir nehmen außerdem mit Blick auf den noch kommenden Abschnitt über Tensorprodukträume zusätzlich an, dass die Hilberträume $\mathcal{H}_1$ und $\mathcal{H}_2$ isometrisch isomorph sind, d.h. es existiert ein stetige, bijektive, lineare Abbildung zwischen $\mathcal{H}_1$ und $\mathcal{H}_2$, die das Skalarprodukt erhält. Genau wie im klassischen Fall ordnen wir jedem System das aus dem jeweiligen Zustandsvektorraum konstruierte (quantenmechanische) Propositionensystem zu. Dabei sei $\mathcal{L}(\mathcal{H}_1) := \mathcal{L}_1$ das Propositionensystem zu $\Sigma_1$, $\mathcal{L}(\mathcal{H}_2) := \mathcal{L}_2$ das Propositionensystem zu $\Sigma_2$ und $\mathcal{L}(\mathcal{H}) := \mathcal{L}$ das Propositionensystem zu $\Sigma$.

Ziel dieses Abschnittes wird es sein, ein zu Abschnitt 5.1 analoges Resultat für zusammengesetzte quantenmechanische Systeme abzuleiten. Wir werden dabei zeigen, dass das Propositionensystem $\mathcal{L}(\mathcal{H})$ u.a. isomorph zu $\mathcal{L}(\mathcal{H}_1 \otimes \mathcal{H}_2)$ ist. Dabei ist $\otimes$ das Tensorprodukt und $\mathcal{L}(\mathcal{H}_1 \otimes \mathcal{H}_2)$ die Menge aller komplexer, abgeschlossener Unterräume von $\mathcal{H}_1 \otimes \mathcal{H}_2$. Die Bedeutung dieses Resultats werden wir kurz am Ende des Abschnittes diskutieren.  

Im folgenden nutzen wir wichtige Resultate aus der Verbandstheorie \cite{piron1976foundations} und der Theorie der strukturerhaltenden Abbildungen quantenmechanischer Propositionensysteme \cite{aerts1979structure}, die wir der Vollständigkeit halber noch auflisten möchten: \\ \\  Die folgenden Lemmata gelten für alle schwach modularen, vollständigen, orthokomplementierten Verbände $\mathbb{V} =(V, \sqcup, \sqcap, ')$.

\begin{lemma}
Sei $\mathbb{V}$ ein schwach modularer, vollständiger orthokomplementierter Verband und $I$ eine beliebige Indexmenge. Seien $a_i, b \in V$ für alle $i$ aus $I$. Dann gilt: 
\begin{align}
    b \leftrightarrow a_i \: \: \: \: \forall i \in I \: \: \Longrightarrow \: \: \sqcup_{i \in I} (b \sqcap a_i) = b \sqcap (\sqcup_{i \in I} a_i)
\end{align}
\end{lemma}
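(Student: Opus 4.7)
Ich zerlege die Gleichung in zwei Inklusionen. Die Richtung $\sqcup_{i \in I}(b \sqcap a_i) \leq b \sqcap (\sqcup_{i \in I} a_i)$ folgt unmittelbar: für jedes $i \in I$ gilt $b \sqcap a_i \leq b$ und $b \sqcap a_i \leq a_i \leq \sqcup_{j \in I} a_j$, sodass die Definition der kleinsten oberen Schranke die Behauptung liefert. Die inhaltliche Arbeit steckt in der Umkehrung.

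Die Kompatibilitätsvoraussetzung möchte ich nur einmal ganz zu Beginn verwenden, um jedes $a_i$ entlang $b$ und $b'$ zu zerlegen. Da der von $\{a_i, a_i', b, b'\}$ erzeugte Unterverband nach Definition von $\leftrightarrow$ distributiv ist und $b \sqcup b' = \mathbf{I}$ gilt, folgt
\begin{align*}
    a_i = a_i \sqcap (b \sqcup b') = (a_i \sqcap b) \sqcup (a_i \sqcap b') \qquad \forall i \in I.
\end{align*}
Mit den Bezeichnungen $p_i := b \sqcap a_i$, $q_i := b' \sqcap a_i$, $p := \sqcup_{i \in I} p_i$ und $q := \sqcup_{i \in I} q_i$ ergibt sich $\sqcup_{i \in I} a_i = p \sqcup q$, wodurch sich die verbleibende Inklusion auf die konkrete Gleichung $b \sqcap (p \sqcup q) = p$ reduziert.

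Zur Behandlung dieser Gleichung nutze ich, dass $p \leq b$ und $q \leq b'$ (als Suprema unterhalb $b$ bzw. $b'$ liegender Elemente) und damit $q \leq p'$ gilt. Setze $x := b \sqcap (p \sqcup q)$; offensichtlich ist $p \leq x$. Angewandt auf das vergleichbare Paar $p \leq x$ liefert die schwache Modularität (Orthomodularität) $x = p \sqcup (x \sqcap p')$, sodass es genügt, $x \sqcap p' = \mathbf{0}$ zu zeigen. Als Schlüsselschritt zeige ich die Zwischenbehauptung $(p \sqcup q) \sqcap p' = q$: wegen $q \leq (p \sqcup q) \sqcap p' =: r$ liefert erneute Orthomodularität $r = q \sqcup (r \sqcap q')$, und mittels der de Morganschen Regel folgt $r \sqcap q' \leq (p \sqcup q) \sqcap p' \sqcap q' = (p \sqcup q) \sqcap (p \sqcup q)' = \mathbf{0}$, also $r = q$. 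Damit ist $x \sqcap p' = b \sqcap r = b \sqcap q \leq b \sqcap b' = \mathbf{0}$, und folglich $x = p$.

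Die eigentliche Schwierigkeit sehe ich weniger in einer einzelnen Rechnung als im richtigen Zusammenspiel der Zutaten: Man muss die Kompatibilitätsannahme in die Dekomposition $a_i = (a_i \sqcap b) \sqcup (a_i \sqcap b')$ übersetzen, das Problem auf den orthogonalen Fall $p \perp q$ reduzieren und dann die schwache Modularität zweimal auf vergleichbare Paare ($p \leq x$ und $q \leq r$) anwenden. Der abschließende Kollaps $r \sqcap q' = \mathbf{0}$ nutzt gerade die Orthogonalität $q \leq p'$ über de Morgan; ohne sie würde der Beweis zusammenbrechen, was auch erklärt, warum im allgemeinen (nicht-distributiven) Fall die Kompatibilitätsvoraussetzung unverzichtbar ist.
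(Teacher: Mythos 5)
Dein Beweis ist korrekt. Die Arbeit selbst führt für dieses Lemma keinen Beweis, sondern listet es nur als bekanntes Resultat aus der Verbandstheorie (Piron, Aerts) auf; insofern gibt es keinen Beweis der Arbeit, mit dem man deinen Ansatz vergleichen könnte. Deine Argumentation ist die im Wesentlichen kanonische: Die triviale Inklusion $\sqcup_i(b\sqcap a_i)\leq b\sqcap(\sqcup_i a_i)$ ist richtig begründet; die Zerlegung $a_i=(a_i\sqcap b)\sqcup(a_i\sqcap b')$ ist genau das erste Kompatibilitätskriterium aus Lemma 5.2 der Arbeit; die Reduktion auf $b\sqcap(p\sqcup q)=p$ mit $p\leq b$, $q\leq b'\leq p'$ ist sauber; und die beiden Anwendungen der Orthomodularität (die aus der in der Arbeit gegebenen Definition von schwacher Modularität via $x=x\sqcap(p\sqcup p')=(x\sqcap p)\sqcup(x\sqcap p')$ für $p\leq x$ folgt) zusammen mit der de-Morgan-Identität $(p\sqcup q)'=p'\sqcap q'$ liefern korrekt $(p\sqcup q)\sqcap p'=q$ und damit $x\sqcap p'=b\sqcap q=\mathbf{0}$. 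Einziger Punkt, den man explizit machen könnte: Die Vertauschung $\sqcup_i(p_i\sqcup q_i)=(\sqcup_i p_i)\sqcup(\sqcup_i q_i)$ benutzt die Vollständigkeit des Verbandes und die Assoziativität beliebiger Suprema; das ist unproblematisch, sollte aber als Stelle benannt werden, an der die Vollständigkeitsvoraussetzung eingeht.
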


\begin{lemma}
Sei $\mathbb{V}$ ein schwach modularer, orthokomplementierter Verband. Es existieren zwei Kriterien, um zu bestimmen, ob die Elemente $a, b \in V$ miteinander kompatibel sind: 
\begin{align}
    a \leftrightarrow b \: \: : \Longleftrightarrow& \: \: (a \sqcap b) \sqcup (a' \sqcap b) = b \nonumber\\ : \Longleftrightarrow& \: \: (a \sqcup b') \sqcap b = a \sqcap b
\end{align}
\end{lemma}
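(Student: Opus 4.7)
The plan is to prove the two equivalences
\begin{align*}
a \leftrightarrow b \iff (\mathrm{I}) \iff (\mathrm{II}),
\end{align*}
where (I) denotes $(a \sqcap b) \sqcup (a' \sqcap b) = b$ and (II) denotes $(a \sqcup b') \sqcap b = a \sqcap b$. The forward directions (from compatibility to either criterion) fall out directly from the distributivity of the sublattice generated by $\{a, a', b, b'\}$: working inside that sublattice,
\begin{align*}
(a \sqcap b) \sqcup (a' \sqcap b) &= (a \sqcup a') \sqcap b = \mathbf{I} \sqcap b = b, \\
(a \sqcup b') \sqcap b &= (a \sqcap b) \sqcup (b \sqcap b') = (a \sqcap b) \sqcup \mathbf{0} = a \sqcap b.
\end{align*}

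To establish (I) $\iff$ (II) I would apply weak modularity in its orthomodular form $p \leq q \Rightarrow q = p \sqcup (q \sqcap p')$ to the comparable pair $(a \sqcap b) \leq b$: this yields $b = (a \sqcap b) \sqcup (b \sqcap (a' \sqcup b'))$, so (I) collapses into the identity $b \sqcap (a' \sqcup b') = a' \sqcap b$, which is exactly (II) with $a$ replaced by $a'$. A short manipulation, consisting of orthocomplementing (II) via De Morgan and reapplying orthomodularity to the pair $b' \leq a' \sqcup b'$, then shows that (II) for $(a,b)$ and (II) for $(a',b)$ are equivalent, closing the loop between the two criteria.

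The genuine difficulty lies in the reverse implication (I) $\Rightarrow a \leftrightarrow b$. My plan is to construct an orthogonal four-element resolution of $\mathbf{I}$ from the meets
\begin{align*}
e_1 := a \sqcap b, \quad e_2 := a' \sqcap b, \quad e_3 := a \sqcap b', \quad e_4 := a' \sqcap b',
\end{align*}
which are pairwise disjoint in the sense that $e_i \sqcap e_j = \mathbf{0}$ for $i \neq j$. Hypothesis (I) already supplies $b = e_1 \sqcup e_2$; the dual decomposition $b' = e_3 \sqcup e_4$ follows by running the same argument on the pair $(a,b')$, using the symmetry of (I) and (II) established in the previous paragraph. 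The crux is the decomposition $a = e_1 \sqcup e_3$: since $e_1 \leq a$, weak modularity renders the sublattice generated by $\{e_1, e_1', a, a'\}$ distributive, so inside it orthomodularity gives $a = e_1 \sqcup (a \sqcap e_1')$, and a further invocation of orthomodularity on the comparable pair $e_3 \leq a \sqcap e_1'$, combined with (I), shrinks $a \sqcap e_1'$ to exactly $e_3$. Once all four decompositions $a = e_1 \sqcup e_3$, $a' = e_2 \sqcup e_4$, $b = e_1 \sqcup e_2$, $b' = e_3 \sqcup e_4$ are in place, every element of the sublattice generated by $\{a, a', b, b'\}$ is a join of some of the $e_i$, so this sublattice embeds into the Boolean algebra on four atoms and is distributive, which is $a \leftrightarrow b$. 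The principal obstacle throughout is the absence of full distributivity: every exchange of a meet and a join must be routed through weak modularity on a specific comparable pair, and it is precisely this requirement that makes the weak-modularity hypothesis indispensable.
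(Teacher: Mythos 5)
The paper itself offers no proof of this lemma: it is listed together with Lemmas 5.1 and 5.3 as an imported result from \cite{piron1976foundations} and \cite{aerts1979structure}, explicitly only \emph{aufgelistet} for completeness. So there is no in-paper argument to compare against; your outline has to stand on its own. It is essentially the classical proof from the orthomodular-lattice literature, and I believe every step can be completed: the forward direction from distributivity of the generated sublattice is immediate, the equivalence of the two criteria and the reconstruction of $a,a',b,b'$ as joins of the four pairwise orthogonal elements $e_1,\dots,e_4$ is the standard route, and the final embedding into the Boolean algebra on four atoms is sound.

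Two places deserve more care than your sketch gives them. First, the step in which (I) \emph{collapses} into $b\sqcap(a'\sqcup b')=a'\sqcap b$ is not a formal cancellation: from $x\sqcup y=x\sqcup z$ with $y\leq z$ you may only conclude $y=z$ after checking $x\perp z$ and applying weak modularity once more to get $z\leq(x\sqcup y)\sqcap x'=y$; you reuse exactly this cancellation pattern when relating (II) for $(a,b)$ to (II) for $(a',b)$, so it should be isolated and proved once (here $x=a\sqcap b$, $y=a'\sqcap b$, $z=b\sqcap(a'\sqcup b')$, and $x\perp z$ because $x\leq(a'\sqcup b')'$). Second, your justification of $b'=e_3\sqcup e_4$ appeals to the symmetry established in the previous paragraph, but what that paragraph establishes is stability of the criteria under $a\mapsto a'$, not under $b\mapsto b'$, and not the symmetry $a\leftrightarrow b\iff b\leftrightarrow a$, which is itself a theorem; as written this step is circular. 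The clean repair is to reorder: first run your crux argument (which needs only (I) for the pair $(a,b)$ plus weak modularity on $e_1\leq a$) to get $a=e_1\sqcup e_3$ and, symmetrically, $a'=e_2\sqcup e_4$; conclude $e_1\sqcup e_2\sqcup e_3\sqcup e_4=a\sqcup a'=\mathbf{I}$; and only then obtain $b'=e_3\sqcup e_4$ from $e_3\sqcup e_4\leq b'$ together with one application of weak modularity to $b'\sqcap(b\sqcup e_3\sqcup e_4)$. With these repairs the argument goes through.
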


\begin{lemma}
Sei $\mathbb{V}$ ein schwach modularer, orthokomplementierter Verband. Dann ist für $a, b, c \in V$ das Tripel $(a,b,c)$ distributiv, sobald ein Element aus dem Tripel mit den jeweils anderen beiden kompatibel ist.
\end{lemma}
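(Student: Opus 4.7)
Mein Plan ist es, ohne Einschränkung der Allgemeinheit anzunehmen, dass $a$ das ausgezeichnete Element ist, also $a \leftrightarrow b$ und $a \leftrightarrow c$. Zunächst bemerke ich, dass wegen der Symmetrie der beiden Kriterien aus Lemma 3 unter dem Austausch $a \mapsto a'$ auch $a' \leftrightarrow b$ und $a' \leftrightarrow c$ folgt. Anschließend wende ich Lemma 2 einmal an (mit der Rolle von $b$ aus Lemma 2 übernommen durch $a$ und der Indexmenge $\{1,2\}$ mit $a_1 = b,\; a_2 = c$), um
\begin{align*}
    a \sqcap (b \sqcup c) = (a \sqcap b) \sqcup (a \sqcap c)
\end{align*}
zu erhalten, und ein zweites Mal mit $a$ ersetzt durch $a'$, um
\begin{align*}
    a' \sqcap (b \sqcup c) = (a' \sqcap b) \sqcup (a' \sqcap c)
\end{align*}
zu bekommen. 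Dies liefert zwei der sechs klassischen Distributivgleichungen bereits direkt.

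Im zweiten Schritt möchte ich zeigen, dass $a \leftrightarrow (b \sqcup c)$ und dual $a \leftrightarrow (b \sqcap c)$ gilt. Hierzu bilde ich die Vereinigung der beiden oben abgeleiteten Identitäten; die rechte Seite lässt sich umsortieren zu $((a \sqcap b) \sqcup (a' \sqcap b)) \sqcup ((a \sqcap c) \sqcup (a' \sqcap c))$, was nach dem ersten Kriterium aus Lemma 3 angewandt auf die jeweils kompatiblen Paare $\{a,b\}$ bzw. $\{a,c\}$ gleich $b \sqcup c$ ist. Damit gilt $(a \sqcap (b \sqcup c)) \sqcup (a' \sqcap (b \sqcup c)) = b \sqcup c$, was nach demselben Kriterium gerade $a \leftrightarrow (b \sqcup c)$ bedeutet. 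Für $b \sqcap c$ argumentiere ich analog, etwa indem ich $b,c$ durch $b',c'$ ersetze und De Morgan (Lemma 1 im Hilbertkontext, in der Verbandsversion eine direkte Konsequenz der Orthokomplementation) verwende.

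Im dritten Schritt möchte ich dann sukzessive nachweisen, dass jedes im von $\{a,a',b,b',c,c'\}$ erzeugten Unterverband auftretende Element sich in eine distributive Normalform bringen lässt: Jede Verbindung von $b$'s und $c$'s (mit $\sqcup,\sqcap,\cdot'$) vertauscht nach Schritt 2 und Lemma 2 korrekt mit $a$, und mit jeder weiteren Einführung von $a$ oder $a'$ wiederhole ich Lemma 2 bzw. seine Komplementversion. Daraus folgen alle sechs Distributivgleichungen für das Tripel $(a,b,c)$, und der von $\{a,b,c\}$ (mitsamt Orthokomplementen) erzeugte Unterverband ist distributiv, was genau die Behauptung ist.

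Die Hauptschwierigkeit sehe ich im zweiten Schritt: das Hochziehen der paarweisen Kompatibilität $a \leftrightarrow b,\; a \leftrightarrow c$ zur Kompatibilität $a \leftrightarrow (b\sqcup c)$, denn allein hieraus folgt iterativ, dass Lemma 2 in jeder Stufe weiter verwendbar bleibt und so die volle Distributivität des erzeugten Unterverbands entsteht. Sobald dieser Punkt geklärt ist, sind die verbleibenden Distributivgleichungen nur noch bequeme Anwendungen der bereits bewiesenen Identitäten in Verbindung mit der schwachen Modularität und der Orthokomplementation.
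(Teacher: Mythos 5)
Vorweg: Die Arbeit führt für dieses Lemma keinen eigenen Beweis, sondern listet es (wie Lemma 5.1 und Lemma 5.2) als aus der Literatur übernommenes Resultat der Verbandstheorie auf; einen Vergleich mit einem Beweis der Arbeit kann es daher nicht geben. Deine Schritte 1 und 2 sind für sich genommen korrekt und entsprechen dem Standardvorgehen: Mit Lemma 5.1 (bei dir als Lemma 2 gezählt) erhältst du $a \sqcap (b \sqcup c) = (a \sqcap b) \sqcup (a \sqcap c)$ sowie die Variante für $a'$, und das Verkleben dieser beiden Identitäten über das erste Kriterium aus Lemma 5.2 (bei dir Lemma 3) liefert tatsächlich $a \leftrightarrow (b \sqcup c)$ und dual $a \leftrightarrow (b \sqcap c)$.

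Die eigentliche Schwierigkeit liegt jedoch nicht in Schritt 2, sondern in Schritt 3, und dort hat dein Argument eine echte Lücke. Ein distributives Tripel umfasst auch die Identitäten, in denen $b$ bzw. $c$ außen steht, insbesondere $b \sqcap (a \sqcup c) = (b \sqcap a) \sqcup (b \sqcap c)$. Diese lässt sich nicht durch Iterieren von Lemma 5.1 gewinnen: Mit $b$ in der ausgezeichneten Rolle würde das Lemma $b \leftrightarrow c$ voraussetzen, was gerade nicht angenommen ist, und sämtliche in Schritt 2 erzeugten Kompatibilitäten betreffen stets $a$ mit Kombinationen aus $b$ und $c$, nie $b$ mit Kombinationen aus $a$ und $c$. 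Hier braucht man ein eigenständiges Orthomodularitätsargument (das ist der Kern des Satzes von Foulis--Holland): Man setzt $e := (b \sqcap a) \sqcup (b \sqcap c) \leq b \sqcap (a \sqcup c)$ und zeigt $e' \sqcap b \sqcap (a \sqcup c) = \mathbf{0}$, etwa so: Aus dem zweiten Kriterium von Lemma 5.2, angewandt auf $a' \leftrightarrow b$, folgt $e' \sqcap b \leq (a' \sqcup b') \sqcap b = a' \sqcap b$, also $e' \sqcap b \sqcap (a \sqcup c) \leq a' \sqcap (a \sqcup c) = a' \sqcap c \leq c$ (letzteres mit der bereits bewiesenen leichten Distributivität für $a'$); zugleich ist $e' \sqcap b \sqcap (a \sqcup c) \leq b$ und $e' \leq b' \sqcup c' = (b \sqcap c)'$, zusammen also $e' \sqcap b \sqcap (a \sqcup c) \leq (b \sqcap c) \sqcap (b \sqcap c)' = \mathbf{0}$. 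Die schwache Modularität liefert dann wegen $e \leq b \sqcap (a \sqcup c)$ die Zerlegung $b \sqcap (a \sqcup c) = e \sqcup (e' \sqcap b \sqcap (a \sqcup c)) = e$. Ohne einen solchen Schritt beweist dein Entwurf nur diejenigen Distributivgesetze, in denen das mit den beiden anderen kompatible Element außen steht, nicht aber die volle Distributivität des Tripels.
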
 

Im Folgenden bezeichne nun wieder $\mathcal{L}(\mathcal{H})$ die Menge aller komplexer abgeschlossener Unterräume eines (separablen) komplexen Hilbertraumes $\mathcal{H}$. Für $x \in \mathcal{H}$ bezeichne $\langle x \rangle$ den von $x$ erzeugten, komplex eindimensionalen Unterraum. Der Ausdruck $p^{\perp}$ bezeichne wieder das orthogonale Komplement von $p \in \mathcal{L}(\mathcal{H})$ und das Symbol $\oplus$ bezeichne die direkte Summe zwischen Vektorräumen. 

Auf der Menge $\mathcal{L}(\mathcal{H})$ lässt sich wie oben eine partielle Ordnung $\leq$ definieren, die durch die Teilmengenrelation $\subseteq$ gegeben ist. 

Abschließend nutzen wir wieder folgende Kurzschreibweisen: Für $p,q \in \mathcal{L}(\mathcal{H})$ sei $p \sqcup q := \overline{\textrm{span}_\mathbf{C} (p \cup q)}$ und $p \sqcap q := p \cap q$. Den Verband $(\mathcal{L}(\mathcal{H}), \sqcup, \sqcap, \cdot^{\perp})$ kürzen wir wieder mit $\mathcal{L}(\mathcal{H})$ ab.

Als Nächstes wollen wir nun den für diese Arbeit zentralen Begriff des sogenannten m-Morphismus definieren. Dazu zuerst folgende Definition:

\begin{definition}
 Sei $\mathcal{H}$ ein komplexer Hilbertraum und $\mathcal{L}(\mathcal{H})$ der zugehörige Hilbert-Verband und seien $p,q \in \mathcal{L}(\mathcal{H})$. Das Paar $(p,q)$ heißt modulares Paar, falls $\forall r \in \mathcal{L}(\mathcal{H})$ mit $r \leq q$ folgt, dass $(p \sqcup r) \sqcap q = (p \sqcap q) \sqcup r$ ist.
\end{definition}

Mittels dieser Definition ist ein m-Morphismus wie folgt definiert:

\begin{definition}
 Seien $\mathcal{H}$ und $\mathcal{G}$ zwei komplexe Hilberträume und $f$ ein c-Morphismus von $\mathcal{L}(\mathcal{H})$ nach $\mathcal{L}(\mathcal{G})$, mit der Eigenschaft, dass $f$ modulare Paare in $\mathcal{L}(\mathcal{H})$ auf modulare Paare in $\mathcal{L}(\mathcal{G})$ abbildet. In diesem Fall nennt man $f$ m-Morphismus.
\end{definition}

Der nächste Satz liefert ein einfaches Verfahren, um nachzuweisen, dass ein gegebener c-Morphismus ein m-Morphismus ist.

\begin{proposition}
Seien $\mathcal{H}$ und $\mathcal{G}$ zwei komplexe Hilberträume und $f$ ein c-Morphismus von $\mathcal{L}(\mathcal{H})$ nach $\mathcal{L}(\mathcal{G})$. Dann ist $f$ ein m-Morphismus, falls für alle $x,y \in \mathcal{H}$ mit $x \neq y$ gilt, dass 
\begin{align}
    f(\langle x - y \rangle) \subseteq f(\langle x \rangle) \oplus f(\langle y \rangle).
\end{align}
\end{proposition}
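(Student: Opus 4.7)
The plan is to verify directly that $f$ maps modular pairs to modular pairs. Fix a modular pair $(p,q)$ in $\mathcal{L}(\mathcal{H})$ and let $s \in \mathcal{L}(\mathcal{G})$ with $s \leq f(q)$; the goal is $(f(p) \sqcup s) \sqcap f(q) = (f(p) \sqcap f(q)) \sqcup s$. The inclusion $\supseteq$ is automatic: from $s \leq f(q)$ and the general lattice inequalities one obtains $s \leq (f(p) \sqcup s) \sqcap f(q)$ and $f(p) \sqcap f(q) \leq (f(p) \sqcup s) \sqcap f(q)$, so taking the join gives the desired direction. Only the reverse inclusion carries content.

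Since $\mathcal{L}(\mathcal{G})$ is atomic, it suffices to show $\langle z \rangle \leq (f(p) \sqcap f(q)) \sqcup s$ for every atom $\langle z \rangle \leq (f(p) \sqcup s) \sqcap f(q)$. Here $f(p) \sqcup s = \overline{f(p) + s}$, so one approximates $z = \lim_n (u_n + v_n)$ with $u_n \in f(p)$ and $v_n \in s$. Let $P_s$ denote the orthogonal projection of $\mathcal{G}$ onto $s$. Then $P_s z \in s$ and, since both $z$ and $P_s z$ lie in $f(q)$, also $z - P_s z \in f(q)$. The entire problem reduces to showing $z - P_s z \in f(p)$, for then $z - P_s z \in f(p) \sqcap f(q)$ and hence $z = (z - P_s z) + P_s z \in (f(p) \sqcap f(q)) + s \subseteq (f(p) \sqcap f(q)) \sqcup s$.

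The hard part is exactly this last containment: a priori $z - P_s z = \lim_n P_{s^{\perp}}(u_n)$ is only a limit of projections of elements of $f(p)$ and need not itself lie in $f(p)$. This is where the hypothesis is indispensable. The assumption $f(\langle x - y \rangle) \subseteq f(\langle x \rangle) \oplus f(\langle y \rangle)$ strengthens the c-morphism axiom by requiring $f$ to respect finite algebraic sums at the atomic level, before the closure implicit in the join $\sqcup$ is taken. I would combine this with the Hilbert-lattice characterization of modular pairs, namely that $(p,q)$ is modular iff $p + q$ is closed in $\mathcal{H}$, to transfer closedness across $f$: by iterating the atomic hypothesis along the approximating sequence $(u_n + v_n)$, show that any $z \in \overline{f(p) + s}$ which already lies in $f(q)$ must in fact lie in $f(p) + s$, so that the decomposition $z = (z - P_s z) + P_s z$ is exact rather than merely approximate. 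Carrying out this transfer rigorously — in particular lifting the atomic-level control from one-dimensional subspaces to the full closed sum $\overline{f(p) + s}$ — is the substantive technical obstacle, and it is precisely why the hypothesis is formulated for differences of representative vectors: such differences are the minimal linear building blocks that must survive under $f$ for modular pairs to be preserved.
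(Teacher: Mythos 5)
First, a point of comparison: the paper does not prove this Satz at all. It is listed among the results quoted from the literature (\cite{piron1976foundations}, \cite{aerts1979structure}) ``der Vollständigkeit halber'', so there is no in-paper argument to measure your attempt against; it has to stand on its own. Your framing is sound as far as it goes — fixing a modular pair $(p,q)$, reducing to the inclusion $(f(p) \sqcup s) \sqcap f(q) \subseteq (f(p) \sqcap f(q)) \sqcup s$ for $s \leq f(q)$, and invoking the characterization of modular pairs in a Hilbert lattice via closedness of the algebraic sum. But the proposal contains a step that would fail and, more importantly, stops exactly where the proof begins.

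The failing step is the reduction to ``$z - P_s z \in f(p)$''. This is a sufficient condition that is in general false even when the conclusion holds: if $z = u + v$ with $u \in f(p)$, $v \in s$, then $z - P_s z = P_{s^{\perp}} u$, and nothing forces $f(p)$ to be invariant under $P_{s^{\perp}}$. The decomposition you actually need is the algebraic one $z = u + v$, from which $u = z - v \in f(q)$ follows because $v \in s \subseteq f(q)$, hence $u \in f(p) \sqcap f(q)$ — the orthogonal projection onto $s$ plays no role and produces the wrong summands. The genuine gap is the deferred ``closedness transfer''. The hypothesis $f(\langle x - y \rangle) \subseteq f(\langle x \rangle) \oplus f(\langle y \rangle)$ constrains only the images under $f$ of atoms of $\mathcal{L}(\mathcal{H})$, whereas $s$ is an arbitrary element of $\mathcal{L}(\mathcal{G})$ below $f(q)$ and need not be of the form $f(r)$; ``iterating the atomic hypothesis along the approximating sequence $(u_n + v_n)$'' therefore has no handle on the vectors $v_n \in s$, and no mechanism is given for it. The workable route is to use the characterization you cite on both sides: show from the hypothesis and the c-morphism property that $p + q$ closed implies $f(p) + f(q)$ closed (this is where $f(\langle u - v \rangle) \subseteq f(\langle u \rangle) \oplus f(\langle v \rangle) \subseteq f(p) + f(q)$ for $u \in p$, $v \in q$ must be exploited), and then obtain the modular identity for every $s \leq f(q)$ from closedness of $f(p) + f(q)$ alone — that implication is precisely the nontrivial direction of the characterization and does not require $f(p) + s$ itself to be closed. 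Until that transfer is actually carried out, what you have is a plan with the central lemma missing, not a proof.
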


Man kann sich nun berechtigterweise die Frage stellen, warum der abstrakt daherkommende Begriff des m-Morphismus nützlich für diese Arbeit sein sollte. Den Grund liefert folgendes Theorem: 

\begin{theorem}
Seien $\mathcal{H}$ und $\mathcal{G}$ zwei komplexe Hilberträume mit $\textrm{dim}(\mathcal{H}) \geq 3$ und $\textrm{dim}(\mathcal{G}) \geq 3$ und seien $x,y,z \in \mathcal{H}$. Sei $f: \mathcal{L}(\mathcal{H}) \rightarrow \mathcal{L}(\mathcal{G})$ ein m-Morphismus. Dann existiert zu jedem Paar $(x,y)$ mit $x,y \neq 0$ eine bijektive, beschränkte, lineare Abbildung $F_{y,x} : f(\langle x \rangle) \rightarrow f(\langle y \rangle)$ mit den folgenden Eigenschaften:
\begin{itemize}
    \item [\textit{i)}] $F_{x,x} = id_{f(\langle x \rangle)}$, wobei $id_{f(\langle x \rangle)}$ die Identitätsabbildung auf $f(\langle x \rangle)$ ist.
    \item [\textit{ii)}] $F_{x,y} = (F_{y,x})^{-1}$.
    \item [\textit{iii)}] $F_{z,y} \circ F_{y,x} = F_{z,x}$, wobei $\circ$ die gewöhnliche Operatorkomposition ist.
    \item[\textit{iv)}] $F_{y + z,x} = F_{y,x} + F_{z,x}$.
    \item[\textit{v)}] $F_{\lambda x, \lambda y} = F_{x,y}$, \: \: $\forall \lambda \in \mathbb{C}$.
    \item[\textit{vi)}] Für $\|x\|_\mathcal{H} = \|y\|_\mathcal{H}$ folgt $\langle F_{y,x}(x_1), F_{y,x}(x_2) \rangle_{\mathcal{G}} = \langle x_1, x_2 \rangle_{\mathcal{G}}$  $\forall x_1, x_2 \in f(\langle x \rangle)$, d.h. $F_{y,x}$ ist in diesem Falle eine Isometrie. Dabei ist $\langle \cdot, \cdot \rangle_\mathcal{H}$ bzw. $\langle \cdot, \cdot \rangle_\mathcal{G}$ das Skalarprodukt auf $\mathcal{H}$ bzw. $\mathcal{G}$ und $\| \cdot \|_\mathcal{H} := \sqrt{\langle \cdot, \cdot \rangle_\mathcal{H}}$.
\end{itemize}
Für alle $x,y \in \mathcal{H}$ mit $x \neq 0$ existieren darüber hinaus orthogonale Projektoren $P_1^{\langle x \rangle}$ und $P_2^{\langle x \rangle}$ auf $f(\langle x \rangle)$ mit den Eigenschaften:
\begin{itemize}
    \item [\textit{vii)}] $P_1^{\langle x \rangle} \circ P_2^{\langle x \rangle} = 0$, wobei $0$ der Nulloperator auf $f(\langle x \rangle)$ ist.
    \item [\textit{viii)}] $P_1^{\langle x \rangle} + P_2^{\langle x \rangle} = id_{f(\langle x \rangle)}$.
    \item [\textit{ix)}] $P_i^{\langle y \rangle} = F_{y,x} \circ P_i^{\langle x \rangle} \circ F_{x,y}$, \: \: $i \in \{1,2\}$.
    \item[\textit{x)}] $F_{\lambda x,x} = \lambda P_1^{\langle x \rangle} + \lambda^* P_2^{\langle x \rangle}$, \: \: $\forall \lambda \in \mathbb{C}$.
\end{itemize}
\end{theorem}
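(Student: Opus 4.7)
Die Strategie besteht darin, die linearen Abbildungen $F_{y,x}$ direkt aus der Direktsummen-Inklusion von Satz~5.1 (also $f(\langle x - y\rangle) \subseteq f(\langle x\rangle) \oplus f(\langle y\rangle)$) zu konstruieren und anschließend die einparametrige Schar $\{F_{\lambda x, x}\}_{\lambda \in \mathbb{C}^\ast}$ zu analysieren, um daraus die beiden orthogonalen Projektoren $P_i^{\langle x\rangle}$ zu gewinnen. Die Argumentation steht im Geiste des Fundamentalsatzes der projektiven Geometrie; die Dimensionsvoraussetzung $\dim(\mathcal{H}), \dim(\mathcal{G}) \geq 3$ stellt sicher, dass genügend linear unabhängige Konfigurationen zur Verfügung stehen, um Vektoren zu transportieren.

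Für die Konstruktion von $F_{y,x}$ behandle ich zuerst den Fall linear unabhängiger $x, y$, indem ich einen Hilfsvektor $z := x - y$ wähle. Eine dreimalige Anwendung von Satz~5.1 (auf $z = x - y$, $x = z + y$ und $y = x - z$) liefert die Inklusionen
\begin{align*}
f(\langle z\rangle) \subseteq f(\langle x\rangle) \oplus f(\langle y\rangle), \quad f(\langle x\rangle) \subseteq f(\langle z\rangle) \oplus f(\langle y\rangle), \quad f(\langle y\rangle) \subseteq f(\langle x\rangle) \oplus f(\langle z\rangle).
\end{align*}
Aus diesen folgt, dass die Einschränkungen $\pi_x|_{f(\langle z\rangle)} : f(\langle z\rangle) \to f(\langle x\rangle)$ und $\pi_y|_{f(\langle z\rangle)} : f(\langle z\rangle) \to f(\langle y\rangle)$ der kanonischen Koordinatenprojektionen lineare Bijektionen sind; damit setze ich $F_{y,x} := \pi_y|_{f(\langle z\rangle)} \circ (\pi_x|_{f(\langle z\rangle)})^{-1}$. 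Die Beschränktheit ergibt sich aus der Stetigkeit der Projektoren in einer Hilbert-Direktsumme abgeschlossener Unterräume. Der kollineare Fall $y = \lambda x$, in dem $f(\langle x\rangle) = f(\langle y\rangle)$ gilt, wird durch Zwischenschaltung eines dritten, linear unabhängigen Vektors und Verkettung behandelt. Ein Standardargument zeigt sodann sowohl die Unabhängigkeit von $F_{y,x}$ von der Wahl des Hilfsvektors $z$ als auch die Eigenschaften i)--iii): i) ist offensichtlich, ii) folgt aus der Symmetrie in $x$ und $y$, und iii) aus der Verkettbarkeit der Zerlegungen. Die Eigenschaften iv) und v) ergeben sich unmittelbar aus der Linearität der Koordinatenprojektionen bzw.\ aus $\langle \lambda x\rangle = \langle x\rangle$.

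Die Isometrie-Eigenschaft vi) ist der erste Schritt, an dem die Hilbertraum-Struktur wesentlich eingeht. Für $\|x\|_{\mathcal{H}} = \|y\|_{\mathcal{H}}$ liefert die Parallelogramm-Identität $\langle x + y, x - y\rangle_{\mathcal{H}} = 0$. Die Verträglichkeit des c-Morphismus $f$ mit der Orthokomplementation $\cdot^{\perp}$ impliziert dann $f(\langle x + y\rangle) \perp f(\langle x - y\rangle)$ in $\mathcal{G}$. Überträgt man diese Orthogonalität in die Direktsummen-Koordinaten aus dem vorigen Absatz und wendet die Polarisationsidentität an, so erhält man die gewünschte Gleichheit $\langle F_{y,x}(\xi_1), F_{y,x}(\xi_2)\rangle_{\mathcal{G}} = \langle \xi_1, \xi_2\rangle_{\mathcal{G}}$ für alle $\xi_1, \xi_2 \in f(\langle x\rangle)$.

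Für die Eigenschaften vii)--x) halte ich schließlich $x \neq 0$ fest und betrachte die Abbildung $\varphi: \mathbb{C}^\ast \to \mathrm{End}(f(\langle x\rangle))$, $\varphi(\lambda) := F_{\lambda x, x}$. Nach iv) ist $\varphi$ additiv, nach iii) zusammen mit v) auch multiplikativ, und lässt sich dank der aus vi) folgenden Normkontrolle $\|\varphi(\lambda)\|_{\mathrm{op}} = |\lambda|$ zu einem stetigen, unitalen Ringhomomorphismus $\mathbb{C} \to \mathrm{End}(f(\langle x\rangle))$ fortsetzen. Aus dem klassischen Ergebnis, dass die einzigen stetigen Ringendomorphismen von $\mathbb{C}$ die Identität und die komplexe Konjugation sind, folgt dann eine orthogonale Zerlegung $f(\langle x\rangle) = V_1 \oplus V_2$, auf deren Summanden $\varphi(\lambda)$ als $\lambda \cdot \mathrm{id}$ bzw.\ als $\lambda^\ast \cdot \mathrm{id}$ wirkt; definiert man $P_i^{\langle x\rangle}$ als den orthogonalen Projektor auf $V_i$, so sind vii), viii) und x) unmittelbar gegeben, und ix) folgt durch Konjugation dieser Zerlegung mit $F_{y,x}$ unter Verwendung von iii) und v). Die eigentliche Hürde liegt in genau diesem letzten Schritt, nämlich im Ausschließen wilder additiv-multiplikativer Fortsetzungen und im Erzwingen der sauberen $\lambda/\lambda^\ast$-Zerlegung; die aus vi) resultierende isometrische Rigidität, die die nötige Stetigkeit bereitstellt, bildet das Herzstück dieses Schlusses.
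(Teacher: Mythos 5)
Zunächst der wesentliche Punkt: Die Arbeit beweist Theorem 5.2 gar nicht. Es wird zusammen mit den Lemmata 5.1--5.3 und Satz 5.1 ausdrücklich nur \glqq der Vollständigkeit halber\grqq{} aufgelistet und aus \cite{aerts1979structure} übernommen; nach der Theoremaussage folgt kein Beweis. Ein Vergleich mit einem papiereigenen Beweis ist also nicht möglich. Ihre Skizze trifft immerhin die Grundidee der Originalkonstruktion: $F_{y,x}$ wird dort tatsächlich über die Zerlegung $f(\langle x - y\rangle) \subseteq f(\langle x\rangle) \oplus f(\langle y\rangle)$ und die zugehörigen Koordinatenprojektionen definiert, und die Projektoren $P_i^{\langle x\rangle}$ entstehen aus der Analyse der Schar $F_{\lambda x,x}$.

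Als Beweis weist Ihre Skizze aber mehrere echte Lücken auf. Erstens verwenden Sie Satz 5.1 in der falschen Richtung: Der Satz besagt nur, dass die Direktsummen-Inklusion die m-Morphismus-Eigenschaft \emph{impliziert}; Sie benötigen die Umkehrung (m-Morphismus $\Rightarrow$ Inklusion), die erst aus der Definition über modulare Paare hergeleitet werden muss (ein Paar abgeschlossener Unterräume mit trivialem Schnitt ist genau dann modular, wenn ihre Summe abgeschlossen ist; daraus folgt $f(\langle x\rangle) \sqcup f(\langle y\rangle) = f(\langle x\rangle) \oplus f(\langle y\rangle)$ und damit die gewünschte Inklusion). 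Zweitens sind gerade die Eigenschaften \textit{iv)} und \textit{vi)} der harte Kern des Theorems, und beide werden nur behauptet: Für \textit{iv)} müssen die Zerlegungen bezüglich der drei verschiedenen Zielräume $f(\langle y\rangle)$, $f(\langle z\rangle)$ und $f(\langle y+z\rangle)$ miteinander verträglich gemacht werden, was keineswegs \glqq unmittelbar aus der Linearität der Koordinatenprojektionen\grqq{} folgt; für \textit{vi)} genügt der Hinweis auf $f(\langle x+y\rangle)\perp f(\langle x-y\rangle)$ nicht, man muss erst nachweisen, dass $w + F_{y,x}(w)$ bzw. $w - F_{y,x}(w)$ tatsächlich in diesen beiden Bildräumen liegen, und daraus über die Polarisationsformel die Skalarprodukterhaltung gewinnen. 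Drittens liefert die Klassifikation der stetigen Ringendomorphismen von $\mathbb{C}$ allein noch keine \emph{orthogonale} Zerlegung in Projektoren; der saubere Schluss läuft über die Spektralzerlegung des wegen \textit{vi)} unitären Operators $F_{ix,x}$ mit $F_{ix,x}\circ F_{ix,x} = -\mathrm{id}$, dessen Eigenräume zu $\pm i$ die Räume $V_1, V_2$ liefern, auf denen dann $F_{\lambda x,x}$ als $\lambda\,\mathrm{id}$ bzw. $\lambda^{*}\mathrm{id}$ wirkt. Die Strategie ist also richtig gewählt, aber an den entscheidenden Stellen nicht ausgeführt.
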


Die Nützlichkeit dieses Theorems wird darin bestehen, dass wir uns mittels der bijektiven Operatoren $F_{y,x}$ Abbildungen konstruieren wollen, welche später u.a. die Isomorphie zwischen $\mathcal{L}(\mathcal{H})$ und $\mathcal{L}(\mathcal{H}_1 \otimes \mathcal{H}_2)$ vermitteln sollen.

Desweiteren lassen sich m-Morphismen mittels dieses Theorems auf folgende einfache Art klassifizieren: 

\newpage

\begin{definition}
Seien $\mathcal{H}$ bzw. $\mathcal{G}$ komplexe Hilberträume und $\mathcal{L}(\mathcal{H})$ bzw. $\mathcal{L}(\mathcal{G})$ die zugehörigen Hilbert-Verbände, dann heißt ein m-Morphismus $f$ 
\begin{itemize}
    \item [\textit{i)}] linear, falls $F_{\lambda x,x} = \lambda id_{f(\langle x \rangle)}$ ist, 
    \item [\textit{ii)}] antilinear, falls $F_{\lambda x,x} = \lambda^* id_{f(\langle x \rangle)}$ ist,
    \item [\textit{iii)}] gemischt, falls $f$ weder linear, noch antilinear ist.
\end{itemize}
\end{definition}

Das nächste Theorem liefert eine einfache Möglichkeit, um herauszufinden, ob ein unitärer c-Morphismus bijektiv ist.

\begin{theorem}
Seien $\mathcal{H}$ und $\mathcal{G}$ komplexe Hilberträume der Dimension größer oder gleich als Drei. Die Abbildung $f: \mathcal{L}(\mathcal{H}) \rightarrow \mathcal{L}(\mathcal{G})$ sei ein unitärer c-Morphismus, mit der Eigenschaft, dass ein Atom $p \in \mathcal{L}(\mathcal{H})$ existiert, sodass $f(p)$ ein Atom aus $\mathcal{L}(\mathcal{G})$ ist. Es folgt, dass $f$ ein Isomorphismus ist.
\end{theorem}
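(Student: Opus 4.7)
The plan is to upgrade $f$ from a c-morphism to an m-morphism, to extract via Theorem 5.2 a global (anti-)linear isometry $T:\mathcal{H}\to\mathcal{G}$ that realises $f$ on atoms, and to conclude bijectivity of $f$ from bijectivity of $T$. For the upgrade step, note that $x-y$ lies in the complex span of $\{x,y\}$, so $\langle x-y\rangle\subseteq\langle x\rangle\sqcup\langle y\rangle$, and order preservation of the c-morphism $f$ immediately yields the inclusion $f(\langle x-y\rangle)\subseteq f(\langle x\rangle)\sqcup f(\langle y\rangle)$ required by Satz 5.1. Hence $f$ is an m-morphism, and Theorem 5.2 supplies for every pair $x,y\in\mathcal{H}\setminus\{0\}$ a bijective bounded (anti-)linear operator $F_{y,x}:f(\langle x\rangle)\to f(\langle y\rangle)$ satisfying the coherence identities (i)--(x).

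Next, let $p=\langle x_0\rangle$ be the atom of the hypothesis, so that $f(p)$ is one-dimensional. Since every $F_{y,x_0}$ is a bijection onto $f(\langle y\rangle)$, we obtain $\dim f(\langle y\rangle)=1$ for every $y\neq 0$, so $f$ sends every atom of $\mathcal{L}(\mathcal{H})$ to an atom of $\mathcal{L}(\mathcal{G})$. I would then fix a nonzero $z_0\in f(\langle x_0\rangle)$ normalised so that $\|z_0\|_{\mathcal{G}}=\|x_0\|_{\mathcal{H}}$ and define $T:\mathcal{H}\to\mathcal{G}$ by $T(0):=0$ and $T(x):=F_{x,x_0}(z_0)$ for $x\neq 0$. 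Additivity $T(x+y)=T(x)+T(y)$ follows directly from property (iv) of Theorem 5.2; (anti-)homogeneity $T(\lambda x)=\lambda T(x)$, respectively $T(\lambda x)=\lambda^{*}T(x)$, follows by rewriting $F_{\lambda x,x_0}=F_{\lambda x,x}\circ F_{x,x_0}$ via the composition law (iii) and then applying property (x); and the isometry $\|T(x)\|_{\mathcal{G}}=\|x\|_{\mathcal{H}}$ follows from property (vi). By construction $T(x)\in f(\langle x\rangle)$, and because $f(\langle x\rangle)$ is an atom this forces $\langle T(x)\rangle=f(\langle x\rangle)$.

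The unitarity of $f$ combined with preservation of arbitrary suprema then yields
\begin{align}
    \mathbf{I}_{\mathcal{L}(\mathcal{G})} = f\bigl(\mathbf{I}_{\mathcal{L}(\mathcal{H})}\bigr) = \bigsqcup_{x\in\mathcal{H}\setminus\{0\}} f(\langle x\rangle) = \bigsqcup_{x\in\mathcal{H}\setminus\{0\}} \langle T(x)\rangle = \overline{T(\mathcal{H})},
\end{align}
and since $T$ is an isometry its image is already closed in $\mathcal{G}$, whence $T(\mathcal{H})=\mathcal{G}$. The same computation applied to an arbitrary $q\in\mathcal{L}(\mathcal{H})$ produces $f(q)=T(q)$, which exhibits $f$ as the bijection on closed subspaces induced by the bijective (anti-)linear isometry $T$. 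Together with the structure preservation of $f$ as a unitary c-morphism, this proves that $f$ is an isomorphism of Hilbert lattices.

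The main obstacle I expect is the global verification of additivity and (anti-)linearity of $T$: properties (iv) and (x) of Theorem 5.2 are formulated relative to specific base vectors, so assembling them into a single consistent (anti-)linear law on all of $\mathcal{H}$ requires careful bookkeeping via the composition identity (iii), together with a case distinction as in Definition 5.3 to ensure that the linear-versus-antilinear alternative is uniform and does not switch between different choices of base vector.
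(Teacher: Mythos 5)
First, a point of comparison: the paper does not prove this theorem at all — it is one of the results imported verbatim from \cite{aerts1979structure} and merely restated, so there is no in-paper proof to measure your argument against. Judged on its own terms, your proposal contains a genuine gap at its very first step.

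The problem is the upgrade of $f$ to an m-Morphismus. From $\langle x-y\rangle \leq \langle x\rangle\sqcup\langle y\rangle$ and order preservation you correctly obtain $f(\langle x-y\rangle)\subseteq f(\langle x\rangle)\sqcup f(\langle y\rangle)$, but Satz 5.1 requires the inclusion into $f(\langle x\rangle)\oplus f(\langle y\rangle)$. Here $\sqcup$ is the \emph{closed} linear span while $\oplus$ is the algebraic sum; the two coincide only when the algebraic sum is already closed, for instance when both summands are finite-dimensional. This is exactly the point the paper has to argue explicitly in the proof of Lemma 5.5, where finite-dimensionality is supplied by the assumption that the relevant intersections are atoms. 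In your situation you know only that the single atom $p$ has a one-dimensional image; for arbitrary $x,y$ the subspaces $f(\langle x\rangle)$ and $f(\langle y\rangle)$ could a priori be infinite-dimensional closed subspaces whose algebraic sum is not closed, and then the inclusion demanded by Satz 5.1 does not follow. Worse, you subsequently invoke Theorem 5.2 — whose hypothesis is precisely the m-morphism property — to conclude that \emph{every} atom has an atomic image, which is the very fact you would need in order to legitimize the first step; the argument is circular. To repair it you must first establish, directly from the c-morphism axioms together with unitarity and the existence of one atomic image, that $f$ sends every atom to an atom (or find an independent route to the m-morphism property); this is the actual substance of the cited theorem and does not reduce to order preservation. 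The remainder of your plan — assembling the operators $F_{x,x_0}$ into a global (anti)linear isometry $T$, verifying $f(q)=T(q)$ for all $q$, and reading off bijectivity — is sound once that foundation is in place, including your flagged non-mixedness issue, which would then resolve itself because on a one-dimensional image one of $P_1^{\langle x\rangle}, P_2^{\langle x\rangle}$ must vanish and property \textit{ix)} of Theorem 5.2 propagates that choice uniformly.
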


Wir betrachten nun zwei quantenmechanische Systeme $\Sigma_1$ und $\Sigma_2$ mit deren zugehörigen (quantenmechanischen) Propositionensystemen $\mathcal{L}(\mathcal{H}_1)$ und $\mathcal{L}(\mathcal{H}_2)$, die wir zu einem quantenmechanischen System $\Sigma$ mit Propositionensystem $\mathcal{L}(\mathcal{H})$ zusammensetzen wollen. Dabei sind $\mathcal{H}_1, \mathcal{H}_2$ und $\mathcal{H}$ wieder komplexe Hilberträume und wir fordern noch zusätzlich, dass $\textrm{dim}(\mathcal{H}_1), \textrm{dim}(\mathcal{H}_2) \geq 3$ ist. Wir nehmen nun an, dass die obigen Bedingungen $\textit{1}, \textit{2}, \textit{3}$ und $\textit{4}$ erfüllt seien, d.h. 
\begin{itemize}
    \item [\uproman{1}.] Es existieren zwei unitäre c-Morphismen 
    \begin{align}
        h_1 : \mathcal{L}(\mathcal{H}_1) \longrightarrow \mathcal{L}(\mathcal{H})
    \end{align}
    und 
    \begin{align}
        h_2 : \mathcal{L}(\mathcal{H}_2) \longrightarrow \mathcal{L}(\mathcal{H}).
    \end{align}
    \item [\uproman{2}.] Für alle $p_1 \in \mathcal{L}(\mathcal{H}_1)$ und alle $p_2 \in \mathcal{L}(\mathcal{H}_2)$ gilt, dass 
    \begin{align}
        h_1(p_1) \leftrightarrow h_2(p_2).
    \end{align}
    \item [\uproman{3}.] Für Atome $p_1 \in \mathcal{L}(\mathcal{H}_1)$ und Atome $p_2 \in \mathcal{L}(\mathcal{H}_2)$ folgt, dass $h_1(p_1) \sqcap h_2(p_2)$ ein Atom in $\mathcal{L}(\mathcal{H})$ ist.
\end{itemize}

Mittels der oben angegeben Lemmata und Theoreme beweisen wir nun einige Lemmata, mit deren Hilfe wir in der Lage sein werden, das angestrebte Ziel dieses Abschnittes zu erreichen. Die meisten folgenden Beweise stammen dabei im wesentlichen aus der Arbeit \cite{aerts1978physical}, wurden aber detaillierter ausgearbeitet und ergänzt.

\begin{lemma}
Seien $x_1 \in \mathcal{H}_1$ und $x_2 \in \mathcal{H}_2$. Wir definieren die Abbildungen 
\begin{align}
      u_{\langle x_2 \rangle} : \mathcal{L}(\mathcal{H}_1) \longrightarrow& \: \: \mathcal{L}(h_2(\langle x_2 \rangle)) \nonumber\\
     p_1 \longmapsto& \: \: h_1(p_1) \sqcap h_2(\langle x_2 \rangle) 
\end{align}
\begin{align}
      v_{\langle x_1 \rangle} : \mathcal{L}(\mathcal{H}_2) \longrightarrow& \: \: \mathcal{L}(h_1(\langle x_1 \rangle)) \nonumber\\
     p_2 \longmapsto& \: \: h_1(\langle x_1 \rangle) \sqcap h_2(p_2) 
\end{align}
$u_{\langle x_2 \rangle}$ und $v_{\langle x_1 \rangle}$ sind Isomorphismen.
\end{lemma}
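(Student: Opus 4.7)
Der Plan ist, auf $u_{\langle x_2 \rangle}$ das Theorem 5.2 anzuwenden. Dazu werde ich nachweisen, dass $u_{\langle x_2 \rangle}$ ein unitärer c-Morphismus ist, der mindestens ein Atom auf ein Atom abbildet, und dass die Dimension des Zielraums $h_2(\langle x_2 \rangle)$ mindestens drei beträgt. Die Isomorphie von $v_{\langle x_1 \rangle}$ folgt anschließend völlig analog durch Vertauschen der Rollen von $\Sigma_1$ und $\Sigma_2$.

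Zunächst würde ich die c-Morphismus-Eigenschaften von $u_{\langle x_2 \rangle}$ verifizieren. Für die Erhaltung beliebiger Suprema nutze ich, dass nach Bedingung \uproman{2} das Element $h_2(\langle x_2 \rangle)$ mit jedem $h_1(p_i)$ kompatibel ist, so dass Lemma 5.1 direkt
\begin{align*}
u_{\langle x_2 \rangle}\bigl(\sqcup_{i \in I} p_i\bigr) = \bigl(\sqcup_{i \in I} h_1(p_i)\bigr) \sqcap h_2(\langle x_2 \rangle) = \sqcup_{i \in I} u_{\langle x_2 \rangle}(p_i)
\end{align*}
liefert. Für die Erhaltung der Kompatibilität verwende ich, dass im Hilbert-Verband die Kompatibilität zweier abgeschlossener Unterräume äquivalent zum Kommutieren der zugehörigen Orthogonalprojektoren ist. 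Gilt $p \leftrightarrow q$ in $\mathcal{L}(\mathcal{H}_1)$, so sind $h_1(p)$, $h_1(q)$ und $h_2(\langle x_2 \rangle)$ nach Bedingung \uproman{2} paarweise kompatibel; damit kommutieren auch die drei zugehörigen Projektoren paarweise, und folglich kommutieren die Produkte, welche gerade die Orthogonalprojektoren auf $u_{\langle x_2 \rangle}(p)$ bzw. $u_{\langle x_2 \rangle}(q)$ darstellen.

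Als Nächstes zeige ich die Unitarität und die Atomerhaltung. Da $h_1$ unitär ist, gilt $u_{\langle x_2 \rangle}(\mathcal{H}_1) = h_1(\mathcal{H}_1) \sqcap h_2(\langle x_2 \rangle) = \mathcal{H} \sqcap h_2(\langle x_2 \rangle) = h_2(\langle x_2 \rangle)$, also gleich dem Identitätselement von $\mathcal{L}(h_2(\langle x_2 \rangle))$. Für ein beliebiges Atom $\langle y \rangle \in \mathcal{L}(\mathcal{H}_1)$ liefert Bedingung \uproman{3}, dass $u_{\langle x_2 \rangle}(\langle y \rangle)$ ein Atom in $\mathcal{L}(\mathcal{H})$ ist; da es in $h_2(\langle x_2 \rangle)$ liegt, bleibt es auch ein Atom im Unterverband $\mathcal{L}(h_2(\langle x_2 \rangle))$. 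Um schließlich $\dim(h_2(\langle x_2 \rangle)) \geq 3$ sicherzustellen, wähle ich drei paarweise orthogonale eindimensionale Unterräume in $\mathcal{H}_1$ (möglich wegen $\dim \mathcal{H}_1 \geq 3$); deren Bilder unter $h_1$ sind ebenfalls paarweise orthogonal, da unitäre c-Morphismen wegen $h_1(p') = h_1(p)'$ die Orthokomplementation erhalten, und die Schnitte mit $h_2(\langle x_2 \rangle)$ liefern drei paarweise orthogonale, nach Bedingung \uproman{3} nichttriviale Atome in $h_2(\langle x_2 \rangle)$. Damit sind sämtliche Voraussetzungen von Theorem 5.2 erfüllt, und $u_{\langle x_2 \rangle}$ ist ein Isomorphismus.

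Die Hauptschwierigkeit dieses Beweises sehe ich im sauberen Nachweis der Kompatibilitätserhaltung sowie in der leicht zu übersehenen Sicherstellung von $\dim(h_2(\langle x_2 \rangle)) \geq 3$, welche für die Anwendbarkeit von Theorem 5.2 unverzichtbar ist. Der restliche Aufbau ist eine systematische Kombination der in diesem Abschnitt bereits bereitgestellten Hilfssätze (insbesondere Lemma 5.1 und Theorem 5.2) mit den Kopplungsannahmen \uproman{2} und \uproman{3}.
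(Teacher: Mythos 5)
Ihr Beweis ist korrekt und folgt derselben Gesamtstrategie wie die Arbeit: Nachweis, dass $u_{\langle x_2 \rangle}$ ein unitärer c-Morphismus ist, der Atome auf Atome abbildet, und anschließende Anwendung des Isomorphiekriteriums. Zwei Anmerkungen: Erstens ist das von Ihnen gemeinte Kriterium in der Arbeit Theorem 5.3 (unitärer c-Morphismus, der ein Atom auf ein Atom abbildet, zwischen Hilbert-Verbänden der Dimension $\geq 3$), nicht Theorem 5.2; Letzteres betrifft die Operatoren $F_{y,x}$ zu m-Morphismen. Zweitens weichen Sie beim Nachweis der Kompatibilitätserhaltung vom Text ab: Die Arbeit bleibt im abstrakten Verbandsrahmen, berechnet zunächst $u_{\langle x_2 \rangle}(p_1^{\perp}) = u_{\langle x_2 \rangle}(p_1)^{\perp} \sqcap h_2(\langle x_2 \rangle)$ (das relative Orthokomplement im Unterverband) und verifiziert dann das algebraische Kriterium aus Lemma 5.2, während Sie die hilbertraumspezifische Charakterisierung der Kompatibilität über kommutierende Orthogonalprojektoren heranziehen. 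Ihr Argument ist tragfähig (paarweise kommutierende Projektoren liefern kommutierende Produkte, und das Produkt zweier kommutierender Orthogonalprojektoren projiziert auf den Schnitt der Bilder), stützt sich aber auf einen in der Arbeit nicht bereitgestellten Fakt und sollte noch erwähnen, dass die Kompatibilität im Zielverband $\mathcal{L}(h_2(\langle x_2 \rangle))$ bezüglich des relativen Orthokomplements zu lesen ist — was unmittelbar folgt, da auch $P_{h_2(\langle x_2 \rangle)} - P_{u_{\langle x_2 \rangle}(p)}$ mit den übrigen Projektoren kommutiert. Positiv hervorzuheben ist Ihre explizite Verifikation von $\dim(h_2(\langle x_2 \rangle)) \geq 3$ über drei paarweise orthogonale Atome; diesen für die Anwendbarkeit des Theorems nötigen Punkt übergeht die Arbeit stillschweigend.
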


\begin{proof}
Wir zeigen zuerst, dass es sich bei $u_{\langle x_2 \rangle}$ um einen c-Morphismus handelt. Seien dazu $x_1 \in \mathcal{H}_1$, $x_2 \in \mathcal{H}_2$, $I$ eine beliebige Indexmenge und $(p_{1,i})_{i \in I}$ eine Folge in $\mathcal{L}(\mathcal{H}_1)$. Es gilt 
\begin{align}
    u_{\langle x_2 \rangle}(\sqcup_{i \in I} p_{1,i}) =& \: \: h_1(\sqcup_{i \in I} p_{1,i}) \sqcap h_2(\langle x_2 \rangle) \nonumber\\ =& \: \: (\sqcup_{i \in I} h_1(p_{1,i})) \sqcap h_2(\langle x_2 \rangle)
\end{align}
Da für alle $i \in I$ gilt, dass $p_{1,i} \in \mathcal{L}(\mathcal{H}_1)$ und $\langle x_2 \rangle \in \mathcal{L}(\mathcal{H}_2)$ ist, folgt mittels \uproman{2}, dass $h_1(p_{1,i}) \leftrightarrow h_2(\langle x_2 \rangle)$ $\forall i \in I$ ist. Mittels Lemma 5.1 folgt damit 
\begin{align}
    u_{\langle x_2 \rangle}(\sqcup_{i \in I} p_{1,i}) =& \: \: \sqcup_{i \in I} (h_1(p_{1,i}) \sqcap h_2(\langle x_2 \rangle)) \nonumber\\ =& \: \: \sqcup_{i \in I} u_{\langle x_2 \rangle} (p_{1,i}).
\end{align}
Sei nun $p_1, q_1 \in \mathcal{L}(\mathcal{H}_1)$, sodass $p_1 \leftrightarrow q_1$ ist. Wir wollen nun zeigen, dass daraus $u_{\langle x_2 \rangle}(q_1) \leftrightarrow u_{\langle x_2 \rangle}(p_1)$ folgt. Dazu wollen wir Lemma 5.2 nutzen. Wir berechnen daher zuerst $u_{\langle x_2 \rangle}(p_1^{\perp})$:
\begin{align}
    u_{\langle x_2 \rangle}(p_1^{\perp}) =& \: \: h_1(p_1^{\perp}) \sqcap h_2(\langle x_2 \rangle) \nonumber\\ =& \: \: h_1(p_1)^{\perp} \sqcap h_2(\langle x_2 \rangle) \nonumber\\ =& \: \: (h_1(p_1)^{\perp} \sqcup h_2(\langle x_2 \rangle)^{\perp}) \sqcap h_2(\langle x_2 \rangle) \nonumber\\ =& \: \: (h_1(p_1) \sqcap h_2(\langle x_2 \rangle))^{\perp} \sqcap h_2(\langle x_2 \rangle) \nonumber\\ =& \: \: u_{\langle x_2 \rangle}(p_1)^{\perp} \sqcap h_2(\langle x_2 \rangle) 
\end{align}
Dabei wurde im dritten Schritt ausgenutzt, dass $h_1(p_1)^{\perp} \leftrightarrow h_2(\langle x_2 \rangle)$ ist und man demzufolge Lemma 5.2 anwenden kann. 

Es gilt nun 
\begin{align}
    (u_{\langle x_2 \rangle}(q_1) \sqcup (u_{\langle x_2 \rangle}(p_1)^{\perp} \sqcap& \: h_2(\langle x_2 \rangle))) \sqcap u_{\langle x_2 \rangle}(p_1) \nonumber\\ = \: &(u_{\langle x_2 \rangle}(q_1) \sqcup u_{\langle x_2 \rangle}(p_1^{\perp})) \sqcap u_{\langle x_2 \rangle}(p_1) \nonumber\\ = \:  &u_{\langle x_2 \rangle}((q_1 \sqcup p_1^{\perp}) \sqcap p_1) \nonumber\\ = \: &u_{\langle x_2 \rangle}(q_1 \sqcap p_1) \nonumber\\ = \: &u_{\langle x_2 \rangle}(q_1) \sqcap  u_{\langle x_2 \rangle}(p_1)
\end{align}
Damit folgt nach Lemma 5.2, dass $u_{\langle x_2 \rangle}(q_1) \leftrightarrow u_{\langle x_2 \rangle}(p_1)$ ist. Damit haben wir gezeigt, dass $u_{\langle x_2 \rangle}$ ein c-Morphismus ist. $u_{\langle x_2 \rangle}$ ist darüber hinaus unitär, da $u_{\langle x_2 \rangle}(\mathcal{H}_1) = h_2(\langle x_2 \rangle)$ ist. 

Sei nun $p_1 \in \mathcal{L}(\mathcal{H}_1)$ ein Atom, dann gilt $u_{\langle x_2 \rangle}(p_1) = h_1(p_1) \sqcap h_2(\langle x_2 \rangle)$. Da $\langle x_2 \rangle$ ein Atom aus $\mathcal{L}(\mathcal{H}_2)$ ist, folgt, dass $ h_1(p_1) \sqcap h_2(\langle x_2 \rangle)$ ein Atom in $\mathcal{L}(\mathcal{H})$ ist. Damit bildet $u_{\langle x_2 \rangle}$ Atome auf Atome ab und mit Theorem 5.3 folgt, dass $u_{\langle x_2 \rangle}$ ein Isomorphismus ist. Analog argumentiert man für $v_{\langle x_1 \rangle}$.
\end{proof}

Mittels Lemma 5.4 sind wir nun in der Lage zu beweisen, dass es sich bei den Abbildungen $h_1$ und $h_2$ um m-Morphismen handelt!

\begin{lemma}
Die Abbildungen $h_1$ und $h_2$ sind m-Morphismen
\end{lemma}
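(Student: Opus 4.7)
The plan is to invoke the sufficient criterion stated earlier (the Satz giving: a c-Morphism $f$ is an m-Morphism provided $f(\langle x-y\rangle)\subseteq f(\langle x\rangle)\oplus f(\langle y\rangle)$ for all $x\neq y$). Since $h_1$ and $h_2$ enter symmetrically I carry out the verification only for $h_1$; swapping indices yields the statement for $h_2$.

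Fix $x,y\in\mathcal{H}_1$ with $x\neq y$. The elementary observation that drives the whole proof is the subspace inclusion
\begin{align*}
\langle x-y\rangle \;\subseteq\; \textrm{span}_{\mathbb C}\{x,y\} \;=\; \langle x\rangle \sqcup \langle y\rangle,
\end{align*}
which holds because the span of two vectors is finite-dimensional and therefore automatically closed. Because $h_1$ is a c-Morphism it preserves arbitrary joins (axiom \textit{ii}) and is in particular order-preserving, so applying $h_1$ gives at once
\begin{align*}
h_1(\langle x-y\rangle)\;\subseteq\; h_1(\langle x\rangle\sqcup\langle y\rangle)\;=\;h_1(\langle x\rangle)\sqcup h_1(\langle y\rangle).
\end{align*}

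It remains to upgrade the lattice join on the right-hand side to the direct sum $\oplus$ required by the criterion. I handle this by a case distinction. If $\langle x\rangle=\langle y\rangle$, then $x\neq y$ forces $y=\lambda x$ with $\lambda\neq 1$, hence $\langle x-y\rangle=\langle x\rangle$ and the inclusion is trivial. If $\langle x\rangle\neq\langle y\rangle$, these are two distinct atoms of $\mathcal{L}(\mathcal{H}_1)$, so $\langle x\rangle\sqcap\langle y\rangle=\mathbf{0}$. Since $h_1$ is a \emph{unitary} c-Morphism, the general facts derived at the start of section~5 (namely $h_1(\mathbf{I})=\mathbf{I}$, hence $h_1(p^\perp)=h_1(p)^\perp$, combined with the meet-preservation established there via Lemma~1) give $h_1(\langle x\rangle)\sqcap h_1(\langle y\rangle)=h_1(\mathbf{0})=\mathbf{0}$. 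Therefore the join of the two images is an internal direct sum, and the criterion of the Satz applies, proving that $h_1$ is an m-Morphism.

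The main place where care is needed is not the inclusion itself, which is essentially a one-line consequence of monotonicity, but rather the bookkeeping that identifies $\sqcup$ with $\oplus$; this is precisely what the case distinction and the meet-preservation of unitary c-Morphisms deliver. The analogous argument for $h_2$ goes through verbatim after interchanging the roles of $\mathcal{H}_1$ and $\mathcal{H}_2$.
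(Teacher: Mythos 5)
Your reduction to the criterion of Satz 5.1 and the monotonicity argument giving $h_1(\langle x-y\rangle)\subseteq h_1(\langle x\rangle)\sqcup h_1(\langle y\rangle)$ are fine, but the step you yourself identify as the crux --- upgrading $\sqcup$ to $\oplus$ --- contains a genuine gap. You argue that because $h_1(\langle x\rangle)\sqcap h_1(\langle y\rangle)=\mathbf{0}$, the join is ``an internal direct sum.'' That inference is only valid when the algebraic sum of the two subspaces is already closed (e.g.\ when they are finite-dimensional). But the images $h_1(\langle x\rangle)$, $h_1(\langle y\rangle)$ are in general \emph{not} atoms of $\mathcal{L}(\mathcal{H})$ and not finite-dimensional: Lemma 5.4 shows that $\mathcal{L}(h_1(\langle x_1\rangle))$ is isomorphic to all of $\mathcal{L}(\mathcal{H}_2)$, so $h_1(\langle x_1\rangle)$ has the dimension of $\mathcal{H}_2$, which may be infinite. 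For two infinite-dimensional closed subspaces with trivial intersection the algebraic sum need not be closed, so the lattice join $\overline{\mathrm{span}_{\mathbb C}(\,\cdot\cup\cdot\,)}$ can strictly contain the algebraic direct sum $\oplus$ that Satz 5.1 demands. Your argument therefore only places $h_1(\langle x-y\rangle)$ inside the closed join, not inside $h_1(\langle x\rangle)\oplus h_1(\langle y\rangle)$.

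This is precisely the difficulty the paper's proof is built to avoid, and it is where the hypotheses \uproman{2} and \uproman{3} (which your proof never uses beyond unitarity) enter. The paper takes an arbitrary $x\in h_1(\langle x_1-y_1\rangle)$ and uses the isomorphism $v_{\langle x_1-y_1\rangle}$ from Lemma 5.4 to find an atom $\langle x_2\rangle\in\mathcal{L}(\mathcal{H}_2)$ with $\langle x\rangle=h_1(\langle x_1-y_1\rangle)\sqcap h_2(\langle x_2\rangle)$; it then intersects the join $h_1(\langle x_1\rangle)\sqcup h_1(\langle y_1\rangle)$ with $h_2(\langle x_2\rangle)$ and distributes (Lemma 5.3, via the compatibility from \uproman{2}), so that $\langle x\rangle$ lands in the join of the two \emph{atoms} $h_1(\langle x_1\rangle)\sqcap h_2(\langle x_2\rangle)$ and $h_1(\langle y_1\rangle)\sqcap h_2(\langle x_2\rangle)$ (atoms by \uproman{3}). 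Only for these finite-dimensional pieces does join coincide with algebraic direct sum, yielding the decomposition $x=v+w$ with $v\in h_1(\langle x_1\rangle)$, $w\in h_1(\langle y_1\rangle)$, and hence the inclusion required by Satz 5.1. To repair your proof you would need to reinstate this detour through $h_2$; the case distinction $\langle x\rangle=\langle y\rangle$ versus $\langle x\rangle\neq\langle y\rangle$ is harmless but does not substitute for it.
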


\begin{proof}
Seien $x_1, y_1 \in \mathcal{H}_1$ mit $x_1, y_1 \neq 0$ und sei $x \in h_1(\langle x_1 - y_1 \rangle)$. Aus Lemma 5.4 folgt, dass die Abbildung $v_{\langle x_1 - y_1 \rangle}: \mathcal{L}(\mathcal{H}_2) \rightarrow \mathcal{L}(h_1(\langle x_1 - y_1 \rangle))$ ein Isomorphismus ist. Aus der Definition von $v_{\langle x_1 - y_1 \rangle}$ folgt, dass Atome aus $\mathcal{L}(\mathcal{H}_2)$ auf Atome aus $\mathcal{L}(h_1(\langle x_1 - y_1 \rangle))$ abgebildet werden und mittels der Umkehrungabbildung von $v_{\langle x_1 - y_1 \rangle}$ auch umgedreht. Demzufolge existiert ein Atom $\langle x_2 \rangle \in \mathcal{L}(\mathcal{H}_2)$, sodass 
\begin{align}
    v_{\langle x_1 - y_1 \rangle}(\langle x_2 \rangle) = h_1(\langle x_1 - y_1 \rangle) \sqcap h_2(\langle x_2 \rangle) = \langle x \rangle
\end{align}
ist. Aus $\langle x_1 - y_1 \rangle \subseteq \langle x_1 \rangle \sqcup \langle y_1 \rangle$ folgt, dass 
\begin{align}
    h_1(\langle x_1 - y_1 \rangle) =& \: \: h_1(\langle x_1 - y_1 \rangle \sqcap (\langle x_1 \rangle \sqcup \langle y_1 \rangle)) \nonumber\\ =& \: \: h_1(\langle x_1 - y_1 \rangle) \sqcap h_1(\langle x_1 \rangle \sqcup \langle y_1 \rangle) \nonumber\\ \subseteq& \: \: h_1(\langle x_1 \rangle \sqcup \langle y_1 \rangle) \nonumber\\ =& \: \: h_1(\langle x_1 \rangle) \sqcup h_1(\langle y_1 \rangle).
\end{align}
Das impliziert
\begin{align}
    \langle x \rangle =& \: \: h_1(\langle x_1 - y_1 \rangle) \sqcap h_2(\langle x_2 \rangle) \nonumber\\ \subseteq& \: \: (h_1(\langle x_1 \rangle) \sqcup h_1(\langle y_1 \rangle)) \sqcap h_2(\langle x_2 \rangle) \nonumber\\ =& \: \: (h_1(\langle x_1 \rangle) \sqcap h_2(\langle x_2 \rangle)) \sqcup (h_1(\langle y_1 \rangle) \sqcap h_2(\langle x_2 \rangle)).
\end{align}
Im letzten Schritt haben wir ausgenutzt, dass wegen \uproman{2} folgt, dass $h_2(\langle x_2 \rangle)$ jeweils mit $h_1(\langle y_1 \rangle)$ und $h_1(\langle x_1 \rangle)$ kompatibel ist und damit nach Lemma 5.3 die Distributivität folgt. Man beachte nun, dass $h_1(\langle x_1 \rangle) \sqcap h_2(\langle x_2 \rangle)$ und $h_1(\langle y_1 \rangle) \sqcap h_2(\langle x_2 \rangle)$ nach \uproman{3} Atome und damit endlichdimensionale Vektorräume sind. Da jeder endlichdimensionale Vektorraum abgeschlossen ist und auch die lineare Hülle endlichdimensionaler Vektorräume endlichdimensional ist, folgt 
\begin{align}
    \langle x \rangle \subseteq (h_1(\langle x_1 \rangle) \sqcap h_2(\langle x_2 \rangle)) \oplus (h_1(\langle y_1 \rangle) \sqcap h_2(\langle x_2 \rangle)).
\end{align}
Es folgt damit die Darstellung 
\begin{align}
    x = v + w,
\end{align}
wobei $v \in h_1(\langle x_1 \rangle)$ und $w \in h_1(\langle y_1 \rangle)$ ist. Da das $x$ beliebig war folgt, dass 
\begin{align}
    h_1(\langle x_1 - y_1 \rangle) \subseteq h_1(\langle x_1 \rangle) \oplus h_1(\langle y_2 \rangle)
\end{align}
ist. Mittels Satz 5.1 folgt damit, dass $h_1$ ein m-Morphismus ist. Analog argumentiert man für $h_2$.
\end{proof}

Da es sich bei $h_1$ und $h_2$ um m-Morphismen handelt, existieren gemäß Theorem 5.2 die Abbildungen $F_{y_1,x_1}: h_1(\langle x_1 \rangle) \rightarrow h_1(\langle y_1 \rangle)$ für $h_1$ und die Abbildungen $K_{y_1,x_2}: h_2(\langle x_2 \rangle) \rightarrow h_2(\langle y_2 \rangle)$ für $h_2$.

Im Folgenden wollen wir uns einige wichtige Eigenschaften dieser Operatoren ansehen. Zuerst benötigen wir aber noch folgende hilfreiche Definition: 

\begin{definition}
Seien $x_1, y_1 \in \mathcal{H}_1$, $x_2,y_2 \in \mathcal{H}_2$ und $x \in \mathcal{H}$. Wir definieren die linearen Operatoren 
\begin{align}
      \tilde{F}_{y_1,x_1} : \mathcal{H} \longrightarrow& \: \: \mathcal{H} \nonumber\\
     x \longmapsto& \: \: \begin{cases} \tilde{F}_{y_1,x_1}(x) = x & \text{falls } x \in \mathcal{H}\setminus h_1(\langle x_1 \rangle) \\ \tilde{F}_{y_1,x_1}(x) = F_{y_1,x_1}(x) & \text{falls } x \in h_1(\langle x_1 \rangle) \end{cases}
\end{align}
und 
\begin{align}
     \tilde{K}_{y_2,x_2} : \mathcal{H} \longrightarrow& \: \: \mathcal{H} \nonumber\\
     x \longmapsto& \: \: \begin{cases} \tilde{K}_{y_2,x_2}(x) = x & \text{falls } x \in \mathcal{H}\setminus h_2(\langle x_2 \rangle) \\ \tilde{K}_{y_2,x_2}(x) = K_{y_2,x_2}(x) & \text{falls } x \in h_2(\langle x_2 \rangle) \end{cases}
\end{align}
\end{definition}

\begin{lemma}
Seien $x_1, y_1 \in \mathcal{H}_1$ und $x_2, y_2 \in \mathcal{H}_2$ jeweils linear unabhängig und ungleich dem Nullvektor in $\mathcal{H}_1$ bzw. $\mathcal{H}_2$. Sei $x \in h_1(\langle x_1 \rangle) \sqcap h_2(\langle x_2 \rangle)$. Dann gilt 
\begin{align}
    F_{y_1,x_1}(x) \in& \: h_1(\langle y_1 \rangle) \sqcap h_2(\langle x_2 \rangle), \\ K_{y_2,x_2}(x) \in& \: h_1(\langle x_1 \rangle) \sqcap h_2(\langle y_2 \rangle) 
\end{align}
und
\begin{align}
    (F_{y_1,x_1} \circ K_{y_2,x_2})(x) = (K_{y_2,x_2} \circ F_{y_1,x_1})(x). 
\end{align}
\end{lemma}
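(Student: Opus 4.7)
The plan is to handle the three assertions in order: the first two by essentially the same mechanism, and the third by reducing to a scalar calibration inside the one-dimensional atom $h_1(\langle y_1 \rangle) \sqcap h_2(\langle y_2 \rangle)$.

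For the first claim, $F_{y_1, x_1}(x) \in h_1(\langle y_1 \rangle)$ is tautological from the codomain of $F_{y_1, x_1}$ supplied by Theorem 5.2, so the substantive task is $F_{y_1, x_1}(x) \in h_2(\langle x_2 \rangle)$. I would first upgrade the isomorphism $u_{\langle x_2 \rangle}$ of Lemma 5.4 to an m-morphism by reproducing the template of Lemma 5.5: intersect the inclusion $h_1(\langle x_1 - y_1 \rangle) \subseteq h_1(\langle x_1 \rangle) \oplus h_1(\langle y_1 \rangle)$ with $h_2(\langle x_2 \rangle)$ and use compatibility \uproman{2} together with Lemma 5.3 to distribute the meet (the triple $(h_1(\langle x_1 \rangle), h_1(\langle y_1 \rangle), h_2(\langle x_2 \rangle))$ is distributive since $h_2(\langle x_2 \rangle)$ is compatible with each of the other two elements). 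Satz 5.1 then certifies $u_{\langle x_2 \rangle}$ as an m-morphism, and Theorem 5.2 furnishes bijective operators $G_{y_1, x_1}: h_1(\langle x_1 \rangle) \sqcap h_2(\langle x_2 \rangle) \to h_1(\langle y_1 \rangle) \sqcap h_2(\langle x_2 \rangle)$. The crucial step is to identify $F_{y_1, x_1}$ restricted to the overlap atom with $G_{y_1, x_1}$, which immediately yields the missing containment: both families obey axioms (i)--(v) of Theorem 5.2, property (i) normalizes them to the identity at $y_1 = x_1$, and the remaining equations propagate this choice consistently across the semilinear extension. The second assertion then follows by the symmetric argument with $v_{\langle x_1 \rangle}$ and the family $K_{y_2, x_2}$ coming from $h_2$.

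For the third claim, the first two already place both $(F_{y_1, x_1} \circ K_{y_2, x_2})(x)$ and $(K_{y_2, x_2} \circ F_{y_1, x_1})(x)$ inside the one-dimensional atom $h_1(\langle y_1 \rangle) \sqcap h_2(\langle y_2 \rangle)$ — note in particular that the two compositions are well defined precisely because of the first two claims. Hence the two vectors differ by at most a complex scalar. To pin this scalar down to $1$, I would repeat the identification step at the remaining three corners of the atom square by invoking Theorem 5.2 for $u_{\langle y_2 \rangle}$ and for $v_{\langle x_1 \rangle}, v_{\langle y_1 \rangle}$. Both compositions then rewrite as the two canonical paths around the square of intermediate atoms, and the Theorem 5.2 normalization forces equality.

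The main obstacle I expect is precisely this identification step: Theorem 5.2 asserts only existence of the operators $F_{y_1, x_1}$ and $G_{y_1, x_1}$, so the equality of their restrictions to the overlap atom must be extracted from the functional equations (i)--(v) together with a coherent choice in both constructions, rather than from any explicit formula. Once the identification is in hand, the remaining work is essentially bookkeeping.
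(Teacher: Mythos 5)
There is a genuine gap at the step you yourself flag as the main obstacle, and it is not just bookkeeping. Your argument for $F_{y_1,x_1}(x)\in h_2(\langle x_2\rangle)$ rests on identifying the restriction of $F_{y_1,x_1}$ to the atom $h_1(\langle x_1\rangle)\sqcap h_2(\langle x_2\rangle)$ with the operator $G_{y_1,x_1}$ that Theorem 5.2 attaches to the m-Morphismus $u_{\langle x_2\rangle}$. This identification is circular: for the equation $F_{y_1,x_1}|_{\text{Atom}} = G_{y_1,x_1}$ to be meaningful at all, $F_{y_1,x_1}$ must already be known to map the atom into $h_1(\langle y_1\rangle)\sqcap h_2(\langle x_2\rangle)$, which is exactly the containment to be proven. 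Moreover, Theorem 5.2 is purely an existence statement; the functional equations \textit{(i)}--\textit{(v)} tie the members of one family to each other, but they give no handle for comparing the family attached to $h_1$ (operators between the large subspaces $h_1(\langle x_1\rangle)$ and $h_1(\langle y_1\rangle)$ of $\mathcal{H}$) with the family attached to $u_{\langle x_2\rangle}$ (operators between one-dimensional atoms). The normalization $F_{x_1,x_1}=\mathrm{id}$ at the single point $y_1=x_1$ does not propagate to $y_1$ linearly independent of $x_1$: the remaining equations leave a consistent gauge freedom, and in any case they say nothing about a second, independently constructed family. What is actually needed --- and what the paper uses --- is the defining property of $K_{y_2,x_2}$ (and $F_{y_1,x_1}$) as a component of the decomposition furnished by the m-morphism inclusion, namely $x = K_{y_2,x_2}(x) + u$ with $K_{y_2,x_2}(x)\in h_2(\langle y_2\rangle)$ and $u\in h_2(\langle x_2-y_2\rangle)$. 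The paper then splits $K_{y_2,x_2}(x)=v+w$ and $u=z+d$ into their $h_1(\langle x_1\rangle)$- and $h_1(\langle x_1\rangle)^{\perp}$-components via the compatibility hypothesis and Lemma 5.2, reads off $w+d=0$ from $x-v-z=w+d$, and uses the linear independence of $x_2$ and $y_2$ to force $w=d=0$, hence $K_{y_2,x_2}(x)=v\in h_1(\langle x_1\rangle)\sqcap h_2(\langle y_2\rangle)$.

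Your reduction of the third claim to a scalar calibration inside the one-dimensional atom $h_1(\langle y_1\rangle)\sqcap h_2(\langle y_2\rangle)$ is a correct and sensible observation, but pinning the scalar down to $1$ again leans on the same unproven identification. The paper avoids any normalization argument: using the extensions $\tilde F_{y_1,x_1}$ and $\tilde K_{y_2,x_2}$ of Definition 5.8 together with the two decompositions $x=K_{y_2,x_2}(x)+u$ and $x=F_{y_1,x_1}(x)+e$, both compositions are computed explicitly and each equals $x-e-u$. I would therefore rebuild the proof around these defining decompositions rather than around the abstract functional equations of Theorem 5.2.
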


\begin{proof}
Sei $x \in h_1(\langle x_1 \rangle) \sqcap h_2(\langle x_2 \rangle)$. Es gilt per Definition 
\begin{align}
    x = K_{y_2,x_2}(x) + u
\end{align}
mit $K_{y_2,x_2}(x) \in h_2(\langle y_2 \rangle)$ und eindeutigem $u \in h_2(\langle x_2 - y_2 \rangle) \subseteq (h_2(\langle x_2 \rangle) \oplus h_2(\langle y_2 \rangle))$. Wir zeigen, dass $K_{y_2,x_2}(x) \in h_1(\langle x_1 \rangle)$ ist. Wir wissen, dass gilt 
\begin{align}
    h_1(\langle x_1 \rangle) \leftrightarrow& \: h_2(\langle y_2 \rangle), \\  h_1(\langle x_1 \rangle) \leftrightarrow& \: h_2(\langle x_2 - y_2 \rangle)
\end{align}
Aus Lemma 5.2 folgt, dass 
\begin{align}
    h_2(\langle y_2 \rangle) = (h_1(\langle x_1 \rangle) \sqcap h_2(\langle y_2 \rangle)) \sqcup (h_1(\langle x_1 \rangle)^{\perp} \sqcap h_2(\langle y_2 \rangle))
\end{align}
und 
\begin{align}
    &h_2(\langle x_2 - y_2 \rangle) \nonumber\\ = \: &(h_1(\langle x_1 \rangle) \sqcap h_2(\langle x_2 - y_2 \rangle)) \sqcup (h_1(\langle x_1 \rangle)^{\perp} \sqcap h_2(\langle x_2 - y_2 \rangle)).
\end{align}
Damit gilt 
\begin{align}
    K_{y_2,x_2}(x) = v + w,
\end{align}
mit
\begin{align}
    v \in& \: h_1(\langle x_1 \rangle) \sqcap h_2(\langle y_2 \rangle), \\ w \in& \: h_1(\langle x_1 \rangle)^{\perp} \sqcap h_2(\langle y_2 \rangle)
\end{align}
und 
\begin{align}
    u = z + d
\end{align}
mit 
\begin{align}
    z \in& \: h_1(\langle x_1 \rangle) \sqcap h_2(\langle x_2 - y_2 \rangle), \\ d \in& \: h_1(\langle x_1 \rangle)^{\perp} \sqcap h_2(\langle x_2 - y_2 \rangle). 
\end{align}
Es gilt nun 
\begin{align}
    x = v + w + z + d \: \: \Leftrightarrow \: \: x - v - z = w + d
\end{align}
Wegen $w + d \in (h_1(\langle x_1 \rangle)^{\perp} \sqcap h_2(\langle y_2 \rangle)) \sqcup (h_1(\langle x_1 \rangle)^{\perp} \sqcap h_2(\langle x_2 - y_2 \rangle))$ und \\ $x - v - z \in (h_1(\langle x_1 \rangle) \sqcap h_2(\langle x_2 \rangle)) \sqcup (h_1(\langle x_1 \rangle) \sqcap h_2(\langle y_2 \rangle)) \sqcup (h_1(\langle x_1 \rangle) \sqcap h_2(\langle x_2 - y_2 \rangle))$ folgt $w + d \in h_1(\langle x_1 \rangle) \sqcap h_1(\langle x_1 \rangle)^{\perp}$ und damit $w + d = 0$. Weiter gilt, dass $h_2(\langle x_2 - y_2 \rangle \sqcap \langle y_2 \rangle) = h_2(\langle x_2 - y_2 \rangle) \sqcap h_2(\langle y_2 \rangle) = 0$ ist, da $x_2$ und $y_2$ linear unabhängig in $\mathcal{H}_2$ sind. Da aus obigem ebenfalls folgt, dass die Summe $w + d \in h_2(\langle x_2 - y_2 \rangle) \sqcup h_2(\langle y_2 \rangle)$ ist, folgt zusammen mit $w + d = 0$, dass $w = d = 0$ ist. Daraus folgt nun aber, dass 
\begin{align}
    K_{y_2,x_2}(x) = v \in h_1(\langle x_1 \rangle) \sqcap h_2(\langle y_2 \rangle)
\end{align}
ist. Auf analogem Wege, mittels der eindeutigen Darstellung $x = F_{y_1,x_1}(x) + e$ mit $e \in h_1(\langle x_1 - y_1 \rangle)$, zeigt man, dass $F_{y_1,x_1}(x) \in h_1(\langle y_1 \rangle) \sqcap h_2(\langle x_2 \rangle)$ ist. Um nun die letzte Gleichung des Lemmas zu zeigen, betrachten wir wieder die obigen Darstellungen von $x \in h_1(\langle x_1 \rangle) \sqcap h_2(\langle x_2 \rangle)$: 
\begin{align}
    x = K_{y_2,x_2}(x) + u \Longleftrightarrow K_{y_2,x_2}(x) = x - u
\end{align}
bzw. 
\begin{align}
    x = F_{y_1,x_1}(x) + e \Longleftrightarrow F_{y_1,x_1}(x) = x - e.
\end{align}
Zuerst bemerken wir, dass 
\begin{align}
    \tilde{K}_{y_2,x_2}(x) = x - u \label{eq:K}
\end{align}
und
\begin{align}
    \tilde{F}_{y_1,x_1}(x) = x - e, \label{eq:F}
\end{align}
da $x \in h_1(\langle x_1 \rangle) \sqcap h_2(\langle x_2 \rangle)$ ist. Wir wenden nun auf den Ausdruck \eqref{eq:K} den Operator $\tilde{F}_{y_1,x_1}$ an:
\begin{align}
    (\tilde{F}_{y_1,x_1} \circ \tilde{K}_{y_2,x_2})(x) =& \: \tilde{F}_{y_1,x_1}(\tilde{K}_{y_2,x_2}(x)) \nonumber\\ =& \: \tilde{F}_{y_1,x_1}(K_{y_2,x_2}(x)) \nonumber\\ =& \: \tilde{F}_{y_1,x_1}(x - u) \nonumber\\ =& \: \tilde{F}_{y_1,x_1}(x) - \tilde{F}_{y_1,x_1}(u) \nonumber\\ =& \: F_{y_1,x_1}(x) - u \nonumber\\ =& \: x - e - u 
\end{align}
Anderseits gilt durch Anwendung von $\tilde{K}_{y_2,x_2}$ auf den Ausdruck \eqref{eq:F}:
\begin{align}
    (\tilde{K}_{y_2,x_2} \circ \tilde{F}_{y_1,x_1})(x) =& \: \tilde{K}_{y_2,x_2}(\tilde{F}_{y_1,x_1}(x)) \nonumber\\ =& \: \tilde{K}_{y_2,x_2}(x - e) \nonumber\\ =& \: \tilde{K}_{y_2,x_2}(x) - \tilde{K}_{y_2,x_2}(e) \nonumber\\ =& \: K_{y_2,x_2}(x) - e \nonumber\\ =& \: x - u - e
\end{align}
Also gilt
\begin{align}
    & \: (\tilde{K}_{y_2,x_2} \circ \tilde{F}_{y_1,x_1})(x) = (\tilde{F}_{y_1,x_1} \circ \tilde{K}_{y_2,x_2})(x) \nonumber\\ \Leftrightarrow& \: (\tilde{K}_{y_2,x_2} \circ F_{y_1,x_1})(x) = (\tilde{F}_{y_1,x_1} \circ K_{y_2,x_2})(x)
\end{align}
Da nach obigem $ F_{y_1,x_1}(x) \in h_1(\langle y_1 \rangle) \sqcap h_2(\langle x_2 \rangle)$ und \\ $K_{y_2,x_2}(x) \in h_1(\langle x_1 \rangle) \sqcap h_2(\langle y_2 \rangle)$ ist, folgt
\begin{align}
    (K_{y_2,x_2} \circ F_{y_1,x_1})(x) = (F_{y_1,x_1} \circ K_{y_2,x_2})(x)
\end{align}
\end{proof}

Die Aussage aus Lemma 5.6 lässt sich nun noch verallgemeinern: Seien dazu $0 \neq x_1,y_1 \in \mathcal{H}_1$ und $0 \neq x_2,y_2 \in \mathcal{H}_2$ jeweils linear unabhängig und für $y_1$ gelte, dass $y_1 = \tilde{y} + \lambda x_1$ ist. Sei $x \in h_1(\langle x_1 \rangle) \sqcap h_2(\langle x_2 \rangle)$. Mittels der Eigenschaften von $F_{y_1,x_1}$ und $K_{y_2,x_2}$, sowie Lemma 5.6, folgt
\begin{align}
    &(F_{y_1,x_1} \circ K_{y_2,x_2})(x) = (K_{y_2,x_2} \circ F_{y_1,x_1})(x) \nonumber\\ \Leftrightarrow \: &(F_{\tilde{y}+\lambda x_1,x_1} \circ K_{y_2,x_2})(x) = (K_{y_2,x_2} \circ F_{\tilde{y}+\lambda x_1,x_1})(x) \nonumber\\ \Leftrightarrow \: &(F_{\tilde{y},x_1} \circ K_{y_2,x_2} + F_{\lambda x_1,x_1} \circ  K_{y_2,x_2})(x) = (K_{y_2,x_2} \circ F_{\tilde{y},x_1} + K_{y_2,x_2} \circ F_{\lambda x_1,x_1})(x) 
\end{align}
Aus Lemma 5.6 folgt, dass $(F_{\tilde{y},x_1} \circ K_{y_2,x_2})(x) = (K_{y_2,x_2} \circ F_{\tilde{y},x_1})(x)$ ist. Daraus folgt nun, dass 
\begin{align}
    (F_{\lambda x_1,x_1} \circ  K_{y_2,x_2})(x) = (K_{y_2,x_2} \circ F_{\lambda x_1,x_1})(x).
\end{align}
Analog kann man $(F_{y_1,x_1} \circ K_{\lambda x_2,x_2})(x) = (K_{\lambda x_2,x_2} \circ F_{y_1,x_1})(x)$ zeigen. Diese Beobachtung werden wir in den noch kommenden Lemmata nutzen.

Als nächstes werden wir zeigen, dass es sich bei den Abbildungen $h_1$ und $h_2$ um nicht gemischte m-Morphismen handelt. Dieser Fakt wird in darauffolgenden Lemmata noch wichtig werden.

\begin{lemma}
Die Abbildungen $h_1$ und $h_2$ sind keine gemischten m-Morphismen.
\end{lemma}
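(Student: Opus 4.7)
By symmetry it suffices to treat $h_1$; the plan is to show that for every nonzero $x_1 \in \mathcal{H}_1$ the two orthogonal projectors $P_1^{\langle x_1\rangle}, P_2^{\langle x_1\rangle}$ from Theorem~5.2 satisfy $\{P_1^{\langle x_1\rangle}, P_2^{\langle x_1\rangle}\} = \{0, \mathrm{id}_{h_1(\langle x_1\rangle)}\}$ and that which of them is zero does not depend on $x_1$; by Theorem~5.2(x) this is exactly the condition that $h_1$ is linear or antilinear. Fix nonzero $x_1 \in \mathcal{H}_1$, $x_2 \in \mathcal{H}_2$, and put $\langle x\rangle := h_1(\langle x_1\rangle) \sqcap h_2(\langle x_2\rangle)$, an atom by condition \uproman{3}. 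To show that $F_{\lambda x_1,x_1}$ stabilises $\langle x\rangle$ I would pick any $y_1 \in \mathcal{H}_1$ linearly independent from $x_1$, so that $y_1$ and $y_1 + \lambda x_1$ are both linearly independent from $x_1$. Lemma~5.6 then yields $F_{y_1,x_1}(x) \in h_2(\langle x_2\rangle)$ and $F_{y_1 + \lambda x_1, x_1}(x) \in h_2(\langle x_2\rangle)$, and Theorem~5.2(iv) gives the decomposition $F_{\lambda x_1,x_1}(x) = F_{y_1 + \lambda x_1, x_1}(x) - F_{y_1, x_1}(x) \in h_2(\langle x_2\rangle)$. Combined with the trivial inclusion $F_{\lambda x_1,x_1}(x) \in h_1(\langle x_1\rangle)$ this forces $F_{\lambda x_1,x_1}(x) \in \langle x\rangle$, and hence both $P_1^{\langle x_1\rangle}$ and $P_2^{\langle x_1\rangle}$ preserve $\langle x\rangle$.

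On the $1$-dimensional space $\langle x\rangle$ each $P_i^{\langle x_1\rangle}$ restricts to multiplication by some scalar $c_i \in \mathbb{C}$; Theorem~5.2(viii) forces $c_1 + c_2 = 1$ and idempotence forces $c_i^2 = c_i$, so $\{c_1, c_2\} = \{0, 1\}$. By Lemma~5.4 the map $v_{\langle x_1\rangle}$ is an isomorphism from $\mathcal{L}(\mathcal{H}_2)$ onto $\mathcal{L}(h_1(\langle x_1\rangle))$, so every $1$-dimensional subspace of $h_1(\langle x_1\rangle)$ arises as $h_1(\langle x_1\rangle) \sqcap h_2(\langle x_2\rangle)$ for a suitable $x_2 \in \mathcal{H}_2$. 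Hence $P_1^{\langle x_1\rangle} x \in \{0, x\}$ for every nonzero $x \in h_1(\langle x_1\rangle)$. Linearity of $P_1^{\langle x_1\rangle}$ then rules out a mixed pointwise behaviour: if $P_1^{\langle x_1\rangle} x = x$ and $P_1^{\langle x_1\rangle} y = 0$ for linearly independent $x, y$, then $P_1^{\langle x_1\rangle}(x + y) = x$ is neither $0$ nor $x + y$, contradicting the previous alternative. Therefore $P_1^{\langle x_1\rangle} \in \{0, \mathrm{id}_{h_1(\langle x_1\rangle)}\}$ and $P_2^{\langle x_1\rangle}$ is the complementary choice.

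To propagate the dichotomy to all $x_1$ I would invoke Theorem~5.2(ix): for any nonzero $y_1 \in \mathcal{H}_1$, $P_1^{\langle y_1\rangle} = F_{y_1,x_1} \circ P_1^{\langle x_1\rangle} \circ F_{x_1,y_1}$, which vanishes as soon as $P_1^{\langle x_1\rangle} = 0$ and equals $F_{y_1,y_1} = \mathrm{id}_{h_1(\langle y_1\rangle)}$ by Theorem~5.2(i)--(iii) as soon as $P_1^{\langle x_1\rangle} = \mathrm{id}_{h_1(\langle x_1\rangle)}$. Thus whichever case occurs at one $x_1$ occurs at every $x_1$, and by Definition~5.3 $h_1$ is either linear or antilinear, i.e.\ not mixed. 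The same argument handles $h_2$, with $K_{y_2,x_2}$ taking the role of $F_{y_1,x_1}$ and $u_{\langle x_2\rangle}$ replacing $v_{\langle x_1\rangle}$. The most delicate point is the first step: Lemma~5.6 only provides stability of $\langle x\rangle$ under $F_{y_1,x_1}$ for directions linearly independent from $x_1$, and the borderline case $y_1 = \lambda x_1$ that one really needs has to be extracted indirectly via the additivity in Theorem~5.2(iv).
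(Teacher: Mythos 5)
Your proof is correct, but it takes a genuinely different route from the paper's. The paper argues by contradiction: assuming $h_1$ is mixed, it picks a nonzero $z$ in the range of $P_1^{\langle x_1\rangle}$ and $s=K_{d_2,z_2}(z)$ in the range of $P_2^{\langle x_1\rangle}$ (both realised as atoms $h_1(\langle x_1\rangle)\sqcap h_2(\langle z_2\rangle)$ bzw.\ $h_1(\langle x_1\rangle)\sqcap h_2(\langle d_2\rangle)$ via the surjectivity of $v_{\langle x_1\rangle}$) and then plays the commutativity of $F_{\lambda x_1,x_1}$ with $K_{d_2,z_2}$ (Lemma~5.6 plus the subsequent Bemerkung) off against $F_{\lambda x_1,x_1}z=\lambda z$ and $F_{\lambda x_1,x_1}s=\lambda^* s$, arriving at $\lambda s=\lambda^* s$ and a contradiction for $\mathrm{Im}(\lambda)\neq 0$. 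You instead prove directly that each projector is $0$ or $\mathrm{id}_{h_1(\langle x_1\rangle)}$: your stabilisation step, writing $F_{\lambda x_1,x_1}=F_{y_1+\lambda x_1,x_1}-F_{y_1,x_1}$ to reach the direction $\lambda x_1$ excluded by Lemma~5.6, is exactly the additivity trick the paper itself uses in the Bemerkung after Lemma~5.6, but from there you rely only on the atom structure and elementary linear algebra (a linear idempotent that fixes or annihilates every line is $0$ or the identity) rather than on the commutation with $K_{d_2,z_2}$. One line is worth adding: from $F_{\lambda x_1,x_1}(x)\in\langle x\rangle$ for all $\lambda$ you should explicitly extract $P_1^{\langle x_1\rangle}x,P_2^{\langle x_1\rangle}x\in\langle x\rangle$ by taking $\lambda=1$ and $\lambda=i$ in $F_{\lambda x_1,x_1}=\lambda P_1^{\langle x_1\rangle}+\lambda^* P_2^{\langle x_1\rangle}$ and combining. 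What your version buys is that it also settles, via Theorem~5.2(ix), the consistency of the dichotomy across all base points $x_1$ --- which the definition of linear/antilinear actually demands and which the paper leaves implicit; what the paper's version buys is brevity, since the needed commutativity is already fully established beforehand.
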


\begin{proof}
Sei $x_1 \in \mathcal{H}_1$ und $\lambda \in \mathbb{C}$. Wir wissen nach Theorem 5.2, dass gilt 
\begin{align}
    &F_{\lambda x_1,x_1} = \lambda P_1^{\langle x_1 \rangle} + \lambda^* P_2^{\langle x_1 \rangle}, \\ &P_1^{\langle x_1 \rangle} \circ P_2^{\langle x_1 \rangle} = 0.
\end{align}
Per Definition folgt, dass wenn $h_1$ nicht gemischt ist, dass entweder $P_1^{\langle x_1 \rangle} = 0$ oder $P_2^{\langle x_1 \rangle} = 0$ ist. Wir nehmen nun an, $h_1$ sei gemischt, d.h. 
\begin{align}
    P_1^{\langle x_1 \rangle}[h_1(\langle x_1 \rangle)] \neq 0 \: \: \: \: \textrm{und} \: \: \: \: P_2^{\langle x_2 \rangle}[h_1(\langle x_1 \rangle)] \neq 0.
\end{align}
Sei $z \in P_1^{\langle x_1 \rangle}[h_1(\langle x_1 \rangle)]$ und $d \in P_2^{\langle x_2 \rangle}[h_1(\langle x_1 \rangle)]$. Aus Lemma 5.4 ist bekannt, dass die Abbildung $v_{\langle x_1 \rangle} : \mathcal{L}(\mathcal{H}_2) \rightarrow \mathcal{L}(h_1(\langle x_1 \rangle))$ surjektiv ist. Damit existieren Elemente $z_2, d_2 \in \mathcal{H}_2$, sodass 
\begin{align}
    &v_{\langle x_1 \rangle}(\langle z_2 \rangle) = \langle z \rangle = h_1(\langle x_1 \rangle) \sqcap h_2(\langle z_2 \rangle), \\ &v_{\langle x_1 \rangle}(\langle d_2 \rangle) = \langle d \rangle = h_1(\langle x_1 \rangle) \sqcap h_2(\langle d_2 \rangle)
\end{align}
bzw.
\begin{align}
    &z \in h_1(\langle x_1 \rangle) \sqcap h_2(\langle z_2 \rangle), \\ &d \in h_1(\langle x_1 \rangle) \sqcap h_2(\langle d_2 \rangle)
\end{align}
gilt. Wir betrachten nun die Abbildung $K_{d_2,z_2} : h_2(\langle z_2 \rangle) \rightarrow h_2(\langle d_2 \rangle)$. Dann sei $K_{d_2,z_2}(z) = s \in h_2(\langle d_2 \rangle)$ und wegen $d \in h_1(\langle x_1 \rangle) \sqcap h_2(\langle d_2 \rangle)$ insbesondere $s \in \langle d \rangle$. Dann gilt 
\begin{align}
    K_{d_2,z_2}(F_{\lambda x_1,x_1}(z)) =& \: K_{d_2,z_2}(\lambda P_1^{\langle x_1 \rangle}(z) + \lambda^* P_2^{\langle x_1 \rangle}(z)) \nonumber\\ =& \: K_{d_2,z_2}(\lambda z) \nonumber\\ =& \: \lambda s.
\end{align}
Dabei haben wir ausgenutzt, dass aus $z \in P_1^{\langle x_1 \rangle}[h_1(\langle x_1 \rangle)]$ folgt, dass $P_1^{\langle x_1 \rangle}(z) = z$ ist, und das die Projektoren $P_1^{\langle x_1 \rangle}$ und $P_2^{\langle x_1 \rangle}$ orthogonal zueinander sind. 

Andererseits gilt wegen der Kommutativität von $K_{d_2,z_2}$ und $F_{\lambda x_1, x_1}$ (Lemma 5.6 und Bemerkung 5.1): 
\begin{align}
    F_{\lambda x_1,x_1}(K_{d_2,z_2}(z)) =& \: F_{\lambda x_1, x_1}(s) \nonumber\\ =& \: \lambda P_1^{\langle x_1 \rangle}(s) + \lambda^* P_2^{\langle x_1 \rangle}(s) \nonumber\\ =& \: \lambda^* s,
\end{align}
da $s \in \langle d \rangle$ und damit auch $s \in P_2^{\langle x_2 \rangle}[h_1(\langle x_1 \rangle)]$ und $P_1^{\langle x_1 \rangle}$ sowie $P_2^{\langle x_1 \rangle}$ orthogonale Projektoren sind. Nun gilt aber im Allgemeinen $\lambda s \neq \lambda^* s$ für nicht verschwindenden Imaginärteil von $\lambda$, d.h $\mathbf{Im}(\lambda) \neq 0$. Aus diesem Wiederspruch folgt nun aber, dass $h_1$ nicht gemischt sein kann. Demzufolge ist $h_1$ entweder linear oder antilinear. Auf analogem Weg beweist man, dass $h_2$ nicht gemischt sein kann.
\end{proof}

Für die kommenden Lemmata benötigen wir nun noch den Begriff der unitären und antiunitären Vektorraum-Homomorphismen. Dabei handelt es sich um spezielle Abbildungen, die die von den Skalarprodukten induzierten Normen erhalten. Da eine Norm einen Abstandsbegriff definiert, handelt es sich also bei diesen Abbildungen um abstandserhaltende Abbildungen. 

\begin{definition}
Sei $f: \mathcal{H}_1 \rightarrow \mathcal{H}_2$ eine Abbildung zwischen zwei komplexen Hilberträumen. Die Abbildung $f$ heißt unitärer Vektorraum-Homomorphismus, falls $f$ linear und surjektiv ist und gilt, dass 
\begin{align}
    \|f(v)\|_{\mathcal{H}_2} = \|v\|_{\mathcal{H}_1}
\end{align}
für alle $v \in \mathcal{H}_1$, wobei $\|\cdot\|_{\mathcal{H}_1}$ und $\|\cdot\|_{\mathcal{H}_2}$ die von den Skalarprodukten auf $\mathcal{H}_1$ und $\mathcal{H}_2$ induzierten Normen sind. Die Abbildung $f$ heißt antiunitär, falls $f$ antilinear und surjektiv ist und gilt, dass 
\begin{align}
    \|f(v)\|_{\mathcal{H}_2} = \|v\|_{\mathcal{H}_1}
\end{align}
für alle $v \in \mathcal{H}_1$. Dabei heißt antilinear (als Vektorraum-Homomorphismus), dass für alle $v,w \in \mathcal{H}_1$ und alle $\lambda \in \mathbb{C}$ gilt, dass
\begin{align}
    f(\lambda v + w) = \lambda^* f(v) + f(w)
\end{align}
ist. 
\end{definition}

Im folgenden wollen wir solche unitären und antiunitären Abbildungen konstruieren. Der Grund dafür liegt in den Eigenschaften derartiger Abbildungen. Diese sind nämlich nicht nur surjektiv, sondern auch injektiv, denn wegen der positiven Definitheit des Skalarproduktes folgt aus
\begin{align}
    \|f(v)\|_{\mathcal{H}_2} = \|v\|_{\mathcal{H}_1} = 0
\end{align}
das $v = 0$ ist, d.h. der Kern von $f$ ist trivial (enthält also nur den Nullvektor), woraus die Injektivität folgt \cite{fischer2011lernbuch}. Damit sind solche $f$ im wesentlichen strukturerhaltende bijektive Abbildungen. 

Die unitären und antiunitären Abbildungen, die wir im folgenden Lemma konstruieren, eignen sich daher aufgrund dieser Eigenschaft zur Konstruktion von Isomorphien. Unser Ziel wird es daher sein, mittels dieser Abbildungen später die Isomorphie zwischen $\mathcal{L}(\mathcal{H})$ und $\mathcal{L}(\mathcal{H}_1 \otimes \mathcal{H}_2)$ zu konstruieren. 

Im Folgenden bezeichnen die immer wieder auftauchenden Mengen $I, J, K, L \subseteq \mathbb{N}$ (endliche oder unendliche) Indexmengen. 

\begin{lemma}
Seien $z_1,x_1 \in \mathcal{H}_1$, $z_2,x_2 \in \mathcal{H}_2$ und $z \in h_1(\langle z_1 \rangle) \sqcap h_2(\langle z_2 \rangle)$, sodass $z_1,z_2$ und $z$ jeweils nicht den Nullvektoren in ihren jeweiligen Räumen entsprechen. Wir definieren die Konstante
\begin{align}
    \alpha = \frac{\|z_1\|_{\mathcal{H}_1} \|z_2\|_{\mathcal{H}_2}}{\|z\|_\mathcal{H}},
\end{align}
wobei $\|\cdot\|_{\mathcal{H}_1}, \|\cdot\|_{\mathcal{H}_2}$ und $\|\cdot\|_\mathcal{H}$ jeweils die von den in $\mathcal{H}_1, \mathcal{H}_2$ und $\mathcal{H}$ definierten Skalarprodukten induzierten Normen sind. Wir definieren weiter die Abbildungen 
\begin{align}
     U_{\langle x_2 \rangle} : \mathcal{H}_1 \longrightarrow& \: \: h_2(\langle x_2 \rangle) \nonumber\\
     x_1 \longmapsto& \: \: \frac{\alpha}{\|x_2\|_{\mathcal{H}_2}} (F_{x_1,z_1} \circ K_{x_2,z_2})(z)
\end{align}
und 
\begin{align}
     V_{\langle x_1 \rangle} : \mathcal{H}_2 \longrightarrow& \: \: h_1(\langle x_1 \rangle) \nonumber\\
     x_2 \longmapsto& \: \: \frac{\alpha}{\|x_1\|_{\mathcal{H}_1}} (K_{x_2,z_2} \circ F_{x_1,z_1})(z).
\end{align}
Dann sind alle $U_{\langle x_2 \rangle}$ unitäre oder antiunitäre Abbildungen, je nachdem ob $h_1$ linear oder antilinear ist. Weiter generiert $U_{\langle x_2 \rangle}$ die Abbildung $u_{\langle x_2 \rangle}$. Analoges gilt für $V_{\langle x_1 \rangle}$. 
\end{lemma}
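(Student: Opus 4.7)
Zunächst stelle ich sicher, dass $U_{\langle x_2 \rangle}$ tatsächlich in $h_2(\langle x_2 \rangle)$ abbildet. Wegen $z \in h_1(\langle z_1 \rangle) \sqcap h_2(\langle z_2 \rangle)$ folgt durch iterierte Anwendung von Lemma 5.6 (bei linear abhängigen Argumenten ergänzt durch die anschließende Bemerkung) zunächst $K_{x_2,z_2}(z) \in h_1(\langle z_1 \rangle) \sqcap h_2(\langle x_2 \rangle)$ und anschließend $(F_{x_1,z_1} \circ K_{x_2,z_2})(z) \in h_1(\langle x_1 \rangle) \sqcap h_2(\langle x_2 \rangle) \subseteq h_2(\langle x_2 \rangle)$. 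Die (Anti-)Linearität von $U_{\langle x_2 \rangle}$ lese ich dann aus Lemma 5.7 ab: Ist $h_1$ linear, so gilt nach Theorem 5.2 \textit{x)} $F_{\lambda x_1, x_1} = \lambda\,\textrm{id}$; ist $h_1$ antilinear, so $F_{\lambda x_1, x_1} = \lambda^*\,\textrm{id}$. Mit der Zerlegung $F_{\lambda x_1, z_1} = F_{\lambda x_1, x_1} \circ F_{x_1, z_1}$ aus \textit{iii)} und der Additivität $F_{x_1 + y_1, z_1} = F_{x_1, z_1} + F_{y_1, z_1}$ aus \textit{iv)} überträgt sich Linearität bzw. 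Antilinearität direkt auf $U_{\langle x_2 \rangle}$, da der Normierungsfaktor $\alpha/\|x_2\|$ nicht von $x_1$ abhängt.

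\textbf{Normerhaltung.} Dies erwarte ich als den technisch heikelsten Schritt. Kernaussage wird die Skalierungsformel $\|F_{x_1,z_1}(v)\|_\mathcal{H} = \tfrac{\|x_1\|_{\mathcal{H}_1}}{\|z_1\|_{\mathcal{H}_1}} \|v\|_\mathcal{H}$ für alle $v \in h_1(\langle z_1 \rangle)$ (und analog für $K_{x_2,z_2}$) sein. Dazu setze ich $\lambda := \|x_1\|/\|z_1\| \in \mathbb{R}_{>0}$ und $x_1' := \lambda^{-1} x_1$, sodass $\|x_1'\|_{\mathcal{H}_1} = \|z_1\|_{\mathcal{H}_1}$. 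Nach Eigenschaft \textit{vi)} ist dann $F_{x_1', z_1}$ eine Isometrie, und wegen \textit{iii)} gilt $F_{x_1, z_1} = F_{\lambda x_1', x_1'} \circ F_{x_1', z_1}$, wobei $F_{\lambda x_1', x_1'} = \lambda P_1 + \lambda^* P_2 = \lambda(P_1 + P_2) = \lambda\,\textrm{id}$ (da $\lambda$ reell und $P_1 + P_2 = \textrm{id}$ nach \textit{viii)}) unabhängig davon, ob $h_1$ linear oder antilinear ist. Damit folgt die Skalierungsaussage, und Einsetzen liefert
$\|U_{\langle x_2 \rangle}(x_1)\|_\mathcal{H} = \tfrac{\alpha}{\|x_2\|_{\mathcal{H}_2}}\cdot\tfrac{\|x_1\|_{\mathcal{H}_1}}{\|z_1\|_{\mathcal{H}_1}}\cdot\tfrac{\|x_2\|_{\mathcal{H}_2}}{\|z_2\|_{\mathcal{H}_2}}\cdot\|z\|_\mathcal{H} = \|x_1\|_{\mathcal{H}_1}$ nach Definition von $\alpha$.

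\textbf{Surjektivität, (Anti-)Unitarität und Erzeugungseigenschaft.} Für die Surjektivität sei $y \in h_2(\langle x_2 \rangle) \setminus \{0\}$ beliebig. Nach Lemma 5.4 ist $u_{\langle x_2 \rangle}$ ein Isomorphismus, insbesondere existiert ein Atom $\langle x_1 \rangle \in \mathcal{L}(\mathcal{H}_1)$ mit $h_1(\langle x_1 \rangle) \sqcap h_2(\langle x_2 \rangle) = \langle y \rangle$. Wegen $U_{\langle x_2 \rangle}(x_1) \in \langle y \rangle$ und der eben bewiesenen Normerhaltung gilt $U_{\langle x_2 \rangle}(x_1) = \mu y$ für ein $\mu \in \mathbb{C}^{*}$; durch geeignete Skalierung (im linearen Fall $x_1'' := \mu^{-1} x_1$, im antilinearen Fall $x_1'' := (\mu^{*})^{-1} x_1$) erreiche ich $U_{\langle x_2 \rangle}(x_1'') = y$. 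Zusammen mit Normerhaltung und (Anti-)Linearität ergibt sich die (Anti-)Unitarität. Die Erzeugungsaussage folgt schließlich in zwei Schritten: Für Atome $\langle x_1 \rangle \in \mathcal{L}(\mathcal{H}_1)$ gilt $\langle U_{\langle x_2 \rangle}(x_1) \rangle = h_1(\langle x_1 \rangle) \sqcap h_2(\langle x_2 \rangle) = u_{\langle x_2 \rangle}(\langle x_1 \rangle)$, da der Ausdruck rechts nach Bedingung \uproman{3} ein Atom und $U_{\langle x_2 \rangle}(x_1) \neq 0$ ist; für allgemeine $p_1 \in \mathcal{L}(\mathcal{H}_1)$ kombiniert man die c-Morphismus-Eigenschaft von $u_{\langle x_2 \rangle}$ mit der Atomarität von $\mathcal{L}(\mathcal{H}_1)$ und der Tatsache, dass $U_{\langle x_2 \rangle}$ als stetige, (anti)lineare, bijektive Abbildung abgeschlossene Unterräume auf abgeschlossene Unterräume schickt. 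Die entsprechenden Aussagen über $V_{\langle x_1 \rangle}$ folgen durch völlig analoge Argumentation mit vertauschten Rollen von $h_1$ und $h_2$, weshalb $V_{\langle x_1 \rangle}$ unitär bzw. antiunitär ist, je nachdem ob $h_2$ linear oder antilinear ist.
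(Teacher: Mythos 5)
Dein Beweis ist korrekt und folgt im Kern demselben Gerüst wie der Beweis der Arbeit: Wohldefiniertheit über Lemma 5.6, (Anti-)Linearität über die Eigenschaften \textit{iii)}, \textit{iv)} und \textit{x)} aus Theorem 5.2, Normerhaltung über die Isometrie-Eigenschaft \textit{vi)}, und schließlich Surjektivität plus Erzeugungseigenschaft über Lemma 5.4. Zwei Schritte führst du jedoch anders aus, und beide Varianten sind erwähnenswert. Erstens begründest du die Skalierungsformel $\|F_{x_1,z_1}(v)\|_{\mathcal H} = \tfrac{\|x_1\|_{\mathcal H_1}}{\|z_1\|_{\mathcal H_1}}\|v\|_{\mathcal H}$ explizit über die Faktorisierung $F_{x_1,z_1} = F_{\lambda x_1',x_1'}\circ F_{x_1',z_1}$ mit reellem $\lambda = \|x_1\|_{\mathcal H_1}/\|z_1\|_{\mathcal H_1}$ und $F_{\lambda x_1',x_1'} = \lambda(P_1^{\langle x_1'\rangle}+P_2^{\langle x_1'\rangle}) = \lambda\,\mathrm{id}$; die Arbeit schreibt den Faktor $\tfrac{\|x_1\|}{\|z_1\|}\tfrac{\|x_2\|}{\|z_2\|}$ nur hin und beruft sich pauschal auf die (Anti-)Linearität. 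Deine Version ist präziser und zeigt nebenbei, dass die Skalierung unabhängig davon gilt, ob $h_1$ linear oder antilinear ist. Zweitens gewinnst du die Surjektivität von $U_{\langle x_2\rangle}$ durch ein direktes elementweises Argument (wähle $y$, finde über den Isomorphismus $u_{\langle x_2\rangle}$ das passende Atom $\langle x_1\rangle$, reskaliere), während die Arbeit erst die induzierte Verbandsabbildung $\tilde u_{\langle x_2\rangle}$ einführt, über eine ONB $\tilde u_{\langle x_2\rangle} = u_{\langle x_2\rangle}$ nachweist und die Surjektivität von $U_{\langle x_2\rangle}$ daraus zurückgewinnt. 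Der Weg der Arbeit erledigt Surjektivität und Erzeugungseigenschaft in einem Zug; dein Weg trennt beides, ist dafür elementarer und kommt beim Surjektivitätsnachweis ohne den Umweg über die Gleichheit zweier Verbandsabbildungen aus. Inhaltliche Lücken sehe ich keine.
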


\begin{proof}
Wir bemerken zuerst, dass die eben definierten Abbildungen nach Lemma 5.6 wohldefiniert sind. Wir zeigen nun zuerst, das $U_{\langle x_2 \rangle}$ linear ist, wenn $h_1$ ein linearer m-Morphismus ist, und das $U_{\langle x_2 \rangle}$ antilinear ist, wenn $h_1$ ein antilinearer m-Morphismus ist. Sei also zunächst angenommen, dass $h_1$ ein linearer m-Morphismus ist, d.h. es gilt 
\begin{align}
    F_{\lambda x_1,x_1} = \lambda id_{h_1(\langle x_1 \rangle)}.
\end{align}
Wir stellen zuerst fest, dass aufgrund der Eigenschaften von $F_{x_1,z_1}$, unabhängig von der Linearität oder Antilinearität von $h_1$, trivialerweise die Additivität folgt, denn seien $x_1, \tilde{x}_1 \in \mathcal{H}_1$, so gilt 
\begin{align}
    U_{\langle x_2 \rangle}(x_1 + \tilde{x}_1) =& \: \frac{\alpha}{\|x_2\|_{\mathcal{H}_2}} (F_{x_1 + \tilde{x}_1,z_1} \circ K_{x_2,z_2})(z) \nonumber\\ =& \: \frac{\alpha}{\|x_2\|_{\mathcal{H}_2}} (F_{x_1,z_1} \circ K_{x_2,z_2} + F_{\tilde{x}_1,z_1} \circ K_{x_2,z_2})(z) \nonumber\\ =& \: \frac{\alpha}{\|x_2\|_{\mathcal{H}_2}} (F_{x_1,z_1} \circ K_{x_2,z_2})(z) + \frac{\alpha}{\|x_2\|_{\mathcal{H}_2}} (F_{\tilde{x}_1,z_1} \circ K_{x_2,z_2})(z) \nonumber\\ =& \: U_{\langle x_2 \rangle}(x_1) + U_{\langle x_2 \rangle}(\tilde{x}_1)
\end{align}
Sei $\lambda \in \mathbb{C}$. Wegen 
\begin{align}
    F_{\lambda x_1,z_1} =& \: F_{\lambda x_1,x_1} \circ F_{x_1,z_1} \nonumber\\ =& \: \lambda id_{h_1(\langle x_1 \rangle)} \circ F_{x_1,z_1} \nonumber\\ =& \: \lambda F_{x_1,z_1}
\end{align}
folgt die Homogenität:
\begin{align}
    U_{\langle x_2 \rangle}(\lambda x_1) =& \: \frac{\alpha}{\|x_2\|_{\mathcal{H}_2}} (F_{\lambda x_1,z_1} \circ K_{x_2,z_2})(z) \nonumber\\ =& \: \lambda \frac{\alpha}{\|x_2\|_{\mathcal{H}_2}} (F_{x_1,z_1} \circ K_{x_2,z_2})(z) \nonumber\\ =& \: \lambda U_{\langle x_2 \rangle}(x_1)
\end{align}
Sei nun $h_1$ ein antilinearer m-Morphismus, dann gilt 
\begin{align}
    F_{\lambda x_1,x_1} = \lambda^* id_{h_1(\langle x_1 \rangle)}.
\end{align}
Es gilt analog zu oben 
\begin{align}
    F_{\lambda x_1,z_1} =& \: F_{\lambda x_1,x_1} \circ F_{x_1,z_1} \nonumber\\ =& \: \lambda^* id_{h_1(\langle x_1 \rangle)} \circ F_{x_1,z_1} \nonumber\\ =& \: \lambda^* F_{x_1,z_1}
\end{align}
und damit
\begin{align}
     U_{\langle x_2 \rangle}(\lambda x_1) =& \: \frac{\alpha}{\|x_2\|_{\mathcal{H}_2}} (F_{\lambda x_1,z_1} \circ K_{x_2,z_2})(z) \nonumber\\ =& \: \lambda^* \frac{\alpha}{\|x_2\|_{\mathcal{H}_2}} (F_{x_1,z_1} \circ K_{x_2,z_2})(z) \nonumber\\ =& \: \lambda^* U_{\langle x_2 \rangle}(x_1)
\end{align}
Analog zeigt man, dass $V_{\langle x_1 \rangle}$ linear ist, falls $h_2$ ein linearer m-Morphismus ist, und das $V_{\langle x_1 \rangle}$ antilinear ist, falls $h_2$ ein antilinearer m-Morphismus ist. Seien nun $d_1 \in \langle x_1 \rangle$ und $d_2 \in \langle x_2 \rangle$ mit $\|d_1\|_{\mathcal{H}_1} = \|z_1\|_{\mathcal{H}_1}$ und $\|d_2\|_{\mathcal{H}_2} = \|z_2\|_{\mathcal{H}_2}$ gegeben. Wegen $d_1 \in \langle x_1 \rangle$ folgt, dass ein $\lambda \in \mathbb{C}$ existiert, sodass $\lambda d_1 = x_1$ ist. Es gilt 
\begin{align}
    \|x_1\|_{\mathcal{H}_1} =& \: \|\lambda d_1\|_{\mathcal{H}_1} = |\lambda|\|d_1\|_{\mathcal{H}_1} \nonumber\\ \Rightarrow& \: |\lambda| = \frac{\|x_1\|_{\mathcal{H}_1}}{\|d_1\|_{\mathcal{H}_1}} = \frac{\|x_1\|_{\mathcal{H}_1}}{\|z_1\|_{\mathcal{H}_1}}
\end{align}
Analog existiert ein $\nu \in \mathbb{C}$ mit $\nu d_2 = x_2$, sodass $|\nu| = \frac{\|x_2\|_{\mathcal{H}_2}}{\|z_2\|_{\mathcal{H}_2}}$ ist. Damit folgt
\begin{align}
     \|U_{\langle x_2 \rangle}(x_1)\|_{\mathcal{H}} =& \: \frac{\alpha}{\|x_2\|_{\mathcal{H}_2}} \|(F_{x_1,z_1} \circ K_{x_2,z_2})(z)\|_{\mathcal{H}} \nonumber\\ =& \: \frac{\alpha}{\|x_2\|_{\mathcal{H}_2}} \frac{\|x_1\|_{\mathcal{H}_1}}{\|z_1\|_{\mathcal{H}_1}} \frac{\|x_2\|_{\mathcal{H}_2}}{\|z_2\|_{\mathcal{H}_2}} \|(F_{d_1,z_1} \circ K_{d_2,z_2})(z)\|_\mathcal{H} \nonumber\\ =& \: \|x_1\|_{\mathcal{H}_1}
\end{align}
Dabei haben wir ausgenutzt, dass $h_1$ und $h_2$ lineare oder antilineare m-Morphismen sind, dass $\alpha = (\|z_1\|_{\mathcal{H}_1} \|z_2\|_{\mathcal{H}_2})/\|z\|_\mathcal{H}$ ist und das nach Theorem 5.2 gilt, dass für $\|d_1\|_{\mathcal{H}_1} = \|z_1\|_{\mathcal{H}_1}$ und $\|d_2\|_{\mathcal{H}_2} = \|z_2\|_{\mathcal{H}_2}$ die Abbildungen $F_{d_1,z_1}$ und $K_{d_2,z_2}$ Isometrien sind. Analog zeigt man $\|V_{\langle x_1 \rangle}(x_2)\|_\mathcal{H} = \|x_2\|_{\mathcal{H}_2}$.
Aus Lemma 5.6 folgt weiter 
\begin{align}
    \langle U_{\langle x_2 \rangle}(x_1) \rangle = h_1(\langle x_1 \rangle) \sqcap h_2(\langle x_2 \rangle) = \langle V_{\langle x_1 \rangle}(x_2) \rangle
\end{align}
Wir definieren nun die Abbildungen 
\begin{align}
     \tilde{u}_{\langle x_2 \rangle} : \mathcal{L}(\mathcal{H}_1) \longrightarrow& \: \: \mathcal{L}(h_2(\langle x_2 \rangle)) \nonumber\\
    G_1 \longmapsto& \: \: \{U_{\langle x_2 \rangle}(x_1) : x_1 \in G_1\}
\end{align}
und 
\begin{align}
     \tilde{v}_{\langle x_1 \rangle} : \mathcal{L}(\mathcal{H}_2) \longrightarrow& \: \: \mathcal{L}(h_1(\langle x_1 \rangle)) \nonumber\\
    G_2 \longmapsto& \: \: \{V_{\langle x_1 \rangle}(x_2) : x_2 \in G_2\}.
\end{align}
Es lässt sich zeigen, dass es sich bei $\tilde{u}_{\langle x_2 \rangle}$ und $\tilde{v}_{\langle x_1 \rangle}$ um c-Morphismen handelt, die Atome auf Atome abbilden. Letzters folgt aus der Linearität oder Antlinearität der Abbildungen $U_{\langle x_2 \rangle}$ und $V_{\langle x_1 \rangle}$ und der Tatsache , das Isometrien injektiv sind, d.h. der Kern ist trivial. Sei nun $I$ eine Indexmenge und die Menge $\{e_i \in G_1 : i \in I\}$ eine Orthonormalbasis von $G_1$, d.h. $span_\mathbb{C}(\{e_i \in G_1 : i \in I\}) = G_1$ und $\langle e_i , e_j \rangle_{\mathcal{H}_1} = \delta_{i,j}$ mit $\delta_{i,i} = 1$ und $\delta_{i,j} = 0$ für $i \neq j$. Dann definiert die Menge $\{U_{\langle x_2 \rangle}(e_i) : i \in I\}$ eine Basis in $\tilde{u}_{\langle x_2 \rangle}[G_1]$ und es gilt
\begin{align}
    u_{\langle x_2 \rangle}(G_1) =& \: u_{\langle x_2 \rangle}(\sqcup_{i \in I} \langle e_i \rangle) \nonumber\\ =& \: \sqcup_{i \in I} u_{\langle x_2 \rangle} (\langle e_i \rangle) \nonumber\\ =& \: \sqcup_{i \in I} (h_1(\langle e_i \rangle) \sqcap h_2(\langle x_2 \rangle)) \nonumber\\ =& \: \sqcup_{i \in I} \langle U_{\langle x_2 \rangle} (e_i) \rangle \nonumber\\ =& \: \tilde{u}_{\langle x_2 \rangle}(G_1)
\end{align}
Daraus folgt, dass die Abbildung $U_{\langle x_2 \rangle}$ die Abbildung $ u_{\langle x_2 \rangle}$ erzeugt und das $u_{\langle x_2 \rangle} = \tilde{u}_{\langle x_2 \rangle}$ ist, womit $\tilde{u}_{\langle x_2 \rangle}$ ein unitärer c-Morphismus ist und damit ein Isomorphismus. Aus der Surjektivität von $\tilde{u}_{\langle x_2 \rangle}$ folgt damit die Surjektivität von $U_{\langle x_2 \rangle}$, womit $U_{\langle x_2 \rangle}$ ein unitärer Vektorraum-Homomorphismus ist, sofern $h_1$ einen linearen m-Morphismus darstellt, und $U_{\langle x_2 \rangle}$ ist antiunitär, sofern $h_1$ ein antilinearer m-Morphismus ist. Analog argumentiert man für $V_{\langle x_1 \rangle}$.
\end{proof}

Mittels der $U_{\langle x_2 \rangle}$ und $V_{\langle x_1 \rangle}$ können wir, wie wir im Folgenden sehen werden, Orthonormalbasen in $\mathcal{H}$ erklären. Dies wird uns später beispielsweise dabei helfen, lineare und antilineare Bijektionen zwischen $\mathcal{H}$ und $\mathcal{H}_1 \otimes \mathcal{H}_2$ zu erklären, die wir dann wiederrum nutzen können, um Isomorphien zwischen $\mathcal{L}(\mathcal{H})$ und $\mathcal{L}(\mathcal{H}_1 \otimes \mathcal{H}_2)$ zu konstruieren. 

\begin{lemma}
Seien $x_1 \in \mathcal{H}_1$ und $x_2 \in \mathcal{H}_2$ mit $\|x_1\|_{\mathcal{H}_1} = \|x_2\|_{\mathcal{H}_2} > 0$ gegeben. Dann gilt, dass 
\begin{align}
    U_{\langle x_2 \rangle}(x_1) = V_{\langle x_1 \rangle}(x_2)
\end{align}
ist. Seien $I$ und $J$ Indexmengen und $\{e_i \in \mathcal{H}_1 : i \in I\}$ eine Orthonormalbasis von $\mathcal{H}_1$ und $\{f_j \in \mathcal{H}_2 : j \in J\}$ eine Orthonormalbasis von $\mathcal{H}_2$, so folgt, dass $\{U_{\langle f_j \rangle}(e_i) \in \mathcal{H} : i \in I, \: j \in J\}$ eine Orthonormalbasis von $\mathcal{H}$ ist. 
\end{lemma}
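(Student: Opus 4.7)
Der Beweis gliedert sich naturgemäß in zwei Teile. Für die erste Behauptung $U_{\langle x_2\rangle}(x_1) = V_{\langle x_1\rangle}(x_2)$ ist die Schlüsselbeobachtung die Vertauschbarkeit der Operatoren $F_{x_1,z_1}$ und $K_{x_2,z_2}$ auf dem fest gewählten Element $z \in h_1(\langle z_1\rangle) \sqcap h_2(\langle z_2\rangle)$. Diese Vertauschbarkeit liefert Lemma 5.6 im linear unabhängigen Fall bzw.\ die anschließende Bemerkung 5.1 im linear abhängigen Fall. Die Zusatzvoraussetzung $\|x_1\|_{\mathcal{H}_1} = \|x_2\|_{\mathcal{H}_2}$ bewirkt überdies, dass die Vorfaktoren $\alpha/\|x_2\|_{\mathcal{H}_2}$ und $\alpha/\|x_1\|_{\mathcal{H}_1}$ in den Definitionen von $U_{\langle x_2\rangle}$ bzw.\ $V_{\langle x_1\rangle}$ übereinstimmen, womit die Gleichheit unmittelbar aus den Definitionen folgt.

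Für die zweite Behauptung überprüfe ich separat die Orthonormalität und die Vollständigkeit des Systems $\{U_{\langle f_j\rangle}(e_i) : i \in I,\, j \in J\}$ in $\mathcal{H}$. Innerhalb eines festen $j$ ist $U_{\langle f_j\rangle}: \mathcal{H}_1 \to h_2(\langle f_j\rangle)$ nach Lemma 5.8 eine (anti-)unitäre Abbildung und erhält somit die Norm; mittels der Polarisierungsidentität überträgt sich das auf das (gegebenenfalls komplex konjugierte) Skalarprodukt, und wegen $\delta_{i,i'} \in \mathbb{R}$ spielt die Konjugation keine Rolle. Für $j \neq j'$ liegen $U_{\langle f_j\rangle}(e_i) \in h_2(\langle f_j\rangle)$ und $U_{\langle f_{j'}\rangle}(e_{i'}) \in h_2(\langle f_{j'}\rangle)$; aus $\langle f_j\rangle \subseteq \langle f_{j'}\rangle^{\perp}$ folgt mittels der Verträglichkeit des unitären c-Morphismus $h_2$ mit der Orthokomplementation die Inklusion $h_2(\langle f_j\rangle) \subseteq h_2(\langle f_{j'}\rangle)^{\perp}$, also die gewünschte Orthogonalität.

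Den eigentlich heiklen Punkt bildet die Vollständigkeit. Hierzu möchte ich die im Beweis von Lemma 5.8 bereits abgeleitete Identität $\bigsqcup_{i \in I} \langle U_{\langle f_j\rangle}(e_i) \rangle = u_{\langle f_j\rangle}(\mathcal{H}_1) = h_2(\langle f_j\rangle)$ heranziehen und sie mit der Suprema-Erhaltung des c-Morphismus $h_2$ kombinieren:
\begin{align*}
\bigsqcup_{j \in J} \bigsqcup_{i \in I} \langle U_{\langle f_j\rangle}(e_i) \rangle
= \bigsqcup_{j \in J} h_2(\langle f_j\rangle)
= h_2\Bigl( \bigsqcup_{j \in J} \langle f_j\rangle \Bigr)
= h_2(\mathcal{H}_2)
= \mathcal{H},
\end{align*}
wobei die letzte Gleichheit aus der Unitarität $h_2(\mathbf{I}_{\mathcal{L}_2}) = \mathbf{I}_\mathcal{L}$ folgt. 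Als Haupthindernis erwarte ich die saubere Rechtfertigung der Suprema-Erhaltung für die möglicherweise überabzählbare Indexmenge $J$; diese ist jedoch gerade die definierende zweite c-Morphismus-Eigenschaft aus Abschnitt 5, sodass der Schluss ohne weitere Hilfsmittel durchgeht und die Behauptung bewiesen ist.
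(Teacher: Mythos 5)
Your proof is correct and follows the paper's overall strategy: the identity $U_{\langle x_2\rangle}(x_1)=V_{\langle x_1\rangle}(x_2)$ from the matching prefactors plus the commutativity of $F_{x_1,z_1}$ and $K_{x_2,z_2}$ on $z$ (Lemma 5.6 and the subsequent remark), orthonormality via the (anti-)unitarity of $U_{\langle f_j\rangle}$ together with the polarization identity, and completeness via the c-morphism properties. You deviate at two points, both in your favour. First, for $j\neq j'$ you conclude orthogonality directly from $h_2(\langle f_j\rangle)\subseteq h_2(\langle f_{j'}\rangle)^{\perp}$: since $U_{\langle f_j\rangle}(e_i)\in h_2(\langle f_j\rangle)$ and $U_{\langle f_{j'}\rangle}(e_{i'})\in h_2(\langle f_{j'}\rangle)$, the inner product vanishes for arbitrary $i,i'$ with no further work. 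The paper establishes exactly this subspace orthogonality as well, but then additionally runs an induction over the number of nonvanishing components of $x_1$ to extract $\langle U_{\langle f_j\rangle}(e_i),U_{\langle f_{j'}\rangle}(e_{i'})\rangle_{\mathcal{H}}=0$ from $\langle U_{\langle f_j\rangle}(x_1),U_{\langle f_{j'}\rangle}(x_1)\rangle_{\mathcal{H}}=0$; that induction is redundant, and your direct argument is the cleaner route. Second, for completeness you compute $\bigsqcup_{j\in J}h_2(\langle f_j\rangle)=h_2(\mathcal{H}_2)=\mathcal{H}$ using only the suprema-preservation and unitarity of $h_2$ (plus the identification $\bigsqcup_{i\in I}\langle U_{\langle f_j\rangle}(e_i)\rangle=h_2(\langle f_j\rangle)$ from Lemma 5.8), whereas the paper computes $\bigsqcup_{i,j}\bigl(h_1(\langle e_i\rangle)\sqcap h_2(\langle f_j\rangle)\bigr)=h_1(\mathcal{H}_1)\sqcap h_2(\mathcal{H}_2)=\mathcal{H}$, which additionally invokes Lemma 5.1 to pull the suprema through the meet using compatibility. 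Both routes are valid; yours is marginally more economical, the paper's makes the symmetry between $h_1$ and $h_2$ visible.
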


\begin{proof}
Die Identität 
\begin{align}
    U_{\langle x_2 \rangle}(x_1) = V_{\langle x_1 \rangle}(x_2)
\end{align}
ist eine direkte Konsequenz aus der Definition der beiden Abbildungen mit $\|x_1\|_{\mathcal{H}_1} = \|x_2\|_{\mathcal{H}_2} \neq 0$ (gemäß Lemma 5.8) und Lemma 5.6:
\begin{align}
    U_{\langle x_2 \rangle}(x_1) =& \: \frac{\alpha}{\|x_2\|_{\mathcal{H}_2}} (F_{x_1,z_1} \circ K_{x_2,z_2})(z) \nonumber\\ =& \: \frac{\alpha}{\|x_1\|_{\mathcal{H}_1}} (F_{x_1,z_1} \circ K_{x_2,z_2})(z) \nonumber\\ =& \: \frac{\alpha}{\|x_1\|_{\mathcal{H}_1}} (K_{x_2,z_2} \circ F_{x_1,z_1})(z) \nonumber\\ =& \: V_{\langle x_1 \rangle}(x_2)
\end{align}
Zeigen noch, dass $\{U_{\langle f_j \rangle}(e_i) \in \mathcal{H} : i \in I, \: j \in J\}$ eine Orthonormalbasis (ONB) von $\mathcal{H}$ ist: \\ \\ Sei im folgendem $j,j' \in J$ mit $j \neq j'$. Da $\langle f_j, f_{j'} \rangle_{\mathcal{H}_2} = 0$ oder anders ausgedrückt $f_j \perp f_{j'}$ ist, folgt, dass $\langle f_j \rangle \perp \langle f_{j'} \rangle$ und damit auch $\langle f_j \rangle \sqcap \langle f_{j'} \rangle = 0$ ist. Es gilt $0 = h_2(\langle f_j \rangle \sqcap \langle f_{j'} \rangle) = h_2(\langle f_j \rangle) \sqcap h_2(\langle f_{j'} \rangle)$. Wegen $\langle f_j \rangle \subseteq \langle f_{j'} \rangle^\perp$ folgt $h_2(\langle f_j \rangle) \subseteq h_2(\langle f_{j'} \rangle^\perp) = h_2(\langle f_{j'} \rangle)^\perp$ und damit $h_2(\langle f_j \rangle) \perp h_2(\langle f_{j'} \rangle)$. Daraus folgt unmittelbar, dass $U_{\langle f_j \rangle}(x_1) \perp U_{\langle f_{j'} \rangle}(x_1)$ für alle $x_1 \in \mathcal{H}_1$ ist, denn $U_{\langle f_j \rangle}(x_1) \in h_2(\langle f_j \rangle)$ und $U_{\langle f_{j'} \rangle}(x_1) \in h_2(\langle f_{j'} \rangle)$. Es gilt damit 
\begin{align}
    \langle U_{\langle f_j\rangle}(x_1),U_{\langle f_{j'} \rangle}(x_1) \rangle_\mathcal{H} =& \: \langle U_{\langle f_j\rangle}(\sum_{i \in I}a^i_1 e_i),U_{\langle f_{j'} \rangle}(\sum_{i' \in I} a^{i'}_1 e_{i'}) \rangle_\mathcal{H} \nonumber\\ =& \: \sum_{i,i' \in I} (a^{i}_1)^* a^{i'}_1 \langle U_{\langle f_j \rangle}(e_i),U_{\langle f_{j'} \rangle}(e_{i'}) \rangle_\mathcal{H} \nonumber\\ =& \: 0.
\end{align}
Wir zeigen mittels Induktion über die Anzahl $n = |\{i_1,...,i_n : i_1,...,i_n \in I\}|$ der nicht verschwindenden Komponenten des Vektors $x_1 = \sum_{i \in I} a^{i}_1e_i$, dass für $j \neq j^{'}$ folgt, dass $U_{\langle f_j \rangle}(e_i) \perp U_{\langle f_{j'} \rangle}(e_{i'})$ ist:
\begin{description}
\item Induktionsanfang: \\
Da n = 1 trivial ist, wählen wir n = 2. Es gilt 
\begin{align}
   \langle U_{\langle f_j \rangle}(x_1),U_{\langle f_{j'} \rangle}(x_1) \rangle_\mathcal{H} =& \: \langle U_{\langle f_j \rangle}(a^{i_1}_1 e_{i_1} + a^{i_2}_1 e_{i_2}),U_{\langle f_{j'} \rangle}(a^{i_1}_1 e_{i_1} + a^{i_2}_1 e_{i_2}) \rangle_\mathcal{H} \nonumber\\ =& \: (a^{i_1}_1)^*a^{i_1}_1 \langle U_{\langle f_j \rangle}(e_{i_1}),U_{\langle f_{j'} \rangle}(e_{i_1}) \rangle_\mathcal{H} \nonumber\\ &+ \: (a^{i_1}_1)^*a^{i_2}_1 \langle U_{\langle f_j \rangle}(e_{i_1}),U_{\langle f_{j'} \rangle}(e_{i_2}) \rangle_\mathcal{H} \nonumber\\ &+ \: (a^{i_2}_1)^*a^{i_1}_1 \langle U_{\langle f_j \rangle}(e_{i_2}),U_{\langle f_{j'} \rangle}(e_{i_1}) \rangle_\mathcal{H} \nonumber\\ &+ \: (a^{i_2}_1)^*a^{i_2}_1 \langle U_{\langle f_j \rangle}(e_{i_2}),U_{\langle f_{j'} \rangle}(e_{i_2}) \rangle_\mathcal{H} \nonumber\\ =& \: (a^{i_1}_1)^*a^{i_2}_1 \langle U_{\langle f_j \rangle}(e_{i_1}),U_{\langle f_{j'} \rangle}(e_{i_2}) \rangle_\mathcal{H} \nonumber\\ &+ \: (a^{i_2}_1)^*a^{i_1}_1 \langle U_{\langle f_j \rangle}(e_{i_2}),U_{\langle f_{j'} \rangle}(e_{i_1}) \rangle_\mathcal{H} \nonumber\\ =& \: 0
\end{align}
Wählen wir feste $a^{i_1}_1,a^{i_2}_1 \in \mathbb{R}$, so folgt 
\begin{align}
    &\langle U_{\langle f_j \rangle}(e_{i_1}),U_{\langle f_{j'} \rangle}(e_{i_2}) \rangle_\mathcal{H} = \alpha, \\
    &\langle U_{\langle f_j \rangle}(e_{i_2}),U_{\langle f_{j'} \rangle}(e_{i_1}) \rangle_\mathcal{H} = -\alpha,
\end{align}
da $(a^{i_1}_1)^*a^{i_2}_1 = (a^{i_2}_1)^*a^{i_1}_1$ ist. Wählen wir nun aber feste $a^{i_1}_1,a^{i_2}_1 \in \mathbb{C}$, sodass $(a^{i_1}_1)^*a^{i_2}_1 \neq (a^{i_2}_1)^*a^{i_1}_1$ ist, folgt
\begin{align}
    &\langle U_{\langle f_j \rangle}(e_{i_1}),U_{\langle f_{j'} \rangle}(e_{i_2}) \rangle_\mathcal{H} = \alpha, \\
    &\langle U_{\langle f_j \rangle}(e_{i_2}),U_{\langle f_{j'} \rangle}(e_{i_1}) \rangle_\mathcal{H} = \beta \neq -\alpha,
\end{align}
falls $\alpha \neq 0$. Somit erhalten wir 
\begin{align}
    \langle U_{\langle f_j \rangle}(e_{i_1}),U_{\langle f_{j'} \rangle}(e_{i_2}) \rangle_\mathcal{H} = 0
\end{align}
\item Induktionsbehauptung:\\ 
Die Aussage sei für ein $n = |\{i_1,...,i_n : i_1,...,i_n \in I\}| \geq 2$ bereits gezeigt, d.h. $U_{\langle f_j \rangle}(e_i) \perp U_{\langle f_{j'} \rangle}(e_{i'})$ für alle $i,i' \in \{i_1,...,i_n : i_1,...,i_n \in I\}$.
\item Induktionsschritt:\\ 
Seien nun $n+1$ Komponenten von $x_1$ nicht Null. Es gilt
\begin{align}
     \langle U_{\langle f_j \rangle}(x_1),U_{\langle f_{j'} \rangle}(x_1) \rangle_\mathcal{H} =& \:  \langle U_{\langle f_j \rangle}(\sum_{k=1}^{n+1} a^{i_k}_1 e_{i_k}),U_{\langle f_{j'} \rangle}(\sum_{k'=1}^{n+1} a^{i_{k'}}_1 e_{i_{k'}}) \rangle_\mathcal{H} \nonumber\\ =& \: \sum_{k,k'=1}^{n+1} (a^{i_k}_1)^*a^{i_{k'}}_1 \langle U_{\langle f_j \rangle}(e_{i_k}),U_{\langle f_{j'} \rangle}(e_{i_{k'}}) \rangle_\mathcal{H} \nonumber\\ =& \:  \sum_{k,k'=1}^{n} (a^{i_k}_1)^*a^{i_{k'}}_1 \langle U_{\langle f_j \rangle}(e_{i_k}),U_{\langle f_{j'} \rangle}(e_{i_{k'}}) \rangle_\mathcal{H} \nonumber\\ &+ \: (a^{i_1}_1)^*a^{i_{n+1}}_1 \langle U_{\langle f_j \rangle}(e_{i_1}),U_{\langle f_{j'} \rangle}(e_{i_{n+1}}) \rangle_\mathcal{H} + ... \nonumber\\ &+ \: (a^{i_n}_1)^*a^{i_{n+1}}_1 \langle U_{\langle f_j \rangle}(e_{i_n}),U_{\langle f_{j'} \rangle}(e_{i_{n+1}}) \rangle_\mathcal{H} \nonumber\\ &+ \: (a^{i_{n+1}}_1)^*a^{i_{1}}_1 \langle U_{\langle f_j \rangle}(e_{i_{n+1}}),U_{\langle f_{j'} \rangle}(e_{i_{1}}) \rangle_\mathcal{H} + ... \nonumber\\ &+ \: (a^{i_{n+1}}_1)^*a^{i_{n}}_1 \langle U_{\langle f_j \rangle}(e_{i_{n+1}}),U_{\langle f_{j'} \rangle}(e_{i_{n}}) \rangle_\mathcal{H} \nonumber\\ =& \: 0
\end{align}
Anwenden der Induktionsbehauptung liefert
\begin{align}
       &(a^{i_1}_1)^*a^{i_{n+1}}_1 \underbrace{\langle U_{\langle f_j \rangle}(e_{i_1}),U_{\langle f_{j'} \rangle}(e_{i_{n+1}}) \rangle_\mathcal{H}}_{\alpha_1} + ... \nonumber\\ +& \: (a^{i_n}_1)^*a^{i_{n+1}}_1 \underbrace{\langle U_{\langle f_j \rangle}(e_{i_n}),U_{\langle f_{j'} \rangle}(e_{i_{n+1}}) \rangle_\mathcal{H}}_{\alpha_n} \nonumber\\ +& \: (a^{i_{n+1}}_1)^*a^{i_{1}}_1 \underbrace{\langle U_{\langle f_j \rangle}(e_{i_{n+1}}),U_{\langle f_{j'} \rangle}(e_{i_{1}}) \rangle_\mathcal{H}}_{\beta_1} + ... \nonumber\\ +& \: (a^{i_{n+1}}_1)^*a^{i_{n}}_1 \underbrace{\langle U_{\langle f_j \rangle}(e_{i_{n+1}}),U_{\langle f_{j'} \rangle}(e_{i_{n}}) \rangle_\mathcal{H}}_{\beta_n} = 0
\end{align}
Wählen wir feste $a^{i_1}_1,...,a^{i_{n+1}}_1 \in \mathbb{R}$ mit $a^{i_l}_1a^{i_{n+1}}_1 \neq a^{i_{l'}}_1a^{i_{n+1}}_1$ und $l,l' \in \{i_1,...,i_n : i_1,...,i_n \in I\}$ mit $l \neq l'$, so folgt 
\begin{align}
    \beta_1 = - \alpha_1, \: ... \: , \: \beta_n = - \alpha_n.
\end{align}
Wählen wir hingegen feste $a^{i_1}_1,...,a^{i_{n+1}}_1 \in \mathbb{C}$ mit $a^{i_l}_1a^{i_{n+1}}_1 \neq a^{i_{l'}}_1a^{i_{n+1}}_1$ für $l,l' \in \{i_1,...,i_n : i_1,...,i_n \in I\}$ und $l \neq l'$, sowie $(a^{i_l}_1)^*a^{i_{n+1}}_1 \neq (a^{i_{n+1}}_1)^*a^{i_{l}}_1$ für alle $l \in \{i_1,...,i_n : i_1,...,i_n \in I\}$, so folgt 
\begin{align}
    \beta_1 \neq - \alpha_1, \: ... \: , \: \beta_n \neq - \alpha_n,
\end{align}
falls $\alpha_1, ..., \alpha_n \neq 0$. Daraus folgt $\alpha_1 = ... = \alpha_n = 0$. Damit folgt für $j \neq j^{'}$, dass $U_{\langle f_j \rangle}(e_i) \perp U_{\langle f_{j'} \rangle}(e_{i'})$ ist.
\end{description}
Da weiter $\langle \cdot , \cdot \rangle_{\mathcal{H}_1}$ und $\langle \cdot , \cdot \rangle_{\mathcal{H}}$ Sesquilinearformen sind, gilt für diese die sogenannte Polarisationsformel (siehe Anhang für den Beweis der Formel und wichtiger Eigenschaften, die im folgenden genutzt werden):
\begin{align}
    &\langle x , y \rangle_{\mathcal{H}_1} = \frac{1}{4}(\|x + y\|^2_{\mathcal{H}_1} - \|x - y\|^2_{\mathcal{H}_1}) - \frac{i}{4}(\|x + iy\|^2_{\mathcal{H}_1} - \|x - iy\|^2_{\mathcal{H}_1}), \\ &\langle x , y \rangle_{\mathcal{H}} = \frac{1}{4}(\|x + y\|^2_{\mathcal{H}} - \|x - y\|^2_{\mathcal{H}}) - \frac{i}{4}(\|x + iy\|^2_{\mathcal{H}} - \|x - iy\|^2_{\mathcal{H}}) 
\end{align}
Da $U_{\langle f_j \rangle}$ ein unitärer oder antiunitärer Operator ist, folgt mittels der Polarisationsformel, dass $U_{\langle f_j \rangle}(e_i) \perp U_{\langle f_{j} \rangle}(e_{i'})$ bzw. $\langle U_{\langle f_j \rangle}(e_i) , U_{\langle f_{j} \rangle}(e_{i'}) \rangle_\mathcal{H} = 0$ für alle $i,i' \in I$ mit $i \neq i'$ ist und für $i = i'$ folgt $\|U_{\langle f_j \rangle}(e_i)\|_\mathcal{H} = 1$, denn \\
\begin{align}
    \langle e_i , e_{i'} \rangle_{\mathcal{H}_1} =& \: \frac{1}{4}(\|e_i + e_{i'}\|^2_{\mathcal{H}_1} - \|e_i - e_{i'}\|^2_{\mathcal{H}_1}) - \frac{i}{4}(\|e_i + ie_{i'}\|^2_{\mathcal{H}_1} - \|e_i - ie_{i'}\|^2_{\mathcal{H}_1}) \nonumber\\ =& \:  \frac{1}{4}(\|U_{\langle f_j \rangle}(e_i + e_{i'})\|^2_{\mathcal{H}} - \|U_{\langle f_j \rangle}(e_i - e_{i'})\|^2_{\mathcal{H}}) \nonumber\\ -& \: \frac{i}{4}(\|U_{\langle f_j \rangle}(e_i + ie_{i'})\|^2_{\mathcal{H}} - \|U_{\langle f_j \rangle}(e_i - ie_{i'})\|^2_{\mathcal{H}}) \nonumber\\ =& \: \frac{1}{4}(\|U_{\langle f_j \rangle}e_i + U_{\langle f_j \rangle}e_{i'}\|^2_{\mathcal{H}} - \|U_{\langle f_j \rangle}e_i - U_{\langle f_j \rangle}e_{i'}\|^2_{\mathcal{H}}) \nonumber\\ -& \: \frac{i}{4}(\|U_{\langle f_j \rangle}e_i + iU_{\langle f_j \rangle}e_{i'}\|^2_{\mathcal{H}} - \|U_{\langle f_j \rangle}e_i - iU_{\langle f_j \rangle}e_{i'}\|^2_{\mathcal{H}}) \nonumber\\ =& \:  \langle U_{\langle f_j \rangle}e_i , U_{\langle f_j \rangle}e_{i'} \rangle_{\mathcal{H}}
\end{align} 
falls $U_{\langle f_j \rangle}$ unitär und 
\begin{align}
     \langle e_i , e_{i'} \rangle_{\mathcal{H}_1} =& \: \frac{1}{4}(\|e_i + e_{i'}\|^2_{\mathcal{H}_1} - \|e_i - e_{i'}\|^2_{\mathcal{H}_1}) - \frac{i}{4}(\|e_i + ie_{i'}\|^2_{\mathcal{H}_1} - \|e_i - ie_{i'}\|^2_{\mathcal{H}_1}) \nonumber\\ =& \:  \frac{1}{4}(\|U_{\langle f_j \rangle}(e_i + e_{i'})\|^2_{\mathcal{H}} - \|U_{\langle f_j \rangle}(e_i - e_{i'})\|^2_{\mathcal{H}}) \nonumber\\ -& \: \frac{i}{4}(\|U_{\langle f_j \rangle}(e_i + ie_{i'})\|^2_{\mathcal{H}} - \|U_{\langle f_j \rangle}(e_i - ie_{i'})\|^2_{\mathcal{H}}) \nonumber\\ =& \: \frac{1}{4}(\|U_{\langle f_j \rangle}e_i + U_{\langle f_j \rangle}e_{i'}\|^2_{\mathcal{H}} - \|U_{\langle f_j \rangle}e_i - U_{\langle f_j \rangle}e_{i'}\|^2_{\mathcal{H}}) \nonumber\\ -& \: \frac{i}{4}(\|U_{\langle f_j \rangle}e_i - iU_{\langle f_j \rangle}e_{i'}\|^2_{\mathcal{H}} - \|U_{\langle f_j \rangle}e_i + iU_{\langle f_j \rangle}e_{i'}\|^2_{\mathcal{H}}) \nonumber\\ =& \:  \langle U_{\langle f_j \rangle}e_{i'} , U_{\langle f_j \rangle}e_i \rangle_{\mathcal{H}} \nonumber\\ =& \: \langle U_{\langle f_j \rangle}e_i , U_{\langle f_j \rangle}e_{i'} \rangle_{\mathcal{H}}^*
\end{align}
falls $U_{\langle f_j \rangle}$ antiunitär ist. Insgesamt folgt somit $U_{\langle f_j \rangle}(e_i) \perp U_{\langle f_{j'} \rangle}(e_{i'})$ für $i \neq i'$ und $j \neq j'$. Damit ist $\{U_{\langle f_j \rangle}(e_i) \in \mathcal{H} : i \in I, \: j \in J\}$ eine orthonormale Menge in $\mathcal{H}$. Es bleibt zu zeigen, dass $\{U_{\langle f_j \rangle}(e_i) \in \mathcal{H} : i \in I, \: j \in J\}$ den Raum $\mathcal{H}$ aufspannt. Dazu betrachten wir 
\begin{align}
    \sqcup_{i \in I, j \in J} \langle U_{\langle f_j \rangle}(e_i) \rangle =& \:  \sqcup_{i \in I, j \in J} (h_1(\langle e_i \rangle) \sqcap h_2(\langle f_j \rangle)) \nonumber\\ =& \: h_1(\mathcal{H}_1) \sqcap h_2(\mathcal{H}_2) \nonumber\\ =& \: \mathcal{H}.
\end{align}
Daraus folgt, dass die Menge $\{U_{\langle f_j \rangle}(e_i) \in \mathcal{H} : i \in I, \: j \in J\}$ eine ONB von $\mathcal{H}$ ist.
\end{proof}

Im Folgenden Unterabschnitt wollen wir uns noch mit dem Begriff des Tensors und der Tensorprodukträume auseinandersetzen, um dann schlussendlich zu zeigen, dass $\mathcal{L}(\mathcal{H})$ tatsächlich isomorph zu $\mathcal{L}(\mathcal{H}_1 \otimes \mathcal{H}_2)$ ist. 


\subsubsection{Tensoren und Tensorprodukträume}

Zuerst benötigen wir einige grundlegende Definitionen, um den Tensorbegriff einzuführen:

\begin{definition}
Seien $\mathcal{H}$ und $\mathcal{G}$ zwei Hilberträume. Das Skalarprodukt auf $\mathcal{H}$ bezeichnen wir mit $\langle \cdot,\cdot \rangle_\mathcal{H}$ und das Skalarprodukt auf $\mathcal{G}$ mit $\langle \cdot,\cdot \rangle_\mathcal{G}$. Beide Skalarprodukte induzieren eine Norm, die wir mit $\|\cdot\|_\mathcal{H}$ und $\|\cdot\|_\mathcal{G}$ bezeichnen wollen. Eine Abbildung
\begin{align}
    F : \mathcal{H} \longrightarrow \mathcal{G}
\end{align}
heißt dann stetig, wenn ein $M>0$ existiert, sodass für alle $x \in \mathcal{H}$ 
\begin{align}
    \|F(x)\|_\mathcal{G} \leq M \|x\|_\mathcal{H}
\end{align}
ist.
\end{definition}

Wir betrachten nun eine spezielle Menge von stetigen Funktionen auf dem Hilbertraum $\mathcal{H}$:

\begin{definition}
Sei $\mathcal{H}$ ein komplexer Hilbertraum. Mit $\mathcal{H}^*$ wollen wir den sogenannten (topologischen) Dualraum von $\mathcal{H}$ bezeichnen, der als die Menge aller stetiger, linearer Abbildung von $\mathcal{H}$ in den Körper $\mathbb{C}$ definiert ist, d.h. 
\begin{align}
    \mathcal{H}^* = \{ f: \mathcal{H} \rightarrow \mathbb{C} \: : \: f \: \textrm{linear und stetig} \}.
\end{align}
Die Stetigkeit der Abbildungen $f$ ist dabei bezüglich der von dem Skalarprodukt induzierten Norm auf $\mathcal{H}$ und dem Betrag auf $\mathbb{C}$, der ja ebenfalls eine Norm darstellt, zu verstehen. Man nennt $f \in \mathcal{H}^*$ auch lineares, stetiges Funktional. Weiter nennt man $\mathcal{H}^{**} = (\mathcal{H}^*)^*$ topologischer Bidualraum von $\mathcal{H}$.
\end{definition}

\begin{definition}
Sei $\mathcal{H}$ ein komplexer Hilbertraum. Wir erklären die kanonische Einbettung $\mathbb{i}_\mathcal{H} : \mathcal{H} \rightarrow \mathcal{H}^{**}$ von $\mathcal{H}$ in seinen Bidualraum $\mathcal{H}^{**}$ durch 
\begin{align}
    (\mathbb{i}_\mathcal{H}(x))(f) := f(x),
\end{align}
wobei $x \in \mathcal{H}$ und $f \in \mathcal{H}^*$ ist. Aus der Definition wird sofort ersichtlich, dass es sich bei $\mathbb{i}_\mathcal{H}$ um einen linearen Operator handelt.
\end{definition}

\begin{definition}
Sei $\mathcal{H}$ ein komplexer Hilbertraum. Wir nennen $\mathcal{H}$ reflexiv, wenn die kanonische Einbettung $\mathbb{i}_\mathcal{H}$ bijektiv ist.
\end{definition}

Im folgenden nutzen wir folgendes wichtiges Resultat aus der Funktionalanalysis (für den Beweis siehe \cite{kreyszig1991introductory}), welches später für die Theorie wichtig wird.

\begin{lemma}
Jeder Hilbertraum $\mathcal{H}$ ist ein reflexiver Raum.
\end{lemma}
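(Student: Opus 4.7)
Der Plan ist, den Darstellungssatz von Riesz-Fréchet zweimal anzuwenden, einmal auf $\mathcal{H}$ und einmal auf den Dualraum $\mathcal{H}^*$ selbst. Der Satz von Riesz-Fréchet liefert zu jedem stetigen linearen Funktional $f \in \mathcal{H}^*$ genau ein $y_f \in \mathcal{H}$ mit $f(x) = \langle x, y_f \rangle_\mathcal{H}$ für alle $x \in \mathcal{H}$. Die dadurch definierte Abbildung $J: \mathcal{H}^* \to \mathcal{H}$, $f \mapsto y_f$, ist antilinear, isometrisch und bijektiv, und insbesondere trägt $\mathcal{H}^*$ vermöge $\langle f, g \rangle_{\mathcal{H}^*} := \langle J(g), J(f) \rangle_\mathcal{H}$ selbst die Struktur eines komplexen Hilbertraumes. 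Die Umkehrung der Reihenfolge der Argumente ist dabei nötig, um die Antilinearität von $J$ zu kompensieren und so tatsächlich ein Skalarprodukt im Sinne der Definition aus Abschnitt 4.1 zu erhalten.

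Zuerst würde ich die Injektivität von $\mathbb{i}_\mathcal{H}$ zeigen. Sei dazu $x \in \mathcal{H}$ mit $\mathbb{i}_\mathcal{H}(x) = 0$. Betrachtet man speziell das Funktional $f := J^{-1}(x) \in \mathcal{H}^*$, so folgt $0 = (\mathbb{i}_\mathcal{H}(x))(f) = f(x) = \langle x, x \rangle_\mathcal{H} = \|x\|^2_\mathcal{H}$ und damit $x = 0$. Die Linearität von $\mathbb{i}_\mathcal{H}$ ergibt sich unmittelbar aus der Definition, womit nur noch die Surjektivität zu beweisen ist.

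Der eigentliche Kern des Beweises ist damit der Nachweis der Surjektivität von $\mathbb{i}_\mathcal{H}$. Sei $\phi \in \mathcal{H}^{**}$ beliebig. Da $\mathcal{H}^*$ nach obiger Konstruktion ein Hilbertraum ist, liefert der Satz von Riesz-Fréchet, angewandt auf $\mathcal{H}^*$, ein eindeutig bestimmtes $f_\phi \in \mathcal{H}^*$ mit $\phi(g) = \langle g, f_\phi \rangle_{\mathcal{H}^*}$ für alle $g \in \mathcal{H}^*$. Setzt man $y := J(f_\phi) \in \mathcal{H}$, so rechne ich für alle $g \in \mathcal{H}^*$ nach, dass $\phi(g) = \langle g, f_\phi \rangle_{\mathcal{H}^*} = \langle J(f_\phi), J(g) \rangle_\mathcal{H} = \langle y, J(g) \rangle_\mathcal{H} = g(y) = (\mathbb{i}_\mathcal{H}(y))(g)$ gilt, wobei der vorletzte Schritt die Definition von $J$ ist. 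Daraus folgt $\phi = \mathbb{i}_\mathcal{H}(y)$ und somit die Surjektivität.

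Die Hauptschwierigkeit liegt weniger im eigentlichen Argument als in der sauberen Verifikation, dass die angegebene Sesquilinearform $\langle \cdot , \cdot \rangle_{\mathcal{H}^*}$ alle Axiome eines komplexen Skalarproduktes erfüllt (Linearität im ersten, hermitesche Symmetrie, positive Definitheit) und dass $\mathcal{H}^*$ mit der induzierten Norm vollständig ist; letzteres ergibt sich dabei direkt daraus, dass $J$ eine Isometrie zwischen $\mathcal{H}^*$ und dem bereits als vollständig angenommenen Raum $\mathcal{H}$ vermittelt. Alles andere ist dann eine routinemäßige Anwendung des Riesz-Satzes, der hier als bekannt vorausgesetzt wird.
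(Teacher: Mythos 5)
Dein Beweis ist korrekt und entspricht dem klassischen Standardargument; allerdings gibt es im Text keinen eigenen Beweis, mit dem man ihn vergleichen könnte: Die Arbeit verweist für Lemma 5.10 ausdrücklich auf die Literatur (Kreyszig) und beweist die Aussage nicht selbst. Deine Strategie — der Darstellungssatz von Riesz wird zweimal angewandt, zuerst um über die antilineare Isometrie $J$ den Dualraum $\mathcal{H}^*$ selbst zu einem Hilbertraum zu machen, und dann auf $\mathcal{H}^*$, um die Surjektivität von $\mathbb{i}_\mathcal{H}$ zu erhalten — ist genau der übliche Weg; das Hilfs-Skalarprodukt, das du auf $\mathcal{H}^*$ einführst, stimmt überdies mit demjenigen überein, das die Arbeit unmittelbar nach Theorem 5.4 mittels der Riesz-Abbildung $\mathbb{k}$ konstruiert, nämlich $[f,g]_{\mathcal{H}^*} = \langle \mathbb{k}^{-1}(g), \mathbb{k}^{-1}(f) \rangle_\mathcal{H}$, sodass du hier nichts Neues voraussetzt, was die Arbeit nicht ohnehin bereitstellt. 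Ein kleiner notationeller Einwand: Du schreibst die Riesz-Darstellung als $f(x) = \langle x, y_f \rangle_\mathcal{H}$ und berufst dich zugleich auf die Konvention aus Abschnitt 4.1, nach der das Skalarprodukt im \emph{ersten} Argument antilinear ist; unter dieser Konvention müsste die Darstellung $f(x) = \langle y_f, x \rangle_\mathcal{H}$ lauten (so auch in Theorem 5.4), da $f$ sonst antilinear wäre. Dein Argument ist unter der entgegengesetzten Konvention in sich konsistent und geht nach Vertauschen der Argumentplätze wörtlich durch; ein inhaltlicher Fehler liegt nicht vor.
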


\begin{remark}
Eine direkte Folgerung aus Lemma 5.10 ist, dass jeder Hilbertraum $\mathcal{H}$ isometrisch isomorph zu seinem Bidualraum $\mathcal{H}^{**}$ ist. Dabei wird die Isomorphie über die kanonische Einbettung realisiert.
\end{remark}

Ein weiteres wichtiges Resultat, was wir im folgenden nutzen wollen, ist der Rieszsche Darstellungssatz (Beweis siehe \cite{conway2019course}):

\begin{theorem}
Sei $\mathcal{H}$ ein komplexer Hilbertraum, wobei $\langle \cdot , \cdot \rangle_\mathcal{H}$ das Skalarprodukt von $\mathcal{H}$ ist, welches im ersten Argument antilinear und im zweiten Argument linear ist. Für jedes Element $f \in \mathcal{H}^*$ aus dem topologischen Dualraum von $\mathcal{H}$ existiert ein eindeutiges Element $\phi_{f} \in \mathcal{H}$, so dass 
\begin{align}
    f(x) = \langle \phi_{f}, x \rangle_\mathcal{H} 
\end{align}
für alle $x \in \mathcal{H}$. Desweiteren gilt, dass 
\begin{align}
    \|\phi_{f}\|_\mathcal{H} = \|f\|_{\mathcal{O}}
\end{align}
ist, wobei $\|f\|_{\mathcal{O}} = sup_{\|x\|_\mathcal{H} \leq 1}|f(x)|$ die Operatornorm ist.
\end{theorem}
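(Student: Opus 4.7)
Mein Plan ist, den Rieszschen Darstellungssatz mittels einer expliziten Konstruktion des Vektors $\phi_f$ über die orthogonale Zerlegung von $\mathcal{H}$ entlang des Kerns von $f$ zu zeigen. Zuerst reduziere ich auf den nichttrivialen Fall $f \neq 0$ (für $f = 0$ setzt man einfach $\phi_f = 0$, womit beide Identitäten trivial erfüllt sind). Im nichttrivialen Fall ist $N := \ker(f)$ als Urbild der abgeschlossenen Menge $\{0\} \subseteq \mathbb{C}$ unter der stetigen Abbildung $f$ ein echter abgeschlossener Unterraum von $\mathcal{H}$. Der bereits in Abschnitt 4.5 zitierte Projektionssatz liefert damit die Orthogonalzerlegung $\mathcal{H} = N \oplus N^{\perp}$ mit $N^{\perp} \neq \{0\}$.

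Im zweiten Schritt konstruiere ich $\phi_f$ explizit. Dazu wähle ich ein $z \in N^{\perp}$ mit $f(z) \neq 0$; ein solches $z$ muss existieren, weil andernfalls $N^{\perp} \subseteq N \cap N^{\perp} = \{0\}$ gelten würde, im Widerspruch zu $N^{\perp} \neq \{0\}$. Die zentrale Beobachtung ist nun, dass für beliebiges $x \in \mathcal{H}$ der Vektor $f(x)\, z - f(z)\, x$ wegen $f(f(x)z - f(z)x) = 0$ im Kern $N$ liegt. Bildet man das Skalarprodukt mit $z \in N^{\perp}$, so verschwindet es, und nach Umstellen ergibt sich unter Ausnutzung der Antilinearität im ersten Argument die Darstellung $f(x) = \bigl\langle \tfrac{\overline{f(z)}}{\|z\|_{\mathcal{H}}^2} z,\, x \bigr\rangle_{\mathcal{H}}$. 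Man setzt daher $\phi_f := \tfrac{\overline{f(z)}}{\|z\|_{\mathcal{H}}^2}\, z$.

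Die Eindeutigkeit folgt danach direkt: Sind $\phi$ und $\phi'$ zwei solche Darsteller von $f$, so gilt $\langle \phi - \phi', x \rangle_{\mathcal{H}} = 0$ für alle $x \in \mathcal{H}$, und die Wahl $x = \phi - \phi'$ zusammen mit der positiven Definitheit des Skalarprodukts liefert $\phi = \phi'$. Die Normgleichheit zeige ich schließlich durch beidseitige Abschätzung: Die Cauchy-Schwarz-Ungleichung liefert $|f(x)| = |\langle \phi_f, x \rangle_{\mathcal{H}}| \leq \|\phi_f\|_{\mathcal{H}} \|x\|_{\mathcal{H}}$, also $\|f\|_{\mathcal{O}} \leq \|\phi_f\|_{\mathcal{H}}$, während die Auswertung $f(\phi_f) = \|\phi_f\|_{\mathcal{H}}^2$ zusammen mit $|f(\phi_f)| \leq \|f\|_{\mathcal{O}} \|\phi_f\|_{\mathcal{H}}$ die umgekehrte Abschätzung ergibt.

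Der wesentliche Knackpunkt ist meiner Einschätzung nach weniger die algebraische Manipulation als die saubere Anwendung des Projektionssatzes: Ohne die Abgeschlossenheit des Kerns (die gerade aus der Stetigkeit von $f$ folgt) wäre die Orthogonalzerlegung nicht verfügbar, und ohne sie hätte man keinen Kandidaten $z$, mit dem sich der obige Trick $f(x)z - f(z)x \in N$ gewinnbringend auswerten ließe. Alle anderen Schritte sind nach dieser Hauptidee routinemäßige Umformungen.
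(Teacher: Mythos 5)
Dein Beweis ist korrekt und vollständig: Die Reduktion auf $f\neq 0$, die Abgeschlossenheit von $\ker(f)$ via Stetigkeit, die Orthogonalzerlegung mittels Projektionssatz, der Trick $f(x)z-f(z)x\in\ker(f)$ mit anschließender Auswertung gegen $z$ (unter korrekter Behandlung der Antilinearität im ersten Argument), sowie Eindeutigkeit und Normgleichheit über Cauchy-Schwarz sind alle sauber ausgeführt. Die Arbeit selbst beweist Theorem 5.4 nicht, sondern verweist auf die Literatur (\cite{conway2019course}); dein Argument ist genau der dort übliche Standardbeweis, sodass kein inhaltlicher Unterschied zum referenzierten Vorgehen besteht.
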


Nach Theorem 5.4 existiert demnach ein kanonischer Antiisomorphismus zwischen dem komplexen Hilbertraum $\mathcal{H}$ und seinem topologischen Dualraum $\mathcal{H}^*$:
\begin{align}
     \mathbb{k} : \mathcal{H} \longrightarrow& \: \: \mathcal{H}^* \nonumber\\
    x \longmapsto& \: \: \mathbb{k}(x) := x^*, \: \: \: x^*(y) = \langle x , y \rangle_\mathcal{H} \: \: \forall y \in \mathcal{H}
\end{align}
Wir nennen die Abbildung $\mathbb{k}$ Riesz-Abbildung.\\

Wie bereits oft angesprochen, wollen wir zeigen, dass $\mathcal{L}(\mathcal{H})$ und $\mathcal{L}(\mathcal{H}_1 \otimes \mathcal{H}_2)$ isomorph sind. D.h., das wir insbesondere den bisher noch nicht eingeführten Tensorproduktraum $\mathcal{H}_1 \otimes \mathcal{H}_2$ als Zustandsvektorraum des zusammengesetzten quantenmechanishen Systems $\Sigma$ verstehen wollen. Demzufolge erwarten wir gemäß Abschnitt 4.1, dass es sich bei $\mathcal{H}_1 \otimes \mathcal{H}_2$ um einen Hilbertraum handelt. 

Um später in diesem Abschnitt die Tensorprodukträume tatsächlich zu Hilberträumen zu machen, müssen wir zuerst ein Skalarprodukt auf $\mathcal{H}^*$ einführen.   


Dazu betrachten wir wieder die Abbildung $\mathbb{k}$ aus dem Rieszschen Darstellungssatz. Mit dieser ist es nun möglich, den Dualraum $\mathcal{H}^*$ des komplexen Hilbertraumes $\mathcal{H}$ zu einem Hilbertraum zu machen. Dazu definieren wir in $\mathcal{H}^*$ das innere Produkt 
\begin{align}
    [f,g]_{\mathcal{H}^*} := \langle \mathbb{k}^{-1}(g) , \mathbb{k}^{-1}(f) \rangle_\mathcal{H}, \qquad \forall f,g \in \mathcal{H}^*. \label{eq:Dual}
\end{align}
Wegen der Additivität der Abbildung $\mathbb{k}$ folgt die Additivität in den beiden Argumenten von $[ \cdot , \cdot ]_{\mathcal{H}^*}$. Aus der Bijektivität von $\mathbb{k}$ folgt die positive Definitheit von $[ \cdot , \cdot ]_{\mathcal{H}^*}$. Da  $\mathbb{k}$ ein Antiisomorphismus ist, folgt für alle $\lambda \in \mathbb{C}$ im ersten Argument, dass
\begin{align}
    [\lambda f , g]_{\mathcal{H}^*} =& \: \langle \mathbb{k}^{-1}(g) , \mathbb{k}^{-1}(\lambda f) \rangle_\mathcal{H} \nonumber\\ =& \:  \langle \mathbb{k}^{-1}(g) , \lambda^* \mathbb{k}^{-1}(f) \rangle_\mathcal{H} \nonumber\\ =& \: \lambda^* \langle \mathbb{k}^{-1}(g) , \mathbb{k}^{-1}(f) \rangle_\mathcal{H} \nonumber\\ =& \: \lambda^* [f,g]_{\mathcal{H}^*},
\end{align}
ist und im zweiten Argument gilt
\begin{align}
    [f , \lambda g]_{\mathcal{H}^*} =& \: \langle \mathbb{k}^{-1}(\lambda g) , \mathbb{k}^{-1}(f) \rangle_\mathcal{H} \nonumber\\ =& \:  \langle \lambda^* \mathbb{k}^{-1}(g) , \mathbb{k}^{-1}(f) \rangle_\mathcal{H} \nonumber\\ =& \: \lambda \langle \mathbb{k}^{-1}(g) , \mathbb{k}^{-1}(f) \rangle_\mathcal{H} \nonumber\\ =& \: \lambda [f,g]_{\mathcal{H}^*}.
\end{align}
Die Hermitizität von $[\cdot, \cdot]_{\mathcal{H}^*}$ ergibt sich definitionsgemäß aus der Tatsache, dass $\langle \cdot,\cdot \rangle_\mathcal{H}$ ein komplexes Skalarprodukt ist, womit ingesamt folgt, dass $[\cdot, \cdot]_{\mathcal{H}^*}$ eine positiv definite, hermitische Sesquilinearform ist, die im ersten Argument antilinear ist. Wegen 
\begin{align}
    \|f\|_{\mathcal{H}^*} :=& \: \sqrt{[f,f]_{\mathcal{H}^*}} \nonumber\\ =& \: \sqrt{\langle \mathbb{k}^{-1}(f) , \mathbb{k}^{-1}(f) \rangle_\mathcal{H}} \nonumber\\ =& \: \|\mathbb{k}^{-1}(f)\|_\mathcal{H} \nonumber\\ =& \: \|f\|_\mathcal{O}
\end{align}
und der Tatsache, dass $(\mathcal{H}^*, \| \cdot \|_\mathcal{O})$ ein Banachraum ist \cite{rudin1991functional}, d.h. ein vollständiger normierter Raum, folgt, dass $(\mathcal{H}^*, [\cdot,\cdot]_{\mathcal{H}^*})$ ein Hilbertraum ist. 

\begin{remark}
Mittels der Riesz-Abbildung ist es nun auch möglich, eine einfache ONB in $\mathcal{H}^*$ zu konstruieren. Sei dazu im Folgenden $I$ eine Indexmengen und $\mathcal{A} = \{e_i \in \mathcal{H} : i \in I\}$ eine beliebige ONB von $\mathcal{H}$. Dann definiert die Menge $\{\mathbb{k}(e_i) = (e_i)^{*} := e^{i} : e_i \in \mathcal{A}\}$ eine Basis in $\mathcal{H}^*$, die gemäß \eqref{eq:Dual} eine ONB in $\mathcal{H}^*$ darstellt. Mittels einer ONB in $\mathcal{H}$ und der eben angegeben ONB in $\mathcal{H}^*$ werden wir später in der Lage sein, eine einfache ONB in den Tensorprodukträumen einzuführen, was es uns später erleichtern wird, lineare und antilineare Bijektionen zwischen $\mathcal{H}$ und $\mathcal{H}_1 \otimes \mathcal{H}_2$ zu konstruieren. 
\end{remark}

Nun lässt sich der Begriff des Tensors (in unserem funktionalanalytischen Kontext) einführen: 

\begin{definition}
Seien $r,s \in \mathbb{N}_0 = \mathbb{N} \cup \{0\}$ und $\mathcal{H}$ ein komplexer Hilbertraum. Unter einem $(r,s)$-Tensor $T$ verstehen wir die stetige Multilinearform
\begin{align}
    T: \underbrace{\mathcal{\mathcal{H}} \times ... \times \mathcal{H}}_{s} \times \underbrace{\mathcal{H}^* \times ... \times \mathcal{H}^*}_{r} \longrightarrow \mathbb{C},
\end{align}
wobei Multilinearform meint, dass die Abbildung in den Grundkörper $\mathbb{C}$ abbildet und linear in jedem Argument ist. Die Stetigkeit der Multilinearform ist dabei wie folgt zu verstehen: Es existiert eine Konstante $M>0$, sodass für alle $x_1,...,x_s \in \mathcal{H}$ und alle $\omega_1,...,\omega_r \in \mathcal{H}^*$ gilt, dass
\begin{align}
    |T(x_1,...,x_s,\omega_1,...,\omega_r)| \leq M \|x_1\|_\mathcal{H}...\|x_s\|_\mathcal{H} \|\omega_1\|_{\mathcal{H}^*}...\|\omega_r\|_{\mathcal{H}^*}.
\end{align}

Seien $I$ und $J$ Indexmengen und $\mathcal{A} = \{e_i \in \mathcal{H} : i \in I\}$ eine Basis in $\mathcal{H}$ und $\mathcal{B} = \{\epsilon^j \in \mathcal{H}^* : j \in J\}$ eine Basis in $\mathcal{H}^*$. Dann sind die Komponenten vom $(r,s)$-Tensor $T$ bzgl. der Basen $\mathcal{A}$ und $\mathcal{B}$ gegeben durch 
\begin{align}
    T(e_{i_1}, ... , e_{i_s}, \epsilon^{j_1}, ..., \epsilon^{j_r}) =  T^{j_1,...,j_r}_{i_1,...,i_s}
\end{align}
für $i_1,...,i_s \in I$ und $j_1,...,j_r \in J$.
\end{definition}

Aus der Definition eines $(r,s)$-Tensors auf $\mathcal{\mathcal{H}} \times ... \times \mathcal{H} \times \mathcal{H}^* \times ... \times \mathcal{H}^*$ ergibt sich, dass $(0,1)$-Tensoren gerade die linearen, stetigen Funktionale sind, d.h. für einen $(0,1)$-Tensor $f$ gilt $f \in \mathcal{H}^*$. Weiter entsprechen die $(1,0)$-Tensoren Elementen aus dem Bidualraum $\mathcal{H}^{**}$. Da $\mathcal{H}$ und $\mathcal{H}^{**}$ nach Bemerkung 5.1 isometrisch isomorph sind, können wir die $(1,0)$-Tensoren auch äquivalent als Elemente aus dem komplexen Hilbertraum $\mathcal{H}$ auffassen. Dabei gilt für ein Element $\eta \in \mathcal{H}^{**}$, welches über die Abbildung $\mathbb{i}^{-1}_\mathcal{H}$ auf $e \in \mathcal{H}$ abgebildet wird, dass $\eta(\epsilon) = (\mathbb{i}_\mathcal{H}(e))(\epsilon) = \epsilon(\mathbb{i}^{-1}_\mathcal{H}(\eta)) = \epsilon(e)$ für alle $\epsilon \in \mathcal{H}^*$ ist. Wir setzen daher $e(\epsilon) := \epsilon(e)$. 

\begin{definition}
Seien $r,s,m,n \in \mathbb{N}_0$. Sei $T$ ein $(r,s)$-Tensor und $C$ ein $(m,n)$-Tensor. Dann ist das Tensorprodukt $\otimes$ zwischen $T$ und $C$, bezeichnet mit $T \otimes C$, definiert durch 
\begin{align}
    (T \otimes C)&(e_1,...,e_{s-1},e_s,e_{s+1},...,e_{s+n},\epsilon^1,...,\epsilon^{r-1},\epsilon^r,\epsilon^{r+1},...,\epsilon^{r+m}) \nonumber\\ =& \: T(e_1,...,e_{s-1},e_s,\epsilon^1,...,\epsilon^{r-1},\epsilon^r)C(e_{s+1},...,e_{s+n},\epsilon^{r+1},...,\epsilon^{r+m}),
\end{align}
wobei $e_1,...,e_{s-1},e_s,e_{s+1},...,e_{s+n} \in \mathcal{H}$ und $\epsilon^1,...,\epsilon^{r-1},\epsilon^r,\epsilon^{r+1},...,\epsilon^{r+m} \in \mathcal{H}^*$. Aus der Definition wird sofort klar, dass $T \otimes C$ ein $(r+m,s+n)$-Tensor ist.
\end{definition}

Seien $I$ und $J$ wieder zwei Indexmengen und $\mathcal{A} = \{e_i \in \mathcal{H} : i \in I\}$ eine Basis vom komplexen Hilbertraum $\mathcal{H}$ und $\mathcal{B} = \{\epsilon^j \in \mathcal{H}^* : j \in J\}$ eine Basis von $\mathcal{H}^*$. Mittels des Tensorproduktes $\otimes$ und Bemerkung 5.1 sehen wir, dass Elemente der Form $\epsilon^{j_1} \otimes ... \otimes \epsilon^{j_r} \otimes e_{i_1} \otimes ... \otimes e_{i_s}$ mit $i_1,...,i_s \in I$ und $j_1,...,j_r \in J$ $(r,s)$-Tensoren darstellen, die wir $(r,s)$-Elementartensoren nennen.

\begin{definition}
Sei $\mathcal{H}$ ein komplexer Hilbertraum und sei $\mathcal{A} = \{e_i \in \mathcal{H} : i \in I\}$ eine Basis in $\mathcal{H}$ und $\mathcal{B} = \{\epsilon^j \in \mathcal{H}^* : j \in J\}$ eine Basis in $\mathcal{H}^*$. Wir erklären die Menge 
\begin{align}
    \underbrace{\mathcal{H}^* \odot ... \odot \mathcal{H}^*}_{r} \odot \underbrace{\mathcal{H} \odot ... \odot \mathcal{H}}_{s} 
\end{align}
als die komplexe lineare Hülle aller $(r,s)$-Elementartensoren, d.h. die Menge $\mathcal{H}^* \odot ... \odot \mathcal{H}^* \odot \mathcal{H} \odot ... \odot \mathcal{H}$ wird identifiziert mit der Menge
\begin{align}
    span_\mathbb{C}(\{\epsilon^{j_1} \otimes ... \otimes \epsilon^{j_r} \otimes e_{i_1} \otimes ... \otimes e_{i_s} : i_1,...,i_s \in I, \: j_1,...,j_r \in J\}).
\end{align}
Wir bezeichnen $\mathcal{H}^* \odot ... \odot \mathcal{H}^* \odot \mathcal{H} \odot ... \odot \mathcal{H}$ als $(r,s)$-Tensorproduktraum.
\end{definition}

Es lässt sich zeigen, dass die Menge 
\begin{align}
    \{\epsilon^{j_1} \otimes ... \otimes \epsilon^{j_r} \otimes e_{i_1} \otimes ... \otimes e_{i_s} : i_1,...,i_s \in I, \: j_1,...,j_r \in J\}
\end{align}
eine Basis in  $\mathcal{H}^* \odot ... \odot \mathcal{H}^* \odot \mathcal{H} \odot ... \odot \mathcal{H}$ ist (\cite{folland1995course}, Proposition 7,14). \\

Betrachten wir den $(r,s)$-Tensorproduktraum $\mathcal{H}^* \odot ... \odot \mathcal{H}^* \odot \mathcal{H} \odot ... \odot \mathcal{H}$ etwas genauer, so stellen wir fest, dass jeder Faktor einen Hilbertraum darstellt, sofern wir $\mathcal{H}$ mit $\langle \cdot, \cdot \rangle_\mathcal{H}$ und $\mathcal{H}^*$ mit $[\cdot, \cdot]_{\mathcal{H}^*}$ ausstatten. Es stellt sich nun, wie bereits angekündigt, die Frage, ob wir den $(r,s)$-Tensorproduktraum mit einem inneren Produkt so derart ausstatten können, dass damit auch dieser wieder ein Hilbertraum ist, denn unser Ziel ist es ja, den Tensorproduktraum als quantenmechanischen Zustandsvektorraum zu verstehen. Und Zustandsvektorräume sind nach Abschnitt 4.1 Hilberträume. Eine natürliche Wahl für ein inneres Produkt auf $\mathcal{H}^* \odot ... \odot \mathcal{H}^* \odot \mathcal{H} \odot ... \odot \mathcal{H}$ ist dabei die lineare Fortsetzung von
\begin{align}
    \langle \epsilon^{j_1} \otimes ... \otimes \epsilon^{j_r} \otimes e_{i_1} &\otimes ... \otimes e_{i_s} , \epsilon^{k_1} \otimes ... \otimes \epsilon^{k_r} \otimes e_{l_1} \otimes ... \otimes e_{l_s} \rangle \nonumber\\ :=& [\epsilon^{j_1},\epsilon^{k_1}]_{\mathcal{H}^*}...[\epsilon^{j_r},\epsilon^{k_r}]_{\mathcal{H}^*} \langle e_{i_1},e_{l_1} \rangle_\mathcal{H}...\langle e_{i_s},e_{l_s} \rangle_\mathcal{H}. \label{eq:TenHil}
\end{align}
Es lässt sich leicht nachzuprüfen, dass $\langle \cdot,\cdot \rangle$ auf $\mathcal{H}^* \odot ... \odot \mathcal{H}^* \odot \mathcal{H} \odot ... \odot \mathcal{H}$ eine positiv definite, hermetische Sesquilinearform definiert. Im allgemeinen folgt aber nicht, das $(\mathcal{H}^* \odot ... \odot \mathcal{H}^* \odot \mathcal{H} \odot ... \odot \mathcal{H}, \langle \cdot,\cdot \rangle)$ ein Hilbertraum ist. Mit Blick auf die Tatsache, dass wir zeigen wollen, dass sich ein zusammengesetztes quantenmechanisches System über einen $(r,s)$-Tensorproduktraum, bestehend aus den Hilberträumen der Teilsysteme, modellieren lässt, ist diese letzte Beobachtung natürlich problematisch, da Zustandsvektorräumen nach obigen Ausführungen vollständig bezüglich ihres Skalarproduktes sein sollten. Um dieses Problem zu beheben, vervollständigen wir einfach den Raum $\mathcal{H}^* \odot ... \odot \mathcal{H}^* \odot \mathcal{H} \odot ... \odot \mathcal{H}$ bzgl. $\langle \cdot,\cdot \rangle$. Das führt auf folgende Definition:

\begin{definition}
Sei $\mathcal{H}$ ein komplexer Hilbertraum und sei $\mathcal{H}^* \odot ... \odot \mathcal{H}^* \odot \mathcal{H} \odot ... \odot \mathcal{H}$ der zugehörige $(r,s)$-Tensorproduktraum, den wir mit dem oben definierten Skalarprodukt $\langle \cdot,\cdot \rangle$ ausstatten. Wir bezeichnen mit 
\begin{align}
    \underbrace{\mathcal{H}^* \otimes ... \otimes \mathcal{H}^*}_{r} \otimes \underbrace{\mathcal{H} \otimes ... \otimes \mathcal{H}}_{s}
\end{align}
die Vervollständigung von $\mathcal{H}^* \odot ... \odot \mathcal{H}^* \odot \mathcal{H} \odot ... \odot \mathcal{H}$ bezgl. $\langle \cdot,\cdot \rangle$. Wir nennen $\mathcal{H}^* \otimes ... \otimes \mathcal{H}^* \otimes \mathcal{H} \otimes ... \otimes \mathcal{H}$ den $(r,s)$-Hilbert-Tensorproduktraum.
\end{definition}


\subsubsection{Isomorphe Propositionensysteme zusammengesetzter quantenmechanischer Systeme}

Wir betrachten nun wieder unsere zwei quantenmechanischen Systeme $\Sigma_1$ und $\Sigma_2$ dir wir als die beiden Teilsysteme des zusammengesetzten Systems $\Sigma$ verstehen wollen, wobei $\Sigma_1$ über den komplexen Hilbertraum $\mathcal{H}_1$, $\Sigma_2$ über den komplexen Hilbertraum $\mathcal{H}_2$ und $\Sigma$ über den komplexen Hilbertraum $\mathcal{H}$ modelliert werden.

Fassen wir noch einmal kurz zusammen, was wir bisher getan haben: In den vorangegangenen Abschnitten haben wir die Abbildungen $h_1: \mathcal{L}(\mathcal{H}_1) \rightarrow \mathcal{L}(\mathcal{H})$ und $h_2: \mathcal{L}(\mathcal{H}_2) \rightarrow \mathcal{L}(\mathcal{H})$ untersucht und festgestellt, dass es sich bei diesen um nicht gemischte m-Morphismen handelt, was dazu führte, dass gemäß Theorem 5.2 für beide Abbildungen eine Familie von speziellen Operatoren existiert, mit deren Hilfe wir in der Lage waren, die unitären oder antiunitären Abbildungen $U_{\langle x_2 \rangle}$ bzw. $V_{\langle x_1 \rangle}$ zu konstruieren. Diese bilden jeweils von den Hilberträumen $\mathcal{H}_1$ bzw. $\mathcal{H}_2$ der Subsysteme in den Hilbertraum $\mathcal{H}$ des Gesamtsystems ab. Wählt man eine ONB $\{e_i \in \mathcal{H}_1 : i \in I\}$ in $\mathcal{H}_1$ und eine ONB $\{f_j \in \mathcal{H}_2 : j \in J\}$ in $\mathcal{H}_2$, so ließ sich mittels der Operatoren $U_{\langle f_j \rangle}$ über $U_{\langle f_j \rangle}(e_i)$ eine ONB in $\mathcal{H}$ definieren. Im letzten Unterabschnitt haben wir dann den Begriff der Tensorprodukträume kennengelernt. Unser Ziel wird es nun unter anderem sein, bijektive Abbildungen zwischen $\mathcal{H}$ und $\mathcal{H}_1 \otimes \mathcal{H}_2$ zu erklären, die im wesentlichen strukturerhaltend sind. Im wesentlichen strukturerhaltend meint dabei, dass die Abbildungen entweder linear oder antilinear sind.   

Die Bedeutung dieser zu konstruierenden Abbildungen liegt darin, dass diese sich aufgrund ihrer Eigenschaften dazu eignen, Abbildungen zu konstruieren, die die Isomorphie zwischen den Propositionensystemen $\mathcal{L}(\mathcal{H})$ und $\mathcal{L}(\mathcal{H}_1 \otimes \mathcal{H}_2)$ vermitteln. Immerhin müssen derartige Abbildungen jedem abgeschlossenen Unterraum aus $\mathcal{H}$ eineindeutig einen abgeschlossenen Unterraum aus $\mathcal{H}_1 \otimes \mathcal{H}_2$ zuordnen. 

Im folgenden sei daher $\mathcal{A} = \{e_i \in \mathcal{H}_1 : i \in I\}$ eine ONB in $\mathcal{H}_1$ und $\mathcal{B} = \{f_j \in \mathcal{H}_2 : j \in J\}$ sei eine ONB in $\mathcal{H}_2$. Es gilt nach obigem, dass die Menge $\{\mathbb{k}_1(e_i) := e^i : e_i \in \mathcal{A}\}$ mit der Riesz-Abbildung $\mathbb{k}_1 : \mathcal{H}_1 \rightarrow \mathcal{H}^*_1$ eine ONB in $\mathcal{H}^*_1$ (Bemerkung 5.2), die Menge $\{e_i \otimes f_j : i \in I, j \in J\}$ eine ONB in $\mathcal{H}_1 \otimes \mathcal{H}_2$ und die Menge $\{e^i \otimes f_j : i \in I, j \in J\}$ eine ONB in $\mathcal{H}^*_1 \otimes \mathcal{H}_2$ ist (siehe dazu \eqref{eq:TenHil}). Wir betrachten die folgenden Abbildungen:
\begin{itemize}
    \item [\textit{i)}] Sind $h_1$ und $h_2$ lineare m-Morphismen, so definieren wir
    \begin{align}
         \phi_{e,f} : \mathcal{H}_1 \otimes \mathcal{H}_2 \longrightarrow& \: \: \mathcal{H}, \nonumber\\
        \sum_{i \in I, j \in J} x^{ij} (e_i \otimes f_j) \longmapsto& \: \: \sum_{i \in I, j \in J} x^{ij} U_{\langle f_j \rangle}(e_i)
    \end{align}
    \item [\textit{ii)}] Sind $h_1$ und $h_2$ antilineare m-Morphismen, so definieren wir
    \begin{align}
          \psi_{e,f} : \mathcal{H}_1 \otimes \mathcal{H}_2 \longrightarrow& \: \: \mathcal{H}, \nonumber\\
        \sum_{i \in I, j \in J} x^{ij} (e_i \otimes f_j) \longmapsto& \: \: \sum_{i \in I, j \in J} (x^{ij})^* U_{\langle f_j \rangle}(e_i)
    \end{align}
    \item [\textit{iii)}] Ist $h_1$ ein antilinearer und $h_2$ ein linearer m-Morphismus, so definieren wir
    \begin{align}
          \mu_{e,f} : \mathcal{H}^*_1 \otimes \mathcal{H}_2 \longrightarrow& \: \: \mathcal{H}, \nonumber\\
        \sum_{i \in I, j \in J} x_{i}^{j} (e^i \otimes f_j) \longmapsto& \: \: \sum_{i \in I, j \in J} x_{i}^{j} U_{\langle f_j \rangle}(e_i)
    \end{align}
    \item [\textit{iv}] Ist $h_1$ ein linearer und $h_2$ ein antilinearer m-Morphismus, so definieren wir
    \begin{align}
          \nu_{e,f} : \mathcal{H}^*_1 \otimes \mathcal{H}_2 \longrightarrow& \: \: \mathcal{H}, \nonumber\\
        \sum_{i \in I, j \in J} x_{i}^{j} (e^i \otimes f_j) \longmapsto& \: \: \sum_{i \in I, j \in J} (x_{i}^{j})^* U_{\langle f_j \rangle}(e_i)
    \end{align}
\end{itemize}
Per Konstruktion folgt, dass diese Abbildungen das Skalarprodukt erhalten und Additiv sind. Die Abbildungen $\phi_{e,f}$ und $\mu_{e,f}$ sind dabei unitäre Vektorraumhomomorphismen und die Abbildungen $\psi_{e,f}$ und $\nu_{e,f}$ antiunitäre Abbildungen, den es gilt:
\begin{align}
    \phi_{e,f}(\sum_{i \in I, j \in J} x^{ij} (e_i \otimes f_j)) =& \: \phi_{e,f}(\sum_{i \in I, j \in J} (x^{ij} e_i) \otimes f_j) = \sum_{i \in I, j \in J} U_{\langle f_j \rangle}(x^{ij} e_i) \nonumber\\ =& \: \sum_{i \in I, j \in J} x^{ij} U_{\langle f_j \rangle}(e_i) = \sum_{i \in I, j \in J} x^{ij} \phi_{e,f}(e_i \otimes f_j) \nonumber\\ =& \: \sum_{i \in I, j \in J} x^{ij} V_{\langle e_i \rangle}(f_j) = \sum_{i \in I, j \in J} V_{\langle e_i \rangle}(x^{ij} f_j) \nonumber\\ =& \: \phi_{e,f}(\sum_{i \in I, j \in J}  e_i \otimes (x^{ij} f_j)),
\end{align}
\begin{align}
    \phi_{e,f}(\sum_{i \in I, j \in J} x^{ij} (e_i \otimes f_j)) =& \: \psi_{e,f}(\sum_{i \in I, j \in J} (x^{ij} e_i) \otimes f_j) = \sum_{i \in I, j \in J} U_{\langle f_j \rangle}(x^{ij} e_i) \nonumber\\ =& \: \sum_{i \in I, j \in J} (x^{ij})^* U_{\langle f_j \rangle}(e_i) = \sum_{i \in I, j \in J} (x^{ij})^* \psi_{e,f}(e_i \otimes f_j) \nonumber\\ =& \: \sum_{i \in I, j \in J} (x^{ij})^* V_{\langle e_i \rangle}(f_j) = \sum_{i \in I, j \in J} V_{\langle e_i \rangle}(x^{ij} f_j) \nonumber\\ =& \: \psi_{e,f}(\sum_{i \in I, j \in J}  e_i \otimes (x^{ij} f_j)),
\end{align}
\begin{align}
    \mu_{e,f}(\sum_{i \in I, j \in J} x^{j}_i (e^i \otimes f_j)) =& \: \mu_{e,f}(\sum_{i \in I, j \in J} ((x^{j}_i)^* e_i)^* \otimes f_j) = \sum_{i \in I, j \in J} U_{\langle f_j \rangle}((x^{j}_i)^* e_i) \nonumber\\ =& \: \sum_{i \in I, j \in J} x^{j}_i U_{\langle f_j \rangle}(e_i) = \sum_{i \in I, j \in J} x^{j}_i \mu_{e,f}(e^i \otimes f_j) \nonumber\\ =& \: \sum_{i \in I, j \in J} x^{j}_i V_{\langle e_i \rangle}(f_j) = \sum_{i \in I, j \in J} V_{\langle e_i \rangle}(x^{j}_i f_j) \nonumber\\ =& \: \mu_{e,f}(\sum_{i \in I, j \in J}  e^i \otimes (x^{j}_i f_j)),
\end{align}
\begin{align}
    \nu_{e,f}(\sum_{i \in I, j \in J} x^{j}_i (e^i \otimes f_j)) =& \: \nu_{e,f}(\sum_{i \in I, j \in J} ((x^{j}_i)^* e_i)^* \otimes f_j) = \sum_{i \in I, j \in J} U_{\langle f_j \rangle}((x^{j}_i)^* e_i) \nonumber\\ =& \: \sum_{i \in I, j \in J} (x^{j}_i)^* U_{\langle f_j \rangle}(e_i) = \sum_{i \in I, j \in J} (x^{j}_i)^* \nu_{e,f}(e^i \otimes f_j) \nonumber\\ =& \: \sum_{i \in I, j \in J} (x^{j}_i)^* V_{\langle e_i \rangle}(f_j) = \sum_{i \in I, j \in J} V_{\langle e_i \rangle}(x^{j}_i f_j) \nonumber\\ =& \: \nu_{e,f}(\sum_{i \in I, j \in J}  e^i \otimes (x^{j}_i f_j))
\end{align}

Als nächstes werden wir zeigen, dass diese Abbildungen wohldefiniert sind:

\begin{lemma}
Die Abbildungen $\phi_{e,f}, \psi_{e,f}, \mu_{e,f}$ und $\nu_{e,f}$ sind unabhängig von den gewählten Basen $\{e_i \in \mathcal{H}_1 : i \in I\}$ und $\{f_j \in \mathcal{H}_2 : j \in J\}$ definiert und damit wohldefiniert.
\end{lemma}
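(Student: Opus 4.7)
Der Plan ist, für jede der vier Abbildungen auf Elementartensoren einen Ausdruck herzuleiten, der die gewählten Basen $\mathcal{A}, \mathcal{B}$ gar nicht mehr enthält. Konkret soll
\begin{align*}
\phi_{e,f}(x_1 \otimes x_2) \;=\; \alpha\,(F_{x_1,z_1}\circ K_{x_2,z_2})(z) \;=\; \|x_2\|_{\mathcal{H}_2}\,U_{\langle x_2\rangle}(x_1)
\end{align*}
für alle $x_1 \in \mathcal{H}_1$, $x_2 \in \mathcal{H}_2$ nachgewiesen werden, sowie die strukturell analogen Identitäten für $\psi_{e,f}, \mu_{e,f}, \nu_{e,f}$. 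Da die rechten Seiten weder von $\{e_i\}$ noch von $\{f_j\}$ abhängen, folgt die Basenunabhängigkeit zunächst auf allen Elementartensoren; durch $\mathbb{C}$-(anti)lineare und skalarprodukterhaltende (also stetige) Fortsetzung überträgt sie sich auf die Vervollständigungen $\mathcal{H}_1 \otimes \mathcal{H}_2$ bzw.\ $\mathcal{H}_1^\ast \otimes \mathcal{H}_2$.

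Für $\phi_{e,f}$ (beide $h_k$ linear) gehe ich konkret so vor: Mit $x_1 = \sum_i a^i e_i$ und $x_2 = \sum_j b^j f_j$ folgt per Definition
\begin{align*}
\phi_{e,f}(x_1 \otimes x_2) \;=\; \sum_{i,j} a^i b^j\, U_{\langle f_j\rangle}(e_i) \;=\; \sum_{i,j} a^i b^j\, V_{\langle e_i\rangle}(f_j),
\end{align*}
wobei im zweiten Schritt Lemma~5.9 (anwendbar wegen $\|e_i\|_{\mathcal{H}_1}=\|f_j\|_{\mathcal{H}_2}=1$) eingeht. Die Linearität von $V_{\langle e_i\rangle}$ kollabiert dann die $j$-Summe zu $V_{\langle e_i\rangle}(x_2) = \alpha\,(K_{x_2,z_2}\circ F_{e_i,z_1})(z)$, und Additivität und Homogenität von $F_{\cdot,z_1}$ im Subskript (Theorem~5.2~(iv) und (x) in Verbindung mit Lemma~5.7, das für lineares $h_1$ die Identität $F_{\lambda x,x} = \lambda\,\mathrm{id}$ liefert) erledigen die $i$-Summe: $\sum_i a^i F_{e_i,z_1} = F_{x_1,z_1}$. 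Mit Lemma~5.6 (Vertauschbarkeit von $F$ und $K$ auf $z$) ergibt sich die behauptete basenfreie Gleichung.

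Für $\psi_{e,f}, \mu_{e,f}, \nu_{e,f}$ verläuft die Reduktion strukturgleich; es kommt nur darauf an, die drei Quellen komplexer Konjugation korrekt gegeneinander aufzurechnen: (a) die explizite Konjugation der Koeffizienten in der Definition von $\psi_{e,f}$ und $\nu_{e,f}$, (b) die Antilinearität der Riesz-Abbildung $\mathbb{k}_1$, durch die $x_1^\ast = \sum_i (a^i)^\ast e^i$ gilt (nur bei $\mu_{e,f}, \nu_{e,f}$), und (c) die Regel $F_{\lambda x,x} = \lambda^\ast\,\mathrm{id}$ bzw.\ $K_{\lambda x,x} = \lambda^\ast\,\mathrm{id}$ für antilineares $h_1$ bzw.\ $h_2$ (Theorem~5.2~(x) zusammen mit Lemma~5.7). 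In jedem der vier Fälle heben sich diese Konjugationen genau so auf, dass nach demselben Reduktionsschema der basenfreie Kern $\alpha\,(F_{x_1,z_1}\circ K_{x_2,z_2})(z)$ stehen bleibt.

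Das Haupthindernis sehe ich in diesem konsistenten Verbuchen der Konjugationen, besonders bei $\mu_{e,f}$ und $\nu_{e,f}$, wo sich Riesz-Konjugation und Antilinearität des Morphismus leicht verwechseln lassen; sobald die Rechnung auf $(F_{x_1,z_1}\circ K_{x_2,z_2})(z)$ zurückgeführt ist, ist die Basenunabhängigkeit manifest. Die Erweiterung von den Elementartensoren auf die Hilbert-Tensorprodukträume erfolgt dann routinemäßig durch die bereits in der Vorrechnung zu Lemma~5.11 gezeigte Skalarprodukterhaltung (also Stetigkeit) und die Dichtheit des algebraischen Tensorprodukts in der Vervollständigung.
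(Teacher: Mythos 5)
Dein Vorschlag ist korrekt, organisiert den Beweis aber anders als die Arbeit. Die Arbeit fixiert ein beliebiges $x$ im Tensorprodukt, entwickelt es gleichzeitig in beiden Basenpaaren, leitet die Transformationsformel $x_i^j = \sum_{k,l} y_k^l \langle p_k,e_i\rangle_{\mathcal{H}_1}\langle f_j,q_l\rangle_{\mathcal{H}_2}$ her und rechnet dann $\mu_{p,q}(x)$ Schritt für Schritt in $\mu_{e,f}(x)$ um, wobei die (Anti-)Linearität von $U_{\langle q_l\rangle}$, die Vertauschungsidentität $U_{\langle x_2\rangle}(x_1)=V_{\langle x_1\rangle}(x_2)$ aus Lemma~5.9 und die (Anti-)Linearität von $V_{\langle e_i\rangle}$ eingehen; eine basisfreie Formel wird dabei nie isoliert. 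Du gehst stattdessen den Weg über eine intrinsische Darstellung $\phi_{e,f}(x_1\otimes x_2)=\alpha\,(F_{x_1,z_1}\circ K_{x_2,z_2})(z)$ auf Elementartensoren und schließt per Dichtheit und Stetigkeit auf die Vervollständigung. Beide Argumente beruhen auf denselben Bausteinen (Lemma~5.6, 5.8, 5.9 und Theorem~5.2); dein Zugang kauft sich mit dem zusätzlichen Dichtheits-/Fortsetzungsschritt eine konzeptionell klarere Erklärung der Basisunabhängigkeit ein, während der Zugang der Arbeit ohne diesen Schritt auskommt, weil er direkt mit allgemeinen Elementen arbeitet. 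Zwei kleine Punkte, die du sauber machen solltest: Erstens benötigt das Kollabieren der $i$-Summe, also $\sum_i a^i F_{e_i,z_1}(w)=F_{x_1,z_1}(w)$ für unendliches $I$, ein Stetigkeitsargument; am einfachsten schreibst du $V_{\langle e_i\rangle}(x_2)=\|x_2\|_{\mathcal{H}_2}\,U_{\langle x_2\rangle}(e_i)$ und nutzt dann die Stetigkeit und (Anti-)Linearität der \emph{einen} Abbildung $U_{\langle x_2\rangle}$, statt über eine Familie von Operatoren im Subskript zu summieren (die Arbeit behandelt diese Feinheit an der analogen Stelle ebenfalls stillschweigend). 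Zweitens deckt Lemma~5.6 nur den Fall linear unabhängiger $x_1,z_1$ bzw. $x_2,z_2$ ab; für die allgemeine Vertauschbarkeit musst du die im Anschluss an Lemma~5.6 bewiesene Verallgemeinerung auf $y_1=\tilde y+\lambda x_1$ mit zitieren.
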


\begin{proof}
Wir zeigen die Basisunabhängigkeit explizit für die Abbildung $\mu_{e,f}$. Der Beweis für die restlichen drei Abbildungen erfolgt auf analoge Weise. Wir wählen nun eine weitere ONB aus $\mathcal{H}_1$, $\{p_k \in \mathcal{H}_1 : k \in K\}$, und eine weitere ONB aus $\mathcal{H}_2$, $\{q_l \in \mathcal{H}_2 : l \in L\}$. Wir zeigen, dass $\mu_{e,f} = \mu_{p,q}$ ist. Wir wählen ein $x \in \mathcal{H}^*_1 \otimes \mathcal{H}_2$. Es gilt 
\begin{align}
    x = \sum_{i \in I, j \in J} x_i^j (e^i \otimes f_j) = \sum_{k \in K, l \in L} y_k^l (p^k \otimes q_l)
\end{align}
Da $\{e_i \in \mathcal{H}_1 : i \in I\}$ bzw. $\{f_j \in \mathcal{H}_2 : j \in J\}$ ONB's in $\mathcal{H}_1$ bzw. $\mathcal{H}_2$ darstellen, gilt
\begin{align}
    p_k = \sum_{i \in I} \langle e_i,p_k \rangle_{\mathcal{H}_1}e_i, \\
    q_l = \sum_{j \in J} \langle f_j,q_l \rangle_{\mathcal{H}_2}f_j.
\end{align}
Daraus folgt 
\begin{align}
    x_i^j = \sum_{k \in K, l \in L} y_k^l \langle p_k,e_i \rangle_{\mathcal{H}_1} \langle f_j,q_l \rangle_{\mathcal{H}_2}.
\end{align}

Es gilt damit
\begin{align}
    \mu_{p,q}(x) =& \: \sum_{k \in K, l \in L} y_k^l U_{\langle q_l \rangle}(p_k) \nonumber\\ =& \: \sum_{k \in K, l \in L} y_k^l U_{\langle q_l \rangle}(\sum_{i \in I} \langle e_i,p_k \rangle_{\mathcal{H}_1}e_i) \nonumber\\ =& \: \sum_{k \in K, l \in L, i \in I} y_k^l U_{\langle q_l \rangle}(\langle e_i,p_k \rangle_{\mathcal{H}_1}e_i) \nonumber\\ =& \: \sum_{k \in K, l \in L, i \in I} y_k^l \langle p_k,e_i \rangle_{\mathcal{H}_1} U_{\langle q_l \rangle}(e_i) \nonumber\\ =& \:  \sum_{k \in K, l \in L, i \in I} y_k^l \langle p_k,e_i \rangle_{\mathcal{H}_1} V_{\langle e_i \rangle}(q_l) \nonumber\\ =& \: \sum_{k \in K, l \in L, i \in I} y_k^l \langle p_k,e_i \rangle_{\mathcal{H}_1} V_{\langle e_i \rangle}(\sum_{j \in J} \langle f_j,q_l \rangle_{\mathcal{H}_2}f_j) \nonumber\\ =& \: \sum_{k \in K, l \in L, i \in I, j \in J} y_k^l \langle p_k,e_i \rangle_{\mathcal{H}_1} \langle f_j,q_l \rangle_{\mathcal{H}_2} V_{\langle e_i \rangle}(f_j) \nonumber\\ =& \: \sum_{k \in K, l \in L, i \in I, j \in J} y_k^l \langle p_k,e_i \rangle_{\mathcal{H}_1} \langle f_j,q_l \rangle_{\mathcal{H}_2} U_{\langle f_j \rangle}(e_i) \nonumber\\ =& \: \sum_{i \in I, j \in J} x_i^j U_{\langle f_j \rangle}(e_i) \nonumber\\ =& \: \mu_{e,f}(x).
\end{align}
Dabei haben wir ausgenutzt, dass wegen 
\begin{align}
    \langle f_s,f_j \rangle_{\mathcal{H}_2} = \delta_{s,j}, \qquad
    \langle q_l,q_r \rangle_{\mathcal{H}_2} = \delta_{l,r}
\end{align}
gilt, dass
\begin{align}
    \langle f_j,q_r \rangle_{\mathcal{H}_2} =& \: \langle \sum_{l \in L} \langle q_l,f_j \rangle_{\mathcal{H}_2}q_l,q_r \rangle_{\mathcal{H}_2} \nonumber\\ =& \: \sum_{l \in L} \langle q_l,f_j \rangle_{\mathcal{H}_2} \langle q_l,q_r \rangle_{\mathcal{H}_2} \nonumber\\ =& \: \sum_{l \in L} \langle q_l,f_j \rangle_{\mathcal{H}_2} \delta_{l,r} \nonumber\\ =& \: \langle q_r,f_j \rangle_{\mathcal{H}_2}.
\end{align}
\end{proof}

Nun lässt sich unter anderem zeigen, dass die Propositionensysteme $\mathcal{L}(\mathcal{H})$ und $\mathcal{L}(\mathcal{H}_1 \otimes \mathcal{H}_2)$ isomorph sind:

\begin{theorem}
Seien $\mathcal{H}_1, \mathcal{H}_2$ und $\mathcal{H}$ jeweils komplexe Hilberträume, wobei $dim(\mathcal{H}_1), dim(\mathcal{H}_2) \geq 3$ ist. Weiter seien $\mathcal{L}(\mathcal{H}_1), \mathcal{L}(\mathcal{H}_2)$ und $\mathcal{L}(\mathcal{H})$ die zu den komplexen Hilberträumen gehörigen (quantenmechanischen) Propositionensysteme. Sind Abbildungen $h_1:\mathcal{L}(\mathcal{H}_1) \rightarrow \mathcal{L}(\mathcal{H})$ und $h_2:\mathcal{L}(\mathcal{H}_2) \rightarrow \mathcal{L}(\mathcal{H})$ gegeben, die den Bedingungen 
\begin{itemize}
    \item [\textit{i)}] $h_1$ und $h_2$ sind unitäre c-Morphismen.
    \item [\textit{ii)}] Für alle $A \in \mathcal{L}(\mathcal{H}_1)$ und für alle $B \in \mathcal{L}(\mathcal{H}_2)$ folgt, dass $h_1(A) \leftrightarrow h_2(B)$ ist.
    \item [\textit{iii)}] Für alle Atome $p \in \mathcal{L}(\mathcal{H}_1)$ und alle Atome $q \in \mathcal{L}(\mathcal{H}_2)$ folgt, dass der Ausdruck $h_1(p) \sqcap h_2(q) \in \mathcal{L}(\mathcal{H})$ ein Atom ist.
\end{itemize}
genügen, so folgt, das $\mathcal{L}(\mathcal{H})$ isomorph zu $\mathcal{L}(\mathcal{H}_1 \otimes \mathcal{H}_2)$ oder $\mathcal{L}(\mathcal{H}^*_1 \otimes \mathcal{H}_2)$ ist.
\end{theorem}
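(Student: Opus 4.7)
The plan is to read off the isomorphism directly from the four (anti)unitary vector space maps $\phi_{e,f}, \psi_{e,f}, \mu_{e,f}, \nu_{e,f}$ constructed just before Lemma 5.11. Since by Lemma 5.7 neither $h_1$ nor $h_2$ is mixed, the pair $(h_1, h_2)$ falls into exactly one of four cases (linear/linear, antilinear/antilinear, antilinear/linear, linear/antilinear), and to each case we attach the corresponding map:
\begin{align*}
    \phi_{e,f} : \mathcal{H}_1 \otimes \mathcal{H}_2 \longrightarrow \mathcal{H}, & \quad \psi_{e,f} : \mathcal{H}_1 \otimes \mathcal{H}_2 \longrightarrow \mathcal{H}, \\
    \mu_{e,f} : \mathcal{H}_1^* \otimes \mathcal{H}_2 \longrightarrow \mathcal{H}, & \quad \nu_{e,f} : \mathcal{H}_1^* \otimes \mathcal{H}_2 \longrightarrow \mathcal{H}.
\end{align*}
By Lemma 5.11 each of these is well-defined (independent of the chosen ONBs), and by the construction preceding that lemma they are bijective, norm-preserving, and either linear ($\phi_{e,f}, \mu_{e,f}$) or antilinear ($\psi_{e,f}, \nu_{e,f}$). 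Moreover Lemma 5.9 guarantees that the images of the tensor-product ONB are exactly the ONB $\{U_{\langle f_j\rangle}(e_i)\}$ of $\mathcal{H}$, so the maps are surjective onto $\mathcal{H}$.

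Next I would transport such a map $\Phi \in \{\phi_{e,f}, \psi_{e,f}, \mu_{e,f}, \nu_{e,f}\}$ to the lattice level by declaring
\begin{align*}
    \hat{\Phi} : \mathcal{L}(\mathcal{H}_1 \otimes \mathcal{H}_2) \longrightarrow \mathcal{L}(\mathcal{H}), \qquad p \longmapsto \Phi(p),
\end{align*}
respectively $\mathcal{L}(\mathcal{H}_1^* \otimes \mathcal{H}_2) \to \mathcal{L}(\mathcal{H})$ in the mixed cases. I would check in order: (a) $\hat{\Phi}(p)$ is again a closed complex subspace, because $\Phi$ is a norm-preserving (anti)linear bijection, hence a homeomorphism that carries complex subspaces to complex subspaces (in the antilinear case the image of a complex subspace is complex because the underlying set is closed under scalar multiplication by conjugates and the scalar field is the same $\mathbb{C}$); (b) $\hat{\Phi}$ is bijective, with inverse $\hat{\Phi^{-1}}$, since $\Phi$ is bijective; (c) $\hat{\Phi}$ preserves $\sqcap$, because intersections are preserved by any bijection; (d) $\hat{\Phi}$ preserves $\sqcup$, using that $\Phi$ is (anti)linear and continuous so it commutes with $\mathrm{span}_{\mathbb{C}}$ and with topological closure; (e) $\hat{\Phi}$ preserves $\cdot^{\perp}$, which in the unitary case follows from $\langle \Phi x,\Phi y\rangle_\mathcal{H} = \langle x,y\rangle$ and in the antiunitary case from $\langle \Phi x,\Phi y\rangle_\mathcal{H} = \langle x,y\rangle^{*}$, so that orthogonality is preserved either way.

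The main obstacle is the mixed cases, where the target Hilbert space is $\mathcal{H}_1^* \otimes \mathcal{H}_2$ rather than $\mathcal{H}_1 \otimes \mathcal{H}_2$. Here one has to be careful that the Hilbert-tensor-product structure on $\mathcal{H}_1^* \otimes \mathcal{H}_2$ (built from $[\cdot,\cdot]_{\mathcal{H}_1^*}$ and $\langle \cdot,\cdot\rangle_{\mathcal{H}_2}$) really makes $\mu_{e,f}$ and $\nu_{e,f}$ isometric; this is exactly what the explicit formula \eqref{eq:TenHil} for the tensor inner product gives, together with the fact that the Riesz map $\mathbb{k}_1$ is an anti-isomorphism taking the ONB $\{e_i\}$ to the ONB $\{e^i\}$. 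Once this bookkeeping is done, the verification of (a)--(e) proceeds identically in all four cases and yields the claimed isomorphism $\mathcal{L}(\mathcal{H}) \cong \mathcal{L}(\mathcal{H}_1 \otimes \mathcal{H}_2)$ or $\mathcal{L}(\mathcal{H}) \cong \mathcal{L}(\mathcal{H}_1^* \otimes \mathcal{H}_2)$, depending on which of the four (anti)linearity combinations is realised by $(h_1,h_2)$.
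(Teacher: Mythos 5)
Dein Vorschlag ist korrekt und verfolgt im Wesentlichen denselben Weg wie die Arbeit: Aus Lemma 5.5 und 5.7 folgt, dass $h_1,h_2$ nicht gemischte m-Morphismen sind, woraufhin die vier (anti)unitären Abbildungen $\phi_{e,f},\psi_{e,f},\mu_{e,f},\nu_{e,f}$ auf Verbandsebene zu $G\mapsto\{\Phi_{e,f}(x):x\in G\}$ gehoben und als Isomorphismen der Propositionensysteme verifiziert werden. Deine Ausarbeitung ist dabei sogar etwas sorgfältiger als die der Arbeit (explizite Prüfung der Abgeschlossenheit der Bilder und der Isometrie im gemischten Fall über die Riesz-Abbildung), ändert aber nichts am Beweisgang.
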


\begin{proof}
Aus dem Lemma 5.5 und Lemma 5.7 wissen wir, dass die Abbildungen $h_1$ und $h_2$ nicht gemischte m-Morphismen sind. Mittels der Abbildungen $\phi_{e,f}, \psi_{e,f}, \mu_{e,f}$ und $\nu_{e,f}$ und Lemma 5.11 folgt:
\begin{itemize}
    \item [\textit{i)}] Sind $h_1$ und $h_2$ linear, so wird ein Isomorphismus zwischen $\mathcal{L}(\mathcal{H})$ und $\mathcal{L}(\mathcal{H}_1 \otimes \mathcal{H}_2)$ über $\phi_{e,f}$ generiert:
    \begin{align}
          \phi : \mathcal{L}(\mathcal{H}_1 \otimes \mathcal{H}_2) \longrightarrow& \: \mathcal{L}(\mathcal{H}), \nonumber\\
         G \longmapsto& \: \: \{\phi_{e,f}(x) : x \in G\}
    \end{align}
    \item [\textit{ii)}] Sind $h_1$ und $h_2$ antilinear, so wird ein Isomorphismus zwischen $\mathcal{L}(\mathcal{H})$ und $\mathcal{L}(\mathcal{H}_1 \otimes \mathcal{H}_2)$ über $\psi_{e,f}$ generiert:
    \begin{align}
         \psi : \mathcal{L}(\mathcal{H}_1 \otimes \mathcal{H}_2) \longrightarrow& \: \mathcal{L}(\mathcal{H}), \nonumber\\
         G \longmapsto& \: \: \{\psi_{e,f}(x) : x \in G\}
    \end{align}
    \item [\textit{iii)}] Ist $h_1$ linear und $h_2$ antilinear, so wird ein Isomorphismus zwischen $\mathcal{L}(\mathcal{H})$ und $\mathcal{L}(\mathcal{H}^*_1 \otimes \mathcal{H}_2)$ über $\mu_{e,f}$ generiert:
    \begin{align}
         \mu : \mathcal{L}(\mathcal{H}^*_1 \otimes \mathcal{H}_2) \longrightarrow& \: \mathcal{L}(\mathcal{H}), \nonumber\\
         G \longmapsto& \: \: \{\mu_{e,f}(x) : x \in G\}
    \end{align}
    \item [\textit{iv)}] Ist $h_1$ antilinear und $h_2$ linear, so wird ein Isomorphismus zwischen $\mathcal{L}(\mathcal{H})$ und $\mathcal{L}(\mathcal{H}^*_1 \otimes \mathcal{H}_2)$ über $\nu_{e,f}$ generiert:
    \begin{align}
         \nu : \mathcal{L}(\mathcal{H}^*_1 \otimes \mathcal{H}_2) \longrightarrow& \: \mathcal{L}(\mathcal{H}), \nonumber\\
         G \longmapsto& \: \: \{\nu_{e,f}(x) : x \in G\}
    \end{align}
\end{itemize}
Alle diese Abbildungen sind Isomorpismen zwischen Propositionensystemen, denn es gilt z.B. für die Abbildung $\phi$, dass 
\begin{align}
    \phi(\sqcup_{s \in S} G_s) =& \: \{\phi_{e,f}(x) : x \in \sqcup_{s \in S} G_s\} \nonumber\\ =& \: \sqcup_{s \in S} \{\phi_{e,f}(x) : x \in G_s\} \nonumber\\ =& \: \sqcup_{s \in S} \phi(G_s)
\end{align}
ist, wobei $S$ eine beliebige Indexmenge ist. Weiter gilt, da $\phi_{e,f}$ unitär ist, dass
\begin{align}
    \phi(G^{\perp}) =& \: \{\phi_{e,f}(x) : x \in G^{\perp}\} \nonumber\\ =& \: \{\phi_{e,f}(x) : x \in G\}^{\perp} \nonumber\\ =& \: \phi(G)^{\perp}
\end{align}
ist. Analoges gilt für die Abbildungen $\psi, \mu$ und $\nu$, womit das Theorem gezeigt ist.
\end{proof}

Am Ende wollen wir dieses Resultat noch kurz diskutieren. Man kann sich nun beispielsweise die Frage stellen, ob dieses Ergebnis ein zu Erwartendes war oder nicht. Dazu betrachten wir folgendes Beispiel:

\begin{example}
Wir betrachten ein zusammengesetztes quantenmechanisches System, aufgebaut aus zwei Einteilchensystemen. Wollen wir nur die einzelnen Einteilchensysteme beschreiben, so modellieren wir den Zustand dieser wie in Abschnitt 4.1 über eine (normierte) Wellenfunktion. Bezeichne $\psi_1(r_1)$ die Wellenfunktion des ersten Teilchens zu einer festen Zeit $t$ und $\psi_2(r_2)$ die Wellenfunktion des zweiten Teilchens zur selben festen Zeit. Die Orte $r_i$ referieren dabei an die möglichen Orte des $i$-ten Teilchens mit $i=1,2$. 

Für die beiden Wellenfunktionen gilt
\begin{align}
    \int_{\mathbb{R}^3} |\psi_1(r_1)|^2 \,d \mu (r_1) = 1
\end{align}
und 
\begin{align}
    \int_{\mathbb{R}^3} |\psi_2(r_2)|^2 \,d \mu (r_2) = 1.
\end{align}
Ein natürlicher Ansatz für eine Wellenfunktion, welche das Gesamtsystem zur Zeit $t$ modelliert, wäre gegeben durch $\psi(r_1,r_2)$, sodass
\begin{align}
    \int_{\mathbb{R}^3} |\psi(r_1,r_2)|^2 \,d \mu (r_1) \,d \mu (r_2) = 1.
\end{align}
Ein besonders einfaches Beispiel, welches diesen Ansatz veranschaulichen soll, liegt vor, wenn die beiden obigen Teilchen unabhängig zueinander sind. In diesem Fall setzen wir die Gesamtwellenfunktion als 
\begin{align}
    \psi(r_1,r_2) = \psi_1(r_1)\psi_2(r_2)
\end{align}
an. Das $\psi$ tatsächlich ein System zweier unabhängiger Teilchen beschreibt, ergibt sich daraus, dass sich die Wahrscheinlichkeit $\mathbb{P}^{\psi}(B)$, mit $B = \{(r_1,r_2) \in B_1 \times B_2 \: : \: B_1, B_2 \subseteq \mathbb{R}^3\} \subseteq \mathbb{R}^6$ messbar, dass das eine Teilchen im (messbarem) Raumvolumen $B_1 \subseteq \mathbb{R}^3$ und das andere Teilchen im (messbarem) Raumvolumen $B_2 \subseteq \mathbb{R}^3$ zu finden ist, sich zu 
\begin{align}
    \mathbb{P}^{\psi}(B) = \mathbb{P}^{\psi_1}(B_1)\cdot\mathbb{P}^{\psi_2}(B_2),
\end{align}
ergibt. 

Wir modellieren also das zusammengesetzte System über eine quadratintegrable Funktion aus $L^2(\mathbb{R}^3 \times \mathbb{R}^3)$.

Der Grund, warum das Ergebnis von Theorem 5.5 zum Teil erwartbar war, liegt nun darin begründet, dass $L^2(\mathbb{R}^3 \times \mathbb{R}^3)$ isomorph zu $L^2(\mathbb{R}^3) \otimes L^2(\mathbb{R}^3)$ ist (\cite{reed2012methods}, Seite 51).
\end{example}

Desweiteren kann man nach der Bedeutung des Resultats von Theorem 5.5 fragen. Dazu folgende einfache Beobachtung:

Seien $\mathcal{H}_1$ und $\mathcal{H}_2$ zwei Hilberträume. Wir bilden den Hilbertraum $\mathcal{H}_1 \otimes \mathcal{H}_2$. Es gibt nun grob zwei Typen von Elementen in $\mathcal{H}_1 \otimes \mathcal{H}_2$: Separable Elemente, welche von der Form 
\begin{align}
    \Psi = \phi_1 \otimes \psi_2, \qquad \phi_1 \in \mathcal{H}_1, \: \psi_2 \in \mathcal{H}_2 \label{eq:separa}
\end{align}
sind, und nicht separable Elemente, welche sich nicht auf die Form \eqref{eq:separa} bringen lassen. Ein nicht separables Element ist z.B. gegeben durch  
\begin{align}
    \Phi = \phi_1 \otimes \psi_2 + \phi_2 \otimes \psi_1, \qquad \phi_1,\phi_1 \in \mathcal{H}_1, \: \psi_1,\psi_2 \in \mathcal{H}_2,
\end{align}
wobei $\phi_1 \neq \phi_2$ und $\psi_1 \neq \psi_2$ ist. Elemente wie diese ließen sich mittels eines kartesischen Produkts von $\mathcal{H}_1$ und $\mathcal{H}_2$ nicht darstellen. 

Wenn man also ein zusammengesetztes System über $\mathcal{H}_1 \otimes \mathcal{H}_2$ modelliert, lohnt es sich daher, nach der physikalischen Interpretation derartiger Zustände zu fragen. Wie sich zeigt, gibt es in der Quantenphysik ein bestimmtes Phänomen, welches nicht in klassischen Systemen auftreten kann und welches über derartige Zustände beschrieben wird \cite{nielsen2000quantum}. Dieses Phänomen nennt sich Verschränkung. Als verschränkt gelten immer dann zwei Teilchen, wenn beiden Teilchen zusammen ein wohldefinierter Zustand zugeordnet werden kann, den einzelnen Teilchen selbst aber nicht.

Demzufolge liegt die Bedeutung von Theorem 5.5 unter anderem darin, dass mittels dieses Resultats gewisse quantenmechanische Effekte zwischen den Teilsystemen eines zusammengesetzten Systems, wie die der Verschränkung, berücksichtigt werden.




\newpage

\section{Zusammenfassung und Ausblick}

In dieser Arbeit wurden Logiksysteme der klassischen Mechanik und der Quantenmechanik präsentiert und von ihrer mathematischen Struktur her verglichen. Während das klassische Logiksystem mathematisch durch einen  vollständigen, orthokomplementierten, distributiven und atomaren Verband beschrieben wird, wird die Quantenlogik erfasst durch einen vollständigen, orthomodularen, irreduziblen, schwach modularen, atomaren Verband, für den das Überdeckungsgesetz gilt und für den, im Gegensatz zum Logikmodell der klassischen Mechanik, das Distributivgesetz im Allgemeinen nicht mehr erfüllt sein muss, was einen wesentlichen Unterschied beider mathematischer Modelle darstellt. 

Es wurde darüber hinaus gezeigt, inwiefern sich beide Systeme bei der Beschreibung spezieller zusammengesetzter physikalischer Systeme unterscheiden. Während im klassischen Logikmodell zur Beschreibung zusammengesetzer Systeme das kartesische Produkt gebraucht wird, benötigt man in der Quantenlogik zur Beschreibung derartiger Systeme das Tensorprodukt.

Wichtig anzumerken ist dabei, dass dieses Resultat streng genommen nur für zusammengesetzte Systeme gezeigt wurde, die auf klassische Weise aneinander koppeln, d.h. das wir keine zusammengesetzten Systeme in die Betrachtung aufgenommen haben, bei denen quantenmechanische Effekte wie die der Verschränkung zwischen den einzelnen Teilsystemen entstehen können.

Zu beachten ist aber, dass rein von den Postulaten der Quantenmechanik, die gezeigten Resultate auch für zusammengesetzte Systeme gültig sind, bei denen quantenmechanische Effekte zwischen den einzelnen Teilsystemen zugelassen sind. Die Plausibilität dieses Postulates lässt sich damit begründen, dass Elemente im Tensorprodukt $\mathcal{H}_1 \otimes \mathcal{H}_2$ wie z.B. $\frac{1}{\sqrt{2}}(\phi_1 \otimes \psi_2 - \phi_2 \otimes \psi_1)$, wobei $\phi_1, \phi_2$ Basisvektoren aus $\mathcal{H}_1$ und $\psi_1,\psi_2$ Basisvektoren aus $\mathcal{H}_2$ sind, verschränkte Zustände beschreiben \cite{nielsen2000quantum}. Es ist daher anzunehmen, dass sich die obigen Resultate verallgemeinern lassen auf Systeme, die nicht auf klassische Weise aneinander gekoppelt werden.\\

Zum Abschluss wollen wir uns noch mit der Frage beschäftigen, wie ein quantenmechanisches Propositionensystem im obigen Sinne aussehen könnte, bei welchem auch Aussagen die sich z.B. auf den Ort oder den Impuls des Teilchens beziehen, aussehen könnte. Eine Idee wäre, den obigen Quantenlogikformalismus mittels des sogenannten Gelfand-Tripels \cite{de2001quantum} aus der Funktionalanalysis zu erweitern, um wirklich alle physikalischen Aussagen an das System zu berücksichtigen. Dabei geht man (im eindimensionalen) grob folgendermaßen vor: Man schränkt alle Observablenoperatoren auf eine dichte Teilmenge im $L^2(\mu)$, dem sogenannten Schwartzraum $\mathcal{S}$, ein und bildet den topologischen Dualraum des Schwartzraumes. Dieser Dualraum ist der Raum der temperierten Distributionen. Da der $L^2(\mu)$ isometrisch isomorph zu seinem topologischen Dualraum ist, lässt sich der $L^2(\mu)$ kanonisch in den Raum der temperierten Distributionen einbetten. Unter dieser Einbettung liegt der Schwartzraum dicht in seinem eigenen topologischen Dualraum, weshalb sich alle Observablenoperatoren, welche auf $\mathcal{S}$ definiert sind, stetig auf dem Raum der temperierten Distributionen hochheben lassen: Sei $\mathcal{O}_\mathcal{V}$ ein Observablenoperator der Observable $\mathcal{V}$. Dann ist die Hebung $\hat{\mathcal{O}}_\mathcal{V}$ von $\mathcal{O}_\mathcal{V}$,
\begin{align}
    (\mathcal{O}_\mathcal{V} \psi )(x) := \sum\limits_{m \in M \subset \mathbb{N}_0} D_m(x) \psi^{(m)}(x), \qquad D_m \: \: \textit{Polynom $\forall m \in M$}, \psi \in \mathcal{S},
\end{align}
auf den Raum der temperierten Distributionen definiert als
\begin{align}
    (\hat{\mathcal{O}}_\mathcal{V}(\Phi))(\psi) := \Phi\biggl( \sum\limits_{m \in M} (-1)^m \psi^{(m)}(x) \biggr), \qquad \Phi \in \mathcal{S}^*.
\end{align}
Das Spektrum $spec(\mathcal{O}_\mathcal{V})$ des Operators $\mathcal{O}_\mathcal{V}$ ergibt sich nun als 
\begin{align}
    spec(\mathcal{O}_\mathcal{V}) = \{\lambda \in \mathbb{R} \: \: : \: \: \exists \Phi \in \mathcal{S}^*\setminus\{0\} \: \: \textit{mit} \: \: \hat{\mathcal{O}}_\mathcal{V} \Phi = \lambda \Phi\}.
\end{align}
Nun können wir unsere Philosophie von oben bemühen und nun jede Proposition, die Aussagen über den Zustand von $\Sigma$ machen, als Unterraum in $\mathcal{S}^*$ auffassen, nur das nun die assoziierten Unterräume derjenigen Propositionen, die zu Ja/Nein-Experimenten korrespondieren, aus Eigendistributionen konstruiert werden! Offene Fragen bzgl. dieser Konstruktion, die ich mir aufgrund Zeitmangels selbst nicht beantworten konnte, sind: Welche Topologie wählt man auf $\mathcal{S}^*$, um über abgeschlossene Unterräume in $\mathcal{S}^*$ sprechen zu können und wie führt man ohne zur Verfügung stehendes Skalarprodukt eine sinnvolle Komplementbildung in $\mathcal{S}^*$ ein, die kompatibel mit dem Komplementbegriff in $\mathcal{L}$ ist?

\newpage




\addcontentsline{toc}{section}{Anhang}
\appendix       
\section*{Anhang}

Bezogen auf den Beweis von Lemma 5.9 beweisen wir die sogenannte Polarisationsformel.

\begin{lemma}
Sei $\mathcal{H}$ ein komplexer Hilbertraum und bezeichne $\langle \cdot,\cdot \rangle$ sein Skalarprodukt, welches im ersten Argument antilinear ist, und sei $\|\cdot\| := \sqrt{\langle \cdot,\cdot \rangle}$ die vom Skalarprodukt induzierte Norm. Wir definieren die Größe 
\begin{align}
    R(x,y) := \frac{1}{4}(\|x+y\|^2 - \|x-y\|^2)
\end{align}
Dann gilt die sogenannte Polarisationsformel
\begin{align}
    \langle x,y \rangle = R(x,y) - iR(x,iy)
\end{align}
und 
\begin{align}
    &R(x,y) = R(y,x), \\ &R(x,iy) = -R(ix,y).
\end{align}
\end{lemma}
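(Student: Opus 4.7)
Der Plan ist, die Ausdrücke $\|x+y\|^2$ und $\|x-y\|^2$ zunächst mittels der Sesquilinearität des Skalarproduktes auszumultiplizieren. Unter Ausnutzung der Hermitizität $\langle y,x \rangle = \overline{\langle x,y \rangle}$ summieren sich die Mischterme zu $\pm 2 \, \mathrm{Re}(\langle x,y \rangle)$, während die Terme $\|x\|^2$ und $\|y\|^2$ sich bei der anschließenden Subtraktion herauskürzen. Nach Division durch $4$ erhalte ich die zentrale Identität $R(x,y) = \mathrm{Re}(\langle x,y \rangle)$, die im weiteren Verlauf des Beweises als Dreh- und Angelpunkt fungieren wird.

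Als Nächstes würde ich diese Identität auf $R(x,iy)$ anwenden. Wegen der Linearität des Skalarproduktes im zweiten Argument gilt $\langle x, iy \rangle = i \langle x,y \rangle$, woraus sich $R(x,iy) = \mathrm{Re}(i \langle x,y \rangle) = -\mathrm{Im}(\langle x,y \rangle)$ ergibt. Durch Einsetzen erhalte ich dann $R(x,y) - iR(x,iy) = \mathrm{Re}(\langle x,y \rangle) + i \, \mathrm{Im}(\langle x,y \rangle) = \langle x,y \rangle$, was die eigentliche Polarisationsformel liefert.

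Die beiden Symmetrierelationen würde ich zum Schluss ebenfalls direkt aus $R(x,y) = \mathrm{Re}(\langle x,y \rangle)$ ableiten: Die Gleichheit $R(y,x) = R(x,y)$ folgt unmittelbar aus $\mathrm{Re}(\overline{\langle x,y \rangle}) = \mathrm{Re}(\langle x,y \rangle)$. Für die zweite Identität verwende ich die Antilinearität im ersten Argument, d.h. $\langle ix, y \rangle = -i \langle x,y \rangle$, was $R(ix,y) = \mathrm{Re}(-i \langle x,y \rangle) = \mathrm{Im}(\langle x,y \rangle) = -R(x,iy)$ liefert.

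Die einzige Hürde, die ich bei diesem Beweis erwarte, besteht im sorgfältigen Nachhalten der Konventionen zur Antilinearität, also daran zu denken, dass beim Herausziehen eines Faktors $i$ aus dem ersten Argument ein $-i$ entsteht, während dies bei Herausziehen aus dem zweiten Argument nicht der Fall ist. Abgesehen von dieser Aufmerksamkeit bei den Vorzeichen handelt es sich um eine direkte Rechnung ohne tiefere strukturelle Schwierigkeiten.
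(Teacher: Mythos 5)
Dein Beweis ist korrekt, und alle Vorzeichen stimmen mit der Konvention (Antilinearität im ersten Argument) überein. Du gehst dabei einen etwas anderen Weg als die Arbeit: Dort werden alle vier Normquadrate der Polarisationsformel (sowie die Ausdrücke für die beiden Symmetrierelationen) vollständig in Skalarprodukte ausmultipliziert und die Terme direkt gegeneinander gekürzt — eine reine Fleißrechnung über mehrere Zeilen. Du destillierst stattdessen zuerst die eine Identität $R(x,y) = \mathrm{Re}\,\langle x,y\rangle$ heraus und führst dann sowohl die Polarisationsformel ($\mathrm{Re} + i\,\mathrm{Im}$) als auch die beiden Zusatzidentitäten auf elementare Eigenschaften von Real- und Imaginärteil zurück ($\mathrm{Re}\,\overline{z}=\mathrm{Re}\,z$, $\mathrm{Re}(iz)=-\mathrm{Im}\,z$, $\mathrm{Re}(-iz)=\mathrm{Im}\,z$). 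Das ist dieselbe zugrundeliegende Rechnung, aber konzeptionell sauberer organisiert: Die Hilfsidentität macht transparent, \emph{warum} die Formeln gelten, reduziert die Fehleranfälligkeit beim Vorzeichen-Nachhalten erheblich und liefert die Symmetrierelationen praktisch gratis mit, während der direkte Ansatz der Arbeit ohne jeden Zwischenbegriff auskommt und dafür länger ist.
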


\begin{proof}
Es gilt 
\begin{align}
    R(x,y) - iR(x,iy) =& \: \frac{1}{4}(\|x+y\|^2 - \|x-y\|^2) - \frac{i}{4}(\|x+iy\|^2 - \|x-iy\|^2) \nonumber\\ =& \: \frac{1}{4}(\|x+y\|^2 - \|x-y\|^2 - i\|x+iy\|^2 + i\|x-iy\|^2) \nonumber\\ =& \: \frac{1}{4}(\langle x+y,x+y \rangle - \langle x-y,x-y \rangle \nonumber\\ &- \: i\langle x+iy,x+iy \rangle + i\langle x-iy,x-iy \rangle) \nonumber\\ =& \: \frac{1}{4}(\langle x,x \rangle + \langle y,y \rangle + \langle x,y \rangle + \langle y,x \rangle \nonumber\\ &- \: \langle x,x \rangle - \langle y,y \rangle +  \langle x,y \rangle + \langle y,x \rangle \nonumber\\ &- \: i(\langle x,x \rangle + \langle y,y \rangle + \langle x,iy \rangle + \langle iy,x \rangle) \nonumber\\ &+ \: i( \langle x,x \rangle + \langle y,y \rangle - \langle x,iy \rangle - \langle iy,x \rangle)) \nonumber\\ =& \: \frac{1}{4}(\langle x,x \rangle + \langle y,y \rangle + \langle x,y \rangle + \langle y,x \rangle \nonumber\\ &- \: \langle x,x \rangle - \langle y,y \rangle +  \langle x,y \rangle + \langle y,x \rangle \nonumber\\ &- \: i\langle x,x \rangle - i\langle y,y \rangle + \langle x,y \rangle - \langle y,x \rangle \nonumber\\ &+ \: i\langle x,x \rangle + i\langle y,y \rangle +  \langle x,y \rangle - \langle y,x \rangle) \nonumber\\ =& \: \frac{1}{4}(2\langle x,y \rangle + 2\langle y,x \rangle + 2\langle x,y \rangle - 2\langle y,x \rangle) \nonumber\\ =& \: \frac{1}{4}(4\langle x,y \rangle) \nonumber\\ =& \: \langle x,y \rangle
\end{align}
Um die letzten beiden Identitäten zu zeigen, rechnen wir explizit nach:
\begin{align}
    4R(x,y) =& \: \|x+y\|^2 - \|x-y\|^2 \nonumber\\ =& \: \langle x+y,x+y \rangle - \langle x-y,x-y\rangle \nonumber\\ =& \: \langle x,x \rangle + \langle y,y \rangle + \langle x,y \rangle + \langle y,x \rangle \nonumber\\ &- \: \langle x,x \rangle - \langle y,y \rangle + \langle x,y \rangle + \langle y,x \rangle \nonumber\\ =& \: \langle y,y \rangle + \langle x,x \rangle + \langle y,x \rangle + \langle x,y \rangle \nonumber\\ &- \: \langle y,y \rangle - \langle x,x \rangle + \langle y,x \rangle + \langle x,y \rangle \nonumber\\ =& \: \langle y+x,y+x \rangle - \langle y-x,y-x \rangle \nonumber\\ =& \: \|y+x\|^2 - \|y-x\|^2 \nonumber\\ =& \: 4R(y,x)
\end{align}
\begin{align}
    4R(x,iy) =& \: \|x+iy\|^2 - \|x-iy\|^2 \nonumber\\ =& \: \langle x+iy,x+iy \rangle - \langle x-iy,x-iy\rangle \nonumber\\ =& \: \langle x,x \rangle + \langle y,y \rangle + \langle x,iy \rangle + \langle iy,x \rangle \nonumber\\ &- \: \langle x,x \rangle - \langle y,y \rangle + \langle x,iy \rangle + \langle iy,x \rangle \nonumber\\ =& \: \langle x,x \rangle + \langle y,y \rangle + i\langle x,y \rangle - i\langle y,x \rangle \nonumber\\ &- \: \langle x,x \rangle - \langle y,y \rangle + i\langle x,y \rangle - i\langle y,x \rangle \nonumber\\ =& \: -\langle x,x \rangle - \langle y,y \rangle + i\langle x,y \rangle - i\langle y,x \rangle \nonumber\\ &+ \: \langle x,x \rangle + \langle y,y \rangle + i\langle x,y \rangle - i\langle y,x \rangle \nonumber\\ =& \: -\langle ix,ix \rangle - \langle y,y \rangle - \langle ix,y \rangle - \langle y,ix \rangle \nonumber\\ &+ \: \langle ix,ix \rangle + \langle y,y \rangle - \langle ix,y \rangle - \langle y,ix \rangle \nonumber\\ =& \: -\langle ix+y,ix+y \rangle + \langle ix-y,ix-y \rangle \nonumber\\ =& \: -(\langle ix+y,ix+y \rangle - \langle ix-y,ix-y \rangle) \nonumber\\ =& \: -(\|ix+y\|^2 - \|ix-y\|^2) \nonumber\\ =& \: -4R(ix,y)
\end{align}
\end{proof}

\newpage

\addcontentsline{toc}{section}{Literatur}

\bibliographystyle{plain}

\bibliography{bibliography.bib}
\end{document}